\definecolor{blue}{rgb}{0.1,0.2,0.5}
\definecolor{brown}{rgb}{0.6,0.6,0.2}
\theoremstyle{plain}
\newtheorem{theorem}{Theorem}
\newcommand{\newtheoremwithcrefformat}[2]{%
  \newtheorem{#1}[theorem]{#2}%
  \crefformat{#1}{##2\MakeUppercase#1~##1##3}%
  \Crefformat{#1}{##2\MakeUppercase#1~##1##3}%
}
\newcommand{\newseptheoremwithcrefformat}[2]{%
  \newtheorem{#1}{#2}%
  \crefformat{#1}{##2\MakeUppercase#1~##1##3}%
  \Crefformat{#1}{##2\MakeUppercase#1~##1##3}%
}
\theoremstyle{nonumberplain}
\newtheorem{proof}{Proof}
\newtheorem{clproof}{Proof}
\def\cqedsymbol{\ifmmode$\lrcorner$\else{\unskip\nobreak\hfil
\penalty50\hskip1em\null\nobreak\hfil$\lrcorner$
\parfillskip=0pt\finalhyphendemerits=0\endgraf}\fi}
\newcommand{\Oh}{\mathcal{O}}
\newcommand{\Pp}{\mathcal{P}}
\newcommand{\Qq}{\mathcal{Q}}
\newcommand{\Oof}{\mathcal{O}}
\newcommand{\Cc}{\mathscr{C}}
\newcommand{\Dd}{\mathscr{D}}
\newcommand{\Tt}{\mathcal{T}}
\newcommand{\tw}{\mathrm{tw}}
\newcommand{\N}{\mathbb{N}}
\newcommand{\R}{\mathbb{R}}
\renewcommand{\phi}{\varphi}
\renewcommand{\epsilon}{\varepsilon}
\newcommand{\Ll}{\mathcal{L}}
\newcommand{\Ff}{\mathcal{F}}
\newcommand{\Ss}{\mathcal{S}}
\newcommand{\prt}{\mathcal{P}}
\newcommand{\Rr}{\mathcal{R}}
\newcommand{\qrt}{\mathcal{Q}}
\newcounter{aux}
\newcommand{\fout}{f_{\mathrm{out}}}
\newcommand{\enc}{\mathrm{enc}}
\newcommand{\wh}[1]{\widehat{#1}}
\renewcommand{\geq}{\geqslant}
\renewcommand{\leq}{\leqslant}
\renewcommand{\setminus}{-}
\title{Polynomial bounds for centered colorings on \\proper minor-closed graph classes}
\author{\vspace{-3mm}Micha\l~Pilipczuk}
\address{University of Warsaw, Poland\\[1pt]
\texttt{michal.pilipczuk@mimuw.edu.pl}}
\ead{michal.pilipczuk@mimuw.edu.pl}
\author{Sebastian Siebertz}
\address{University of Bremen, Germany\\[1pt]
\texttt{siebertz@uni-bremen.de}}
\ead{siebertz@uni-bremen.de}
\begin{document}

\begin{frontmatter}
\begin{abstract}
  For $p\in \N$, a coloring $\lambda$ of the vertices of a graph $G$
  is {\em{$p$-centered}} if for every connected subgraph~$H$ of $G$,
  either $H$ receives more than $p$ colors under $\lambda$ or there is a color that
  appears exactly once in $H$.
  Centered colorings
  play an important role
  in the theory of sparse graph classes introduced by Ne\v{s}et\v{r}il
  and Ossona de Mendez~\cite{sparsity}, as they structurally
  characterize classes of {\em{bounded expansion}} --- one of the key
  sparsity notions in this theory.  More precisely, a
  class of graphs $\Cc$ has bounded expansion if and only if there is
  a function $f\colon \N\to \N$ such that every graph $G\in \Cc$ for
  every~$p\in\N$ admits a $p$-centered coloring with at most $f(p)$
  colors.  Unfortunately, known proofs of the existence of such
  colorings yield large upper bounds on the function~$f$ governing the
  number of colors needed, even for as simple classes as planar
  graphs.
  In this paper, we prove that every $K_t$-minor-free graph admits a
  $p$-centered coloring with~$\Oh(p^{\,g(t)})$ colors for some
  function $g$. In the special
  case that the graph is embeddable in a fixed surface $\Sigma$ we
  show that it admits a $p$-centered coloring with $\Oh(p^{19})$
  colors, with the degree of the polynomial independent of the genus
  of $\Sigma$.  This provides the first polynomial upper bounds on the
  number of colors needed in $p$-centered colorings of graphs drawn
  from proper minor-closed classes, which answers an open problem
  posed by Dvo\v{r}{\'a}k~\cite{DvorakP16}.
  As an algorithmic application, we use our main result to prove that if $\Cc$ is a fixed proper 
  minor-closed
  class of graphs, then given graphs $H$ and $G$, on $p$ and $n$
  vertices, respectively, where $G\in \Cc$, it can be decided whether
  $H$ is a subgraph of $G$ in time $2^{\Oh(p\log p)}\cdot n^{\Oh(1)}$
  and space $n^{\Oh(1)}$.
\end{abstract}

\begin{textblock}{5}(11.13, 13.45)
\includegraphics[width=38px]{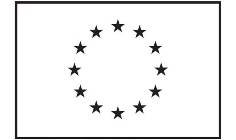}%
\end{textblock}

\end{frontmatter}

\section{Introduction}

\noindent Structural graph theory provides a wealth of tools that can be used in
the design of efficient algorithms for generally hard graph
problems. In particular, the algorithmic properties of classes of
graphs of bounded treewidth, of planar graphs, and more generally, of
classes which exclude a fixed minor have been studied extensively in
the literature. The celebrated structure theory developed
by Robertson and Seymour for graphs with
excluded minors had an immense influence on the design of efficient
algorithms. Ne\v{s}et\v{r}il and Ossona de Mendez introduced the even
more general concepts of \emph{bounded
  expansion}~\cite{nevsetvril2008grad} and \emph{nowhere
  denseness}~\cite{nevsetvril2011nowhere}, which offer abstract and
robust notions of sparseness in graphs, and which also lead to a rich
algorithmic theory.
Bounded expansion and nowhere dense graph classes were originally
defined by restricting the edge densities of bounded depth minors that
may occur in these classes; in particular, every class that excludes a fixed topological minor has bounded expansion.
In this work we are going to study \emph{$p$-centered colorings}, which may be used to give a structural characterization of bounded expansion and
nowhere dense classes, and which are particularly useful in the
algorithmic context.

\begin{definition}[\cite{nevsetvril2006tree}]
  Let $G$ be a graph, $p\in \N$, and let $C$ be a set of colors. A coloring
  $\lambda\colon V(G)\rightarrow C$ of the vertices of $G$ is
  called {\em{$p$-centered}} if for every connected subgraph~$H$ of
  $G$, either $H$ receives more than~$p$ colors or there is a color
  that appears exactly once in $H$ under $\lambda$.
\end{definition}

\begin{definition}
  For a function $f\colon \N\to \N$, we say that a graph class
  $\Cc$ admits {\em{$p$-centered colorings}} with $f(p)$ colors if for
  every $p\in \N$, every graph $G\in \Cc$ admits a $p$-centered
  coloring using at most $f(p)$ colors. If for the class $\Cc$ we can
  choose $f$ to be a polynomial, say of degree $d$, then we say
  that~$\Cc$ admits {\em{polynomial centered colorings}} of degree
  $d$.
\end{definition}

Ne\v{s}et\v{r}il and Ossona de
Mendez~\cite{nevsetvril2008grad} proved that classes of bounded expansion can be characterized by admitting centered
colorings with a bounded number of colors, as explained below.
\begin{theorem}[\cite{nevsetvril2008grad}]\label{thm:equiv-col-exp}
  A class $\Cc$ of graphs has bounded expansion if and only if there
  exists a function $f\colon \N\to \N$ such $\Cc$ admits $p$-centered
  colorings with $f(p)$ colors.
\end{theorem}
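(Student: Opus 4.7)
My plan is to prove the two implications separately, routing through \emph{weak coloring numbers} and the structural lens of \emph{tree-depth}.

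\emph{($\Rightarrow$).} For the implication ``bounded expansion $\Rightarrow$ centered colorings,'' I would invoke the equivalent characterization of bounded expansion by bounded weak coloring numbers: $\Cc$ has bounded expansion if and only if for every $p \in \N$ there is a constant $k_p$ such that every $G \in \Cc$ admits an ordering $\sigma$ with $|\WReach_p[v,\sigma]| \le k_p$ for all $v \in V(G)$. Given such an ordering for parameter $p-1$, I would color each vertex $v$ by a combinatorial ``type'' encoding enough information about $\WReach_{p-1}[v,\sigma]$ so that, in every connected subgraph $H$, either more than $p$ types occur on $H$, or the $\sigma$-minimum vertex $u_H$ of $H$---which weakly reaches every other vertex of $H$ via a monotone short path---carries a type shared with no one else in $H$. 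Bounding the total number of possible types in terms of $k_{p-1}$ and $p$ then yields the required $f$.

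\emph{($\Leftarrow$).} The plan is to bound the edge density of bounded-depth shallow minors via the coloring. The cornerstone is the following claim, which I would prove by straightforward induction on $|S|$: \emph{if $\lambda$ is a $p$-centered coloring of $G$ and $S$ is a set of at most $p$ colors, then $G[\lambda^{-1}(S)]$ has tree-depth at most $|S|$.} Indeed, in any connected subgraph of $G[\lambda^{-1}(S)]$ some color appears uniquely (by $p$-centeredness), and removing that vertex strictly reduces the color count in each remaining component; iterating produces an elimination forest of depth $|S|$. As a special case, any two colors together induce a star forest in $G$, with at most $|V(G)|-1$ edges.

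To bound the density of a depth-$r$ shallow minor $H$ of $G$, I would fix $p = 2r+1$, and for each edge $uv \in E(H)$ extract a witness path $P_{uv}$ of length at most $p$ in $G$ joining the centers of the two branch sets through the corresponding witness edge. By $p$-centeredness, $P_{uv}$ contains a uniquely colored vertex, which I would use to charge the edge $uv$ to an ordered pair of colors of $\lambda$; the star-forest bound above controls the number of edges of $H$ that can be charged to each of the $f(p)^2$ color pairs, ultimately yielding $|E(H)|/|V(H)| = O(f(p)^2)$---the density bound witnessing bounded expansion. The main obstacle is the forward direction: pinning down precisely what information about $\WReach_{p-1}[v,\sigma]$ to include in each color so that the palette stays bounded while the centered property is genuinely enforced requires technical care and is the heart of Ne\v{s}et\v{r}il and Ossona de Mendez's original construction. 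The backward direction reduces cleanly to a double-counting argument once the tree-depth claim is in place.
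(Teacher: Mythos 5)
First, a point of reference: the paper does not prove this statement at all --- \cref{thm:equiv-col-exp} is imported verbatim from Ne\v{s}et\v{r}il and Ossona de Mendez~\cite{nevsetvril2008grad}, so there is no in-paper argument to compare against. Judged on its own, your proposal has genuine gaps in both directions. For ($\Rightarrow$), you have essentially restated the problem rather than solved it: you yourself concede that ``pinning down precisely what information about $\WReach_{p-1}[v,\sigma]$ to include in each color \ldots\ is the heart of Ne\v{s}et\v{r}il and Ossona de Mendez's original construction.'' Worse, the radius $p-1$ is the wrong parameter for the weak-coloring route. The claim you lean on --- that the $\sigma$-minimum vertex of a connected subgraph $H$ receiving at most $p$ colors weakly reaches every other vertex of $H$ by a ``short'' monotone path --- is only true at radius $2^{p-2}$, not $p-1$; the radius doubles at each step of the induction, and this exponential blow-up is exactly why Zhu's theorem bounds the number of colors by $\wcol_{2^{p-2}}$ (as the paper's introduction notes). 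With radius only $p-1$ the minimum vertex of $H$ need not see the rest of $H$ at all, and nothing forces its type to be unique in $H$.

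For ($\Leftarrow$), the tree-depth claim is correct and standard (it is \cref{prop:centered-lowtd} in the paper), but the charging scheme does not work as stated. A witness path $P_{uv}$ between the centers of two depth-$r$ branch sets can carry up to $2r+2$ distinct colors, and the uniquely colored vertex on it does not canonically determine an edge of any two-color star forest: two different edges of the shallow minor $H$ charged to the same ordered pair of colors give you two vertices of $G$ with the same color, not two distinct edges of $G[\lambda^{-1}(\{c,c'\})]$, so the bound ``at most $|V(G)|-1$ edges per pair'' has no bite. The standard repair is to charge each edge $uv\in E(H)$ to the \emph{set} $S_{uv}$ of at most $2r+2$ colors appearing on $P_{uv}$: then $H$ restricted to the edges charged to a fixed $S$ is a depth-$r$ minor of $G[\lambda^{-1}(S)]$, which has treedepth at most $2r+2$ and hence all its minors have at most $(2r+1)$ times their vertex count many edges. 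Summing over the $\binom{f(2r+2)}{2r+2}$ color sets bounds $\grad_r$; the resulting density is of order $f(2r+2)^{2r+2}$, not the $\Oh(f(p)^2)$ you claim, which only the case $r=0$ supports. With that correction the backward direction is a clean double-counting argument, but as written it is not yet a proof.
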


A similar characterization is known for nowhere dense classes as well, 
but this notion will not be directly relevant to the purpose of this work.
Note that \mbox{$1$-centered} colorings are exactly proper colorings of a
graph, thus centered colorings are a generalization of proper
colorings.  On the other hand, every $p$-centered coloring of a
graph~$G$ is also a {\em{treedepth-$p$ coloring}} of $G$, in the sense
that the union of every $i$ color classes, $i\leq p$, induces a
subgraph of~$G$ of treedepth at most $i$; see~\cite{nevsetvril2006tree}. 
Here, the {\em{treedepth}} of a graph is the minimum height of a rooted forest whose ancestor-descendant closure contains the graph;
this parameter is never smaller than the treewidth. 
Hence, a $p$-centered coloring of a graph $G$ can be understood as a decomposition of $V(G)$
into disjoint pieces, so that any subgraph induced by at most $p$
pieces is strongly structured --- it has treedepth at most $p$, so also treewidth at most $p$.

The inspiration of Ne\v{s}et\v{r}il and Ossona de Mendez for introducing low treedepth colorings in~\cite{nevsetvril2008grad} 
was a long line of research on {\em{low treewidth colorings}} in proper minor-closed classes (i.e., minor-closed classes excluding at least one minor).
It is a standard observation, underlying the classic Baker's approach, that if in a connected planar graph $G$ we fix a vertex $u$ and we color all the vertices 
according to the residue of their distance from $u$ modulo $p+1$, then the obtained coloring with $p+1$ colors has the following property: the union of any $p$ color classes induces a graph of treewidth~$\Oh(p)$.
As proved by Demaine et al.~\cite{demaine2005algorithmic} and by DeVos et al.~\cite{devos2004excluding}, such colorings with $p+1$ colors can be found for any proper minor-closed class of graphs.
Decompositions of this kind, together with similar statements for colorings of edges, 
are central in the design of approximation and parameterized algorithms in proper minor-closed graph classes,
see e.g.~\cite{demaine2005algorithmic,DemaineHK11,DemaineHM10,devos2004excluding} and the discussion therein.

Thus, low treedepth colorings offer a somewhat different view compared to low treewidth colorings: 
we obtain a stronger structure --- bounded treedepth instead of bounded treewidth --- at the cost of having significantly more colors --- some function of~$p$ instead of just $p+1$.
While admittedly not that useful for approximation algorithms, low treedepth colorings are a central 
algorithmic tool in the design of parameterized algorithms in classes of bounded expansion.
For instance, as observed in~\cite{nevsetvril2008grad}, using low treedepth colorings one can give a simple fpt algorithm for testing subgraph containment on classes of bounded expansion:
to check whether a graph $H$ on $p$ vertices is a subgraph of a large graph $G$, we compute a treedepth-$p$ coloring of $G$, say with $f(p)$ colors, and for every $p$-tuple of color classes
we use dynamic programming to verify whether $H$ is a subgraph of the graph induced by those color classes.
A much more involved generalization of this idea led to an fpt algorithm for testing any first-order definable property in any class of bounded expansion,
first given by Dvo\v{r}\'ak, Kr\'al' and Thomas~\cite{dvovrak2013testing} using different tools. 
We remark that proofs of this result using low treedepth colorings~\cite{lsd,grohe2011methods,circuit-mc} crucially use the fact that 
any $p$-tuple of color classes induce a graph of bounded treedepth, and not just bounded treewidth.

The running times of algorithms based on $p$-centered colorings strongly depend on
the number $f(p)$ of colors used. 
Unfortunately, the known approaches to constructing centered colorings produce a very large number of colors, typically exponential in $p$.
As shown in recent experimental works~\cite{abs-1712-06690}, this is actually a major bottleneck for applicability of these techniques in practice.

The original proof of \cref{thm:equiv-col-exp} in~\cite{nevsetvril2008grad} gives a bound for $f(p)$ that is at least doubly
exponential in $p$ for general classes of bounded expansion.
Somewhat better bounds for proper minor-closed classes can be established via a connection to yet another
family of parameters, namely the \emph{weak coloring numbers}, introduced by
Kierstead and Yang~\cite{kierstead2003orderings}. We refrain from
giving formal definitions, as they are not directly relevant to our purposes here, but intuitively the weak $p$-coloring number of a graph $G$
measures reachability properties up to distance~$p$ in a linear vertex ordering of the graph $G$.  It was shown by
Zhu~\cite{zhu2009coloring} that the number of colors needed for a
$p$-centered coloring of a graph is bounded by its weak $2^{p-2}$-coloring
number. The weak $r$-coloring number of a graph $G$ is bounded by~$\Oof(r^3)$ if $G$ is planar and by $\Oof(r^{t-1})$ if~$G$ excludes~$K_t$ as a minor~\cite{van2017generalised}. Combining
the two results gives a bound of~$\Oof(2^{3p})$ colors needed for a
$p$-centered coloring on planar graphs and $\Oof(2^{(t-1)p})$ on graphs
which exclude~$K_t$ as a minor. 
To the best of the authors' knowledge, so far no bounds polynomial in $p$ were known even for the case of planar graphs.

Motivated by this state of the art,
Dvo\v{r}\'ak~\cite{DvorakP16} asked whether one could
obtain a polynomial bound on the number of colors needed for
$p$-centered colorings on proper minor-closed  graph classes.

\vspace{-4pt}
\paragraph*{Our results}
We answer the question of Dvo\v{r}\'ak in affirmative by proving the following theorems.

\begin{theorem}\label{thm:main-minor}
  Every proper minor-closed class $\Cc$ admits
  polynomial time computable polynomial centered colorings, of some degree depending on $\Cc$.
\end{theorem}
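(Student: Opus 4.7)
The plan is to reduce Theorem~\ref{thm:main-minor} to two ingredients: the polynomial centered coloring bound for graphs embeddable in a fixed surface (the $\Oh(p^{19})$ result announced in the abstract), and the Robertson--Seymour structure theorem for proper minor-closed classes. Fix $t \geq 4$ such that every graph in $\Cc$ excludes $K_t$ as a minor, and let $G \in \Cc$. I would apply the structure theorem to obtain a tree decomposition $(T,\beta)$ of $G$ with adhesions of size at most $k$, each of whose torsos is $k$-almost-embeddable in a surface of Euler genus at most $g$, where $k$ and $g$ depend only on $t$. Recall that a $k$-almost-embeddable graph is the union of an embedded graph $G_0$, at most $k$ apex vertices, and at most $k$ vortices of depth at most $k$ attached to faces of $G_0$.

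For each torso I would construct a $p$-centered coloring with polynomially many colors by taking a product of three colorings: apply the surface result to $G_0$ (yielding $\Oh(p^{19})$ colors); assign each apex vertex its own private color (only a constant number of extra colors, absorbed into the multiplicative constant); and color each vortex---which has bounded treewidth---using the polynomial-size centered coloring available for bounded-treewidth graphs, costing $\Oh(p^{k})$ colors. The product yields a $p$-centered coloring of each torso with $\Oh(p^{c})$ colors for some constant $c = c(t)$.

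The final step, and the one I expect to be hardest, is to glue the per-torso colorings into a global $p$-centered coloring of $G$. The structural fact to exploit is that adhesions have bounded size: any connected subgraph $H \subseteq G$ either lies essentially inside a single torso---in which case that torso's own coloring suffices---or it crosses at least one adhesion, where the coloring can be arranged so as to uniquely identify some crossing vertex within $H$. I would achieve this by adding an extra product component encoding the tree-structured position of each bag along a root-to-leaf path in $T$, together with a separator coloring of adhesions, tuned so that whenever $H$ receives at most $p$ colors in total, some adhesion vertex along the route of $H$ appears uniquely.

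The critical constraint is to avoid routing through weak coloring numbers and Zhu's theorem, as that detour would reintroduce precisely the exponential blowup the paper aims to eliminate; the gluing must be a direct, centered-coloring-preserving construction. A promising concrete strategy is to fix an elimination-style ordering of $V(G)$ that concatenates centered-coloring witnesses within torsos in a traversal order of $T$, so that any connected $H$ with at most $p$ colors is forced to contain a vertex whose color is unique in $H$, either through its torso's internal guarantee or through a crossed adhesion. The final polynomial degree $c(t)$ will accumulate contributions from the constant $19$ of the surface case, from $k$ in the vortex treatment, and from any additional overhead introduced by the gluing step.
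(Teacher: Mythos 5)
Your overall architecture matches the paper's: reduce to the Robertson--Seymour tree decomposition with bounded adhesion over nearly embeddable torsos, color each torso polynomially, and glue along the decomposition. However, there are two genuine gaps, one in the torso step and one in the gluing step.

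First, the per-torso construction. Taking a product of a centered coloring of the embedded part $G_0$ with centered colorings of the vortices does not yield a centered coloring of the torso. A connected subgraph $H$ of the torso can alternate between $G_0$-edges and vortex-edges, so that $H\cap G_0$ and each $H\cap G_i$ are disconnected; the centered-coloring guarantee of each factor applies only to subgraphs connected \emph{within that factor}, and distinct components of $H\cap G_0$ may reuse each other's ``unique'' colors. The paper avoids this by never coloring $G_0$ and the vortices separately: it builds a single partition $\prt$ of the whole apex-free torso into short paths (geodesics of the auxiliary graph $\wh{G}$ for the embedded part, singletons for vortex vertices) and shows that the \emph{quotient} $G'/\prt$ has treewidth $\Oh(gq\cdot w^q)$, using Grohe's gluing lemma for path decompositions of vortices; then a single application of the bounded-treewidth case plus \cref{lem:prt-lift} colors the torso. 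A further wrinkle you would hit: the radius-reduction via layering (\cref{lem:bounded-radius}) needs minor-closedness, which nearly embeddable graphs lack, so the paper reproves it with vortices kept intact inside the layer intervals.

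Second, the gluing step is the heart of the proof and your proposal only gestures at it. The paper's mechanism (\cref{lem:tree-decomps}) is to define an acyclic \emph{skeleton} digraph $\Sk$ with arcs from each margin vertex to its adhesion set, prove by induction that at most $\binom{p+k}{k}$ vertices are reachable from any vertex by directed paths of length at most $p$, and greedily color the $p$-th transitive closure with $\binom{p+k}{k}$ colors; the product of this coloring with the torso colorings is then shown to be $p$-centered via a path-lifting lemma from $G$ into $\Sk$. Note that this \emph{is} the combinatorial core of the weak-coloring-number bound for bounded treewidth --- the exponential blowup you rightly want to avoid comes only from Zhu's reduction requiring radius $2^{p-2}$, not from the reachability count itself, which is polynomial in $p$ for fixed adhesion $k$. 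Your ``elimination-style ordering'' intuition points in this direction, but without the reachability bound and the case analysis (subgraph confined to one margin versus subgraph crossing adhesions, where a long skeleton path forces more than $p$ colors) the argument is not yet a proof.
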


 \newcounter{genus}
 \setcounter{genus}{\value{theorem}}

\begin{theorem}\label{thm:main-genus}
For every surface $\Sigma$, the class of graphs embeddable in $\Sigma$ admits polynomial time computable polynomial centered colorings of degree $19$.
More precisely, if the Euler genus of $\Sigma$ is $g$, then the obtained $p$-centered coloring uses $\Oh(g^2p^3+p^{19})$ colors.
\end{theorem}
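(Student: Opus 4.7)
The plan is to reduce \Cref{thm:main-genus} to a planar version of the statement and then handle the genus via a classical surface-cutting argument. The two ingredients I would aim for are: a dedicated planar bound of $\Oh(p^{19})$ colors (the main technical workhorse), and a topological reduction from Euler genus $g$ to planar that introduces an additional $\Oh(g^2 p^3)$ colors.

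For the planar ingredient, I would prove separately that every planar graph admits a polynomial-time computable $p$-centered coloring with $\Oh(p^{19})$ colors. The natural strategy is to combine a Baker-style BFS-layer decomposition with a refined color-product construction handling subgraphs of bounded treewidth; the composition of several polynomial factors in $p$ yields a polynomial of degree close to $19$. The existing bound of $\Oh(p^3)$ on the weak coloring numbers of planar graphs (cited in the introduction) should enter the argument, and the exact degree is the outcome of composing several polynomial steps, each contributing a factor of $p^{\Oh(1)}$.

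For the genus reduction, given $G$ embedded in $\Sigma$ of Euler genus $g$, I would run BFS from an arbitrary root $r$ to obtain a spanning tree $T$, and then use the embedding to select $\Oh(g)$ non-tree edges whose fundamental cycles in $T$ generate the fundamental group of $\Sigma$. Let $S\subseteq V(G)$ consist of all vertices on these $\Oh(g)$ fundamental cycles; then $S$ forms a subtree of $T$ with $\Oh(g)$ leaves, and cutting $\Sigma$ along $S$ opens the surface into a disk, so that $G-S$ is planar. I would color $G-S$ with $\Oh(p^{19})$ colors using the planar result on a dedicated palette, and then design a second palette of $\Oh(g^2p^3)$ colors for $S$: viewing $S$ as $\Oh(g)$ edge-disjoint tree paths and using the fact that a path admits a $p$-centered coloring with $\Oh(\log p)$ colors, one takes a product across pairs of these paths together with a refinement indexed by BFS layer, which yields the claimed count.

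The main obstacle will be verifying that the combination of the two disjoint palettes is $p$-centered on all of $G$, not merely on $G[S]$ and $G-S$ separately: a connected subgraph $H$ of $G$ that traverses $S$ may have $H\cap(G-S)$ disconnected, so the planar coloring alone need not supply a uniquely colored vertex. The plan is to strengthen the $S$-coloring so that for every connected $H$ using at most $p$ colors in total, either the $S$-portion of $H$ already contains a uniquely colored vertex, or $H$ is essentially contained in $G-S$ and the planar coloring provides one. The $g^2$ factor indexes pairs of the $\Oh(g)$ paths that may be bridged through $G-S$, while the $p^3$ factor absorbs the BFS-layer refinement needed for correct interaction across the cut. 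Summing the two palettes yields $\Oh(g^2p^3+p^{19})$, and polynomial-time computability follows from efficient constructibility of each ingredient: the BFS tree, the fundamental cycle basis, the planar coloring subroutine, and the tree coloring.
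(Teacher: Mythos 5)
Your high-level plan matches the paper's: reduce to the planar case and pay an extra $\Oh(g)$-geodesics' worth of colors for a cut-graph obtained from a tree--cotree decomposition. But the step you yourself flag as "the main obstacle" --- making the union of the two palettes $p$-centered on all of $G$ --- is left as a plan rather than a proof, and the mechanism you sketch for it does not work. If a connected subgraph $H$ meets $S$ and $H-S$ is disconnected, the planar coloring gives no uniquely colored vertex in $H-S$ (uniqueness is only guaranteed within each component of $H-S$ separately), so the burden falls entirely on the $S$-palette: every set of the form $H\cap S$, for $H$ connected in $G$ with at most $p$ colors, must contain a uniquely colored vertex. Such sets are essentially arbitrary subsets of $S$ whose elements are linked through $G-S$; they carry no path structure inside $S$. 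A product of $p$-centered colorings of the individual tree paths (and of pairs of them) only controls connected subpaths, not these scattered subsets, so it cannot supply the needed uniqueness. Two smaller errors compound this: long paths require $p+1$ colors for a $p$-centered coloring, not $\Oh(\log p)$ (the $\Oh(\log \ell)$ bound is the treedepth of a length-$\ell$ path and applies only when $\ell$ is bounded); and since you cut before any radius reduction, your fundamental cycles --- two BFS root-to-vertex paths plus an edge --- can be arbitrarily long, so $|S|$ is not bounded in terms of $g$ and $p$ at all.

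The fix, which is what the paper does, is to reverse the order of operations: first apply the layering reduction (\cref{lem:bounded-radius}) so that one only needs to color connected graphs of radius at most $2p$, at the cost of the final count being $(p+1)f(p)^2$. In such a graph every geodesic has at most $4p+1$ vertices, so the cut-graph $K$ of \cref{lem:genus-cutting} (a union of at most $4g$ geodesics \emph{of $G$}) has only $\Oh(gp)$ vertices, and one can simply give each of them a fresh individual color. This makes the interaction issue disappear: any connected $H$ meeting $K$ automatically has a uniquely colored vertex, and any $H$ disjoint from $K$ is a connected subgraph of $G-V(K)$. The $\Oh(g^2p^3)$ term is then just $(p+1)\cdot\Oh(gp)^2$ from the squaring in the layering lemma --- not a count of pairs of paths bridged through $G-S$. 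One further caveat if you later want the stronger partition statement rather than only the coloring: the paper warns that one cannot get geodesics of $G$ by applying the planar partition theorem to $G-V(K)$ (they would only be geodesics of $G-V(K)$), which is why it re-runs the Sperner-type induction on the polygonal schema (\cref{thm:genus-partition}); your black-box use of the planar coloring theorem on $G-V(K)$ avoids that particular pitfall but, as explained, does not by itself yield a coloring of $G$. Finally, the $\Oh(r^3)$ weak coloring number bound you invoke for the planar ingredient plays no role in any polynomial bound --- routing through weak coloring numbers via Zhu's theorem inherently gives bounds exponential in $p$.
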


Observe that in case of surface-embedded graphs we obtain a degree independent of the genus, however for general proper minor-closed classes the degree depends on the class.

\vspace{-4pt}
\paragraph*{Our techniques} 
Our proof proceeds by establishing the result for larger and larger graph classes.

We first focus on graphs of bounded treewidth, where we prove that the class of graphs of treewidth at most~$k$ admits polynomial centered colorings of degree~$k$, i.e. with $\binom{p+k}{k}=\Oh(p^k)$ colors for a $p$-centered coloring.
The key to this result is the combinatorial core of the proof of Grohe et al.~\cite{GroheKRSS15} that every graph of treewidth at most $k$ has weak $p$-coloring number $\binom{p+k}{k}$.

We next move to the case of planar graphs, where for every planar graph~$G$ we construct a $p$-centered coloring of $G$ using $\Oh(p^{19})$ colors.
The idea is to first prove a structure theorem for planar graphs, which is of independent interest.
To state it, we first need a few definitions

A path $P$ in a graph $G$ is called a {\em{geodesic}} if it is a shortest path between its endpoints.
  A {\em{partition}} of a graph $G$ is any family $\prt$ of induced
  subgraphs of $G$ such that every vertex of $G$ is in exactly one of
  subgraphs from $\prt$.  For a partition $\prt$ of $G$, we define the
  {\em{quotient graph}} $G/\prt$ as follows: it has $\prt$ as the
  vertex set and two parts $X,Y\in \prt$ are adjacent in $G/\prt$ if
  and only if there exist $x\in X$ and $y\in Y$ that are adjacent in
  $G$.
The structure theorem then can be stated as follows.

 \newcounter{planar}
 \setcounter{planar}{\value{theorem}}
\begin{theorem}
\label{thm:planar-partition}
  For every planar graph $G$ there exists a partition $\prt$ of $G$
  such that~$\prt$ is a family of geodesics in~$G$ and $G/\prt$ has treewidth at most $8$.
  Moreover, such a partition $\prt$ of $G$ together with a tree
  decomposition of $G/\prt$ of width at most $8$ can be computed in
  time $\Oh(n^2)$.
\end{theorem}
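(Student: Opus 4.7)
The plan is to follow what is essentially the planar product structure theorem of Dujmović, Joret, Micek, Morin, Ueckerdt, and Wood: the quotient of a planar graph by a partition into BFS-geodesics has bounded treewidth. To set things up, I would first assume that $G$ is edge-maximal planar (a triangulation). This is harmless since adding edges inside faces does not change the distances in $G$, so any geodesic partition of the triangulation is also a geodesic partition of $G$, and the quotient treewidth can only decrease when the extra edges are removed.

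Next, fix an arbitrary root $r \in V(G)$, run BFS from $r$ to produce a BFS tree $T$ with layers $L_0, L_1, L_2, \ldots$, and partition $V(T)$ into maximal vertical paths, for instance by iteratively peeling off a single root-to-leaf path and recursing on the resulting subtrees. Every part $P$ obtained this way is a sub-path of a root-to-leaf path of $T$, and therefore a geodesic in $G$ between its topmost and bottommost vertices. Take $\prt$ to be this collection; producing it given the BFS tree and the plane embedding is linear, so the overall computation fits comfortably in $\Oh(n^2)$, with room to spare for subsequent tree-decomposition construction.

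The heart of the argument is to prove $\tw(G/\prt)\le 8$. Since each part is connected, the quotient $H := G/\prt$ is a minor of $G$ and hence planar, but planarity alone is not enough. I would build an explicit tree decomposition of $H$ from the plane embedding of the triangulated $G$ together with $T$, whose tree is the dual of the BFS tree in the plane and whose bags are of the form ``the set of vertical paths of $\prt$ that touch a given small region between two consecutive BFS layers.'' The central combinatorial input is that any non-tree edge $uv$ of the triangulation closes a fundamental cycle with $T$ that meets only $\Oh(1)$ vertical paths, because the two tree-paths from $u,v$ to their lowest common ancestor each pierce at most two vertical paths per layer. A careful bookkeeping of how these paths overlap across consecutive layers, using the triangulation property to rule out configurations that would inflate the bag size, yields the constant $8$.

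I expect the main obstacle to be precisely that final constant: showing a constant bound follows readily from planarity plus BFS-layering, but nailing down $8$ requires a case analysis tracking exactly how many vertical paths can simultaneously appear in a single bag — the parent/child paths connecting two consecutive layers, the paths hit by the non-tree edges bounding the region, and the paths shared between neighboring bags. This is the delicate combinatorial step, and the argument must be set up so that parts straddling layer boundaries are not double-counted when the decomposition tree branches. Once the decomposition is described, verifying the treewidth bound and the $\Oh(n^2)$ running time is routine.
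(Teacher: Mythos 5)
Your proposal has a genuine gap at its core: you fix a BFS tree $T$ and then partition it into maximal vertical paths \emph{arbitrarily} (``iteratively peeling off a single root-to-leaf path''), using only the tree structure and not the embedding. It is true that every vertical path of a BFS tree is a geodesic in $G$, but it is \emph{not} true that every partition of a BFS tree into vertical paths yields a quotient of bounded treewidth. For a counterexample, take the $N\times N$ grid rooted at a corner, so that BFS layers are the anti-diagonals and vertical paths are monotone staircases. Partitioning it so that each $(N/k)\times(N/k)$ block contributes an L-shaped staircase running along its top row and then down its right column produces a quotient containing a $k\times k$ grid, hence of treewidth $\Omega(k)$, which is unbounded. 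This is precisely why the known proofs (both the paper's and the product-structure theorem of Dujmovi\'c et al.\ that you invoke) must \emph{choose} the paths recursively, guided by the planar embedding: one finds, inside the current region bounded by at most six of the already-chosen geodesics, a ``tripod'' face (via a Sperner-type argument on a triangulation) and takes the three shortest paths from its corners to the boundary as the next parts; the constant $9$ in the bag size comes from $6+3$. Relatedly, your ``central combinatorial input'' --- that a fundamental cycle of a non-tree edge meets $\Oh(1)$ vertical paths --- is false: the two tree-paths to the lowest common ancestor meet at most one part per layer, but possibly a \emph{different} part in every layer, so a fundamental cycle can meet $\Theta(\mathrm{depth})$ parts.

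Two further points. First, your reduction to triangulations is not sound as stated: adding chords inside faces \emph{can} shorten distances (triangulate a single long facial cycle), so a BFS tree of the triangulation need not consist of edges of $G$, and its vertical paths need not be paths, let alone geodesics, in $G$. The paper avoids this by requiring all geodesics to live in $G$ throughout and using the triangulation edges only for the face structure of the recursion (a ``tight'' boundary cycle is allowed at most six non-$G$ edges). Second, for comparison with the paper: the paper does not use a global BFS tree at all; in each recursive region it reassigns every interior vertex to its nearest boundary vertex (with careful tie-breaking among the three boundary arcs), applies Sperner's Lemma to find a trichromatic triangle, and uses the three nearest-point geodesics as separators, putting the at most $6+3$ paths bounding each region into one bag. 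If you want to pursue your route, you must import this recursive, embedding-aware choice of vertical paths; the partition cannot be read off the BFS tree alone.
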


The idea of using separators that consist of a constant number of geodesics is not new.
A classic result of Lipton and Tarjan~\cite{lipton1979separator} 
states that in every $n$-vertex planar graph
one can find two geodesics whose removal leaves components
of size at most~$2n/3$. By recursively applying this result, one
obtains a decomposition of logarithmic depth along geodesic separators, which has found 
many algorithmic applications, see e.g.\ the notion of $k$-path separable graphs of Abraham and Gavoille~\cite{abraham2006object}. 
However, there is a subtle difference between this decomposition and the decomposition given by \cref{thm:planar-partition}:
in \cref{thm:planar-partition} all the paths are geodesics in the {\em{whole}} graph $G$, 
while in the decomposition obtained as above the paths are ordered as $P_1,\ldots, P_\ell$ so that each~$P_i$ is only geodesic in the graph
$G-\bigcup_{j<i}P_j$. This difference turns out to be crucial in our proof.

Let us come back to the issue of finding a $p$-centered coloring of a planar graph $G$.
By applying the layering technique, we may assume that that $G$ has radius bounded by~$2p$.
Hence, every geodesic in the partition $\prt$ given by \cref{thm:planar-partition} has at most $4p+1$ vertices.
By the already established case of graphs of bounded treewidth, the quotient graph $G/\prt$ admits a $p$-centered coloring $\kappa$ with $\Oh(p^8)$ colors (this is later blown up to $\Oh(p^{19})$ by layering).
We can now assign every vertex a color consisting of the color under $\kappa$ of the geodesic that contains it, and its distance
from a fixed end of the geodesic. This resolves the planar case.

We next lift the result to graphs embeddable in a fixed surface.
Here, the idea is to cut the surface along a short cut-graph that can be decomposed into $\Oh(g)$ geodesics; 
a construction of such a cut-graph was given by Erickson and Har-Peled~\cite{EricksonH04}.
Then the case of embeddable graphs is generalized to nearly embeddable graphs using a technical construction inspired by the work of Grohe~\cite{grohe2003local}.
Finally, we lift the case of nearly embeddable graphs to graphs from a fixed proper minor-closed class using the structure theorem of Robertson and Seymour~\cite{robertson2003graph}.
Here, we observe that the (already proved) bounded treewidth case can be lifted to a proof that $p$-centered colorings can be conveniently combined along tree decompositions with small adhesions, that is,
where every two adjacent bags intersect only at a constant number of vertices.

\vspace{-4pt}
\paragraph*{Applications}
Finally, we show a concrete algorithmic application of our main result.
There is one aspect where having a treedepth decomposition of small height is more useful than having a tree decomposition of small width, namely space complexity.
Dynamic programming algorithms on tree decompositions typically use space exponential in the width of the decomposition,
and there are complexity-theorerical reasons to believe that without significant loss on time complexity, this cannot be avoided.
On the other hand, on treedepth decompositions one can design algorithms with polynomial space usage.
We invite the reader to the work of Pilipczuk and Wrochna~\cite{PilipczukW18} for an in-depth study of this phenomenon.
This feature of treedepth can be used to prove the following.

 \newcounter{subiso}
 \setcounter{subiso}{\value{theorem}}

\begin{theorem}\label{thm:si}
Let $\Cc$ be a proper minor-closed class. Then given graphs $H$ 
and~$G$, on $p$ and $n$ vertices, respectively, where $G\in \Cc$, 
it can be decided whether $H$ is a subgraph of~$G$ in time $2^{\Oh(p\log p)}\cdot n^{\Oh(1)}$ and space $n^{\Oh(1)}$.
\end{theorem}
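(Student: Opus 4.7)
The plan is to reduce the problem to subgraph isomorphism on host graphs of small treedepth, and then invoke the polynomial-space algorithm of Pilipczuk and Wrochna~\cite{PilipczukW18}. First, using \cref{thm:main-minor}, I would compute in polynomial time a $p$-centered coloring $\lambda\colon V(G)\to C$ of $G$ with $|C|\leq c\cdot p^d$ colors, where $c$ and $d$ are constants depending only on $\Cc$. Note that computing and storing $\lambda$ takes only polynomial space.

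Next I would use the standard observation that a $p$-centered coloring is in particular a treedepth-$p$ coloring: for any subset $S\subseteq C$ with $|S|\leq p$, the subgraph $G[\lambda^{-1}(S)]$ has treedepth at most $|S|\leq p$. Since any subgraph $H'$ of $G$ isomorphic to $H$ spans at most $p$ color classes (because $|V(H)|=p$), it follows that $H$ is a subgraph of $G$ if and only if there exists some $S\subseteq C$ with $|S|\leq p$ such that $H$ is a subgraph of $G[\lambda^{-1}(S)]$. So the algorithm iterates over all such $S$ and for each one checks subgraph containment in a graph of treedepth at most $p$; the total number of iterations is
\[
\binom{|C|}{p} \leq |C|^p \leq (c\cdot p^d)^p = 2^{\Oh(p\log p)}.
\]

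For each chosen $S$, I would first compute an elimination forest of $G[\lambda^{-1}(S)]$ of height at most $p$; this can be done in polynomial time and space from the coloring (by recursively eliminating any color class that appears in a component). Then I would invoke the algorithm of Pilipczuk and Wrochna~\cite{PilipczukW18}, which given an $n$-vertex host graph of treedepth $p$ and a $p$-vertex pattern graph, decides subgraph isomorphism in time $2^{\Oh(p\log p)}\cdot n^{\Oh(1)}$ and space $n^{\Oh(1)}$. Summing over all iterations and observing that iterating over subsets $S$ can be done with polynomial additional space, the total running time is $2^{\Oh(p\log p)}\cdot n^{\Oh(1)}$ and the total space usage is $n^{\Oh(1)}$.

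The main conceptual obstacle is the requirement of polynomial space, which rules out the usual dynamic-programming approach on a tree decomposition of width $\leq p$ (since such a table has size exponential in $p$ times $n^{\Oh(1)}$). This is precisely why we need the treedepth structure provided by centered colorings rather than just treewidth structure: the Pilipczuk--Wrochna algorithm exploits the rooted elimination forest to process the host graph in a recursive, DFS-like manner that reuses space across branches, at the cost of a modest $2^{\Oh(p\log p)}$ overhead in time. Everything else---computing the coloring, enumerating subsets of colors, and extracting the elimination forest---is routine and fits within the stated bounds.
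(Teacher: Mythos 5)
Your outer reduction is exactly the paper's: compute a $p$-centered coloring with $p^{\Oh(1)}$ colors via \cref{thm:main-minor}, use the fact that it is a treedepth-$p$ coloring (\cref{prop:centered-lowtd}), iterate over the $2^{\Oh(p\log p)}$ subsets of at most $p$ colors, and solve {\sc Subgraph Isomorphism} on each induced subgraph of treedepth at most $p$. That part is correct, including the count of iterations and the observation that the enumeration itself needs only polynomial space.

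The gap is the subroutine you treat as a black box. You attribute to Pilipczuk and Wrochna~\cite{PilipczukW18} a polynomial-space, $2^{\Oh(p\log p)}\cdot n^{\Oh(1)}$-time algorithm for {\sc Subgraph Isomorphism} on hosts of treedepth $p$, but that reference studies problems such as {\sc Dominating Set} and {\sc 3-Coloring} on treedepth decompositions; it does not supply the subgraph isomorphism routine you need. The paper cites it only as background on the space-saving phenomenon and then proves the required subroutine itself (\cref{lem:SI-td}), which is the main technical content of \cref{sec:iso}. Two issues make this subroutine genuinely nontrivial rather than ``routine DFS over the elimination forest.'' First, injectivity of the embedding: when the pattern splits into several connected pieces that are embedded into different subtrees of the elimination forest, a naive recursion cannot guarantee that the pieces land on disjoint vertex sets without storing exponential-size information; the paper resolves this with the color-coding technique of Alon et al.~\cite{AlonYZ95}, reducing to a labeled variant where compliance with a labeling $\alpha\colon V(G)\to V(H)$ enforces injectivity for free. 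Second, the time bound: the recursion is evaluated without a memoization table (to keep space polynomial), so one must argue that each of the $2^{\Oh(p+d\log p)}\cdot n$ subproblems is invoked at most once — the paper does this by showing every recursive call has a uniquely determined parent. Without either supplying a correct reference for the subroutine or proving it, your argument is incomplete at precisely the step the theorem is really about.
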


The proof of \cref{thm:si} follows the same strategy as before: having computed a $p$-centered coloring with $p^{\Oh(1)}$ colors, 
we iterate over all the $p$-tuples of color classes, of which there are $2^{\Oh(p\log p)}$, and for each $p$-tuple we use an algorithm that solves the problem on graphs of treedepth at most $p$.
This algorithm can be implemented to work in time $2^{\Oh(p\log p)}\cdot n^{\Oh(1)}$ and use polynomial space. 
We remark that this is not a straightforward dynamic programming, in particular we use the color-coding technique of Alon et al.~\cite{AlonYZ95} to ensure the injectivity of the constructed subgraph embedding.

The subgraph containment problem in proper minor-closed classes has a large literature.
By applying the same technique on a treewidth-$p$ coloring with~$p+1$ colors and using a standard dynamic programming algorithm for graphs of treewidth $p$ one can obtain an algorithm 
with time and space complexity \mbox{$2^{\Oh(p\log p)}\cdot n^{\Oh(1)}$} working on any proper minor-closed class. This was first observed for planar graphs by Eppstein~\cite{Eppstein99},
and the running time in the planar case was subsequently improved by Dorn~\cite{Dorn10} to $2^{\Oh(p)}\cdot n$.
More generally, the running time can be improved to $2^{\Oh(p/\log p)}\cdot n^{\Oh(1)}$ for apex-minor-free classes and connected pattern graphs $H$, and even to $2^{\Oh(\sqrt{p}\log^2 p)}\cdot n^{\Oh(1)}$ 
under the additional assumption that $H$ has constant maximum degree~\cite{FominLMPPS16}. All the abovementioned algorithms use space exponential in $p$.
See also~\cite{BodlaenderNZ16} for lower bounds under ETH.
Thus, \cref{thm:si} offers a reduction of space complexity to polynomial at the cost of having a moderately worse time complexity than the best known.

\section{Lifting constructions}\label{sec:prelims}

\vspace{-4pt}
\paragraph*{Preliminaries} 
All graphs in this paper are finite and simple, that is, without loops at vertices or multiple edges connecting the same pair of vertices.
They are also undirected unless explicitly stated.
We use the notation of Diestel's textbook~\cite{diestel2012graph} and refer to it for all undefined notation.

When we say that some object in a graph $G$ is {\em{polynomial-time computable}}, or {\em{ptime computable}} for brevity, we mean that there is a polynomial-time algorithm that given a graph $G$ 
computes such an object in polynomial time.
When we say that $G$ admits a ptime computable $p$-centered coloring, we mean that the algorithm computing the coloring
takes $p$ on input and runs in time $c\cdot n^{c}$ for some constant $c$, independent of~$p$.
This definition carries over to classes of graphs: class $\Cc$ admits ptime computable $p$-centered colorings if there is an algorithm as above working on every graph from $\Cc$.
Note that in particular, the constant $c$ may depend on $\Cc$, but may not depend on $p$ (given with the input).

%

We now give three constructions that enable us to lift the existence of polynomial centered colorings from simpler to more complicated graph classes.
The first one is based on the layering technique, and essentially states that to construct $p$-centered colorings with $p^{\Oh(1)}$ colors in a graph class it suffices to focus on connected graphs of radius at most $2p$.
The second states that having a partition of the graph into small pieces, we can lift centered colorings from the quotient graph to the original graph.
The third allows lifting centered colorings through tree decompositions with small adhesion (maximum size of an intersection of two adjacent bags).

\vspace{-4pt}
\subsection{Lifting through layering}

\noindent We first show that using classic layering one can reduce the problem of finding $p$-centered colorings to connected graphs of radius at most $2p$.

\begin{lemma}\label{lem:bounded-radius} Let $\Cc$ be a minor-closed
  class of graphs. Suppose that for some function $f\colon \N\to \N$
  the following condition holds: for every $p\in \N$ and connected
  graph $G\in \Cc$ of radius at most $2p$, the graph $G$ has a
  ptime computable $p$-centered coloring with $f(p)$ colors. 
  Then $\Cc$ admits a ptime computable $p$-centered 
  colorings with $(p+1)\cdot f(p)^2$ colors. 
\end{lemma}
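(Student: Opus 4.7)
The plan is to do classical BFS layering and then lift $p$-centered colorings from auxiliary contracted subgraphs of radius at most $2p$ back to $G$. After reducing to the connected case (distinct components can share a palette, since $p$-centeredness is a property of connected subgraphs), fix a root $r$, set $\ell(v) := \dist_G(r,v)$, and partition $V(G)$ into layers $L_i := \ell^{-1}(i)$. For an interval $[a,b]$ of layer indices I form $G_{[a,b]}$ by contracting the connected BFS ball $L_0 \cup \ldots \cup L_{a-1}$ in $G[L_0 \cup \ldots \cup L_b]$ into a single vertex $r_{[a,b]}$; this is a connected minor of $G$, hence in $\Cc$, and it has radius at most $b-a+1$ from $r_{[a,b]}$ because each vertex of $L_c$ with $c \geq 1$ has a BFS parent in $L_{c-1}$.

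The key combinatorial observation is that two shifted tilings of $\N$ by blocks of width $2p$ suffice to cover every window of at most $p$ consecutive layers inside a single block: set $A_k := [2pk,\, 2p(k+1)-1]$ and $B_k := [(2k+1)p,\, (2k+3)p-1]$, and then any $[x,y]$ with $y-x \leq p-1$ lies inside some $A_k$ when $x \bmod 2p \in [0,p]$ and inside some $B_k$ otherwise. Each $A_k$ and $B_k$ yields an auxiliary graph of radius at most $2p$, so the hypothesis supplies ptime $p$-centered colorings $\kappa_{A_k}$ and $\kappa_{B_k}$ using $f(p)$ colors apiece. Width exactly $2p$ is the sweet spot --- narrower bands cannot cover every length-$p$ window with a bounded number of shifts, while wider bands would blow past the radius budget --- and this is the source of the quadratic $f(p)^2$ factor in the final count.

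The coloring of $G$ is the product $c(v) := (c_0(v), c_1(v), c_2(v))$ with $c_0(v) := \ell(v) \bmod (p+1)$, $c_1(v) := \kappa_{A_k}(v)$ for the unique $k$ with $\ell(v) \in A_k$, and $c_2(v) := \kappa_{B_k}(v)$ for the unique $k$ with $\ell(v) \in B_k$ (with an arbitrary palette color as dummy value when $\ell(v) < p$, which lies in no $B_k$). The palette has size $(p+1)\,f(p)^2$, and the whole construction is polynomial time since there are $\Oh(n/p)$ blocks. Verification splits on the layer span $[x,y]$ of a connected $H \subseteq G$: if $y-x \geq p$, then because edges change $\ell$ by at most one and $H$ is connected, $H$ meets every layer in $[x,y]$ and thus hits all $p+1$ residues of $c_0$, giving more than $p$ colors; if $y-x \leq p-1$, the covering claim puts $H$ inside a single $A_k$ or $B_k$, so all of $H$ shares the corresponding $\kappa$-color, $H$ embeds as a connected subgraph of the relevant auxiliary graph, and the $p$-centeredness of $\kappa$ there yields either more than $p$ distinct colors on $H$ or a unique one --- both properties surviving the product with $c_0$.
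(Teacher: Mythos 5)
Your proposal is correct and follows essentially the same route as the paper: BFS layering, a family of width-$2p$ windows with the lower layers contracted to keep the radius at most $2p$, a product coloring whose first coordinate is the layer index modulo $p+1$, and a case split on whether the connected subgraph spans more than $p$ layers. Your two shifted tilings $A_k$ and $B_k$ are exactly the paper's overlapping family $\{G_j : p \mid j\}$ split by the parity of $j/p$, which is precisely how the paper decides whether $\lambda_j$ occupies the second or third coordinate, so the resulting coloring is the same up to relabeling.
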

\begin{proof}
  Fix $p\in \N$. For any graph $G\in \Cc$ we shall construct a
  $p$-centered coloring of $G$ using $(p+1)\cdot f(p)^2$ colors. We
  may assume that $G$ is connected, as otherwise we treat each
  connected component of $G$ separately and take the union of the
  obtained colorings. Note here that each connected component of $G$
  belongs to $\Cc$, because $\Cc$ is minor-closed.

  Fix any vertex $u$ of $G$ and partition $V(G)$ into layers
  $L_0,L_1,L_2,\ldots\subseteq V(G)$ according to the distance from
  $u$: layer $L_i$ comprises vertices exactly at distance~$i$ from~$u$. Thus, $L_0=\{u\}$, $\{L_0,L_1,L_2,\ldots\}$ forms a partition
  of $V(G)$, and every edge of $G$ connects two vertices from same or
  adjacent layers. Let $k$ be the largest integer such that layer
  $L_k$ is non-empty.

  For every $j\in \{0,1,\ldots,k\}$ divisible by $p$, consider the
  graph $G_j$ defined as follows: take the subgraph of $G$ induced by
  $L_0\cup L_1\cup \ldots\cup L_{j+2p-1}$ and, provided $j>0$,
  contract all vertices of $L_0\cup L_1\cup \ldots \cup L_{j-1}$ onto
  $u$; note that this is possible since
  $L_0\cup L_1\cup \ldots\cup L_{j-1}$ induces a connected subgraph of
  $G$. Note that $G_j$ is obtained from~$G$ by vertex removals and
  edge contractions, so $G_j$ is a minor of $G$; since $\Cc$ is
  minor-closed, we have $G_j\in \Cc$. Moreover, $G_j$ is connected and
  has radius at most~$2p$: this is straightforward for $j=0$, while
  for $j>0$ it can be easily seen that every vertex of $G_j$ is at
  distance at most~$2p$ from the vertex resulting from contracting
  $L_0\cup L_1\cup \ldots \cup L_{j-1}$. Finally, the vertex set of
  $G_j$ contains the $2p$ consecutive layers
  $L_j,L_{j+1},\ldots,L_{j+2p-1}$, plus one more vertex when $j>0$.
  Thus, for every $i\in \{0,1,\ldots,k\}$ and vertex $v\in L_i$ we
  have that $v\in V(G_{j-1})$ and $v\in V(G_j)$, where
  $j=p\cdot \lfloor i/p\rfloor$ is the largest integer divisible by
  $p$ not larger than $i$. Here, $G_{j-1}$ should be ignored if $j=0$.
  Clearly, the layers and the graphs $G_j$ are polynomial time computable. 

  Since each $G_j$ is a graph from $\Cc$ that is connected and has
  radius at most~$2p$, we may apply the assumed property of $\Cc$ to
  $G_j$ in order to compute in polynomial time 
  a $p$-centered coloring $\lambda_j$ of
  $G_j$ using $f(p)$ colors. We may assume that all colorings~$\lambda_j$ use the color set $\{1,\ldots,f(p)\}$. Now, define a
  coloring $\lambda$ of $G$ as follows: for $i\in \{0,1,\ldots,k\}$
  with $j=p\cdot \lfloor i/p\rfloor$, to each vertex $v\in L_i$ assign
  a color $\lambda(v)$ consisting of the following three of numbers:
  \[i\bmod (p+1)\qquad ; \qquad \lambda_j(v) \qquad ; \qquad 
  \lambda_{j-1}(v)\textrm{ if }j>0\textrm{, and }1\textrm{
    otherwise.}\]
  These three numbers are arranged into an ordered triple as follows:
  $i\bmod (p+1)$ is always the first coordinate, while $\lambda_j(v)$
  is on the second coordinate if $j$ is even and on the third
  coordinate if $j$ is odd. The value $\lambda_{j-1}(v)$ (or $1$ if
  $j=0$) is put on the remaining coordinate. The ordered triple
  defined in this manner is set as the color $\lambda(v)$. Observe
  that thus, $\lambda$ is a coloring of $G$ using the color set
  $\{0,1,\ldots,p\}\times \{1,\ldots,f(p)\}\times \{1,\ldots,f(p)\}$,
  which consists of $(p+1)\cdot f(p)^2$ colors. Clearly, $\lambda$ 
  is polynomial time computable from the layers and the colorings 
  $\lambda_i$. So it remains to prove that $\lambda$ is a 
  $p$-centered coloring~of~$G$.

  To this end, fix any connected subgraph $H$ of $G$. Let
  $I\subseteq \{0,1,\ldots,k\}$ be the set of those indices $i$, for
  which $V(H)\cap L_i\neq \emptyset$. Since $H$ is connected, we have
  that $I$ is an interval, i.e., $I=\{a,a+1,\ldots,b\}$ for some
  $0\leq a\leq b\leq k$.

  Suppose first that $b-a>p$. Then for each residue
  $r\in \{0,1,\ldots,p\}$ there is $i\equiv r\bmod p$ such that
  $i\in I$, hence there is a vertex of $H$ whose color under $\lambda$
  has~$r$ on the first coordinate. We infer that vertices of $H$
  receive more than $p$ different colors under $\lambda$.

  Suppose now that $b-a\leq p$, which means that
  $V(H)\subseteq L_a\cup L_{a+1}\cup \ldots\cup L_{a+p-1}$. Let
  $j=p\cdot \lfloor a/p\rfloor$ be the largest integer divisible by
  $p$ not larger than~$a$. Then $a-j<p$, hence
  $V(H)\subseteq L_{j}\cup L_{j+1}\cup \ldots\cup L_{j+2p-1}\subseteq
  V(G_j)$
  and $H$ is an induced subgraph of~$G_j$. Since $\lambda_j$ is a
  $p$-centered coloring of $G_j$ and $H$ is a connected subgraph of
  $G_j$, we infer that either $H$ receives more than $p$ colors under
  $\lambda_j$, or some color in $\lambda_j$ appears exactly once among
  vertices of $H$. Moving to the coloring $\lambda$, observe that for
  every vertex $v$ of $H$, the color $\lambda_j(v)$ appears either on
  the second or on the third coordinate of the color $\lambda(v)$,
  depending on whether~$j$ is even or odd. Consequently, for any two
  vertices $v,v'\in V(H)$ we have that
  $\lambda_j(v)\neq \lambda_j(v')$ implies
  $\lambda(v)\neq \lambda(v')$, and the above mentioned property of
  $H$ under the coloring $\lambda_j$ carries over to $H$ under the
  coloring $\lambda$. 
\end{proof}

Note that in \cref{lem:bounded-radius}, if $f(p)$ is a polynomial of
degree $d$, then $\Cc$ admits ptime computable polynomial centered colorings of degree
$2d+1$.

\subsection{Lifting through partitions}

\noindent The following lemma will be useful for lifting the existence of
centered colorings through partitions and quotient graphs.

\begin{lemma}\label{lem:prt-lift}
Let $p,q$ be positive integers. Suppose a graph $G$ has a partition $\prt$, computable in ptime for $p$ given on input, so~that
\begin{itemize}
\item $|V(A)|\leq q$ for each $A\in \prt$, and
\item the graph $G/\prt$ admits a ptime computable $p$-centered coloring with $f(p)$ colors, for some $f\colon \N\to \N$.
\end{itemize}
Then the graph $G$ has a ptime computable $p$-centered coloring with $q\cdot f(p)$ colors.
\end{lemma}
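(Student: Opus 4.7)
The natural approach is to produce a coloring $\lambda$ of $G$ by pairing each vertex with two pieces of information: the color of its part under the centered coloring of $G/\prt$, and a local index identifying it within its part. More precisely, let $\kappa$ be a $p$-centered coloring of $G/\prt$ using $f(p)$ colors, and for every part $A \in \prt$ fix an arbitrary bijection $\iota_A\colon V(A)\to \{1,\ldots,|V(A)|\}$. For a vertex $v\in V(G)$ lying in part $A_v\in \prt$, set
\[
\lambda(v) \;=\; \bigl(\kappa(A_v),\, \iota_{A_v}(v)\bigr).
\]
Since $\iota_A$ takes at most $q$ values, this coloring uses at most $q\cdot f(p)$ colors; and since $\prt$, $\kappa$, and the bijections $\iota_A$ are all ptime computable, so is $\lambda$.

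Next I would verify that $\lambda$ is $p$-centered. Take any connected subgraph $H$ of $G$, and let $\prt_H = \{A\in\prt \mid V(A)\cap V(H)\neq\emptyset\}$ be the set of parts meeting $H$. The key observation is that $\prt_H$ induces a connected subgraph of $G/\prt$: indeed, for any two vertices $u,v\in V(H)$ a path from $u$ to $v$ in $H$ projects (via the quotient map) to a walk in $G/\prt$ between $A_u$ and $A_v$, and every vertex of this walk lies in $\prt_H$ by definition. We can therefore apply the hypothesis that $\kappa$ is $p$-centered to the connected subgraph of $G/\prt$ induced by $\prt_H$.

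From this we split into two cases. If $\prt_H$ receives more than $p$ colors under $\kappa$, then by projecting onto the first coordinate of $\lambda$ we immediately see that $V(H)$ receives more than $p$ colors under $\lambda$. Otherwise, some color $c$ appears on exactly one part $A\in\prt_H$ under $\kappa$. Pick any vertex $v\in V(A)\cap V(H)$ (which exists because $A\in\prt_H$), and consider its color $\lambda(v)=(c,\iota_A(v))$. No other vertex of $H$ can have first coordinate $c$, because every other vertex of $H$ lies in a different part of $\prt_H$, none of which is colored $c$ by $\kappa$. Hence $\lambda(v)$ appears exactly once on $V(H)$, completing the verification.

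\textbf{Expected obstacles.} The argument is essentially bookkeeping once the right coloring is written down; the only conceptual point that needs a line of justification is the connectedness of the image of $H$ in $G/\prt$, which follows directly from the definition of the quotient graph. The bound $q\cdot f(p)$ is tight in the construction and polynomial-time computability is immediate, so no further estimates are required.
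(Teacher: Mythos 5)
Your proof is correct and essentially identical to the paper's: both take the product of the lifted $p$-centered coloring of $G/\prt$ with a coloring that is injective on each part (your $\iota_{A}$), and both reduce the verification to the connectedness of the set of parts meeting $H$ in the quotient graph. One small slip in the write-up: in the second case you assert that no other vertex of $H$ has first coordinate $c$ ``because every other vertex of $H$ lies in a different part'' --- this overlooks that $H$ may meet $A$ in several vertices, all of which do carry first coordinate $c$; the conclusion nevertheless stands because those vertices are separated by the second coordinate, $\iota_{A}$ being injective on $V(A)$, so any $v\in V(A)\cap V(H)$ still receives a color unique in $H$.
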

\begin{proof}
  Since each part of $\prt$ has at most $q$ vertices, we can compute a coloring $\kappa\colon V(G)\to C$ for a color set $C$ of size $q$ 
  so that within each part of $\prt$ all vertices receive pairwise different colors.
  By assumption, we can also compute in polynomial time a $p$-centered coloring $\lambda_0\colon \prt\to D$ of $G/\prt$ for a color set $D$ of size $f(p)$.
  Let $\lambda\colon V(G)\to D$ be a natural lift of $\lambda_0$ to $G$: for each $u\in V(G)$ we put $\lambda(u)=\lambda_0(A)$, where $A\in \prt$ is such that $u\in V(A)$.
  We now construct the product coloring $\rho\colon V(G)\to C\times D$ defined~as
  $$\rho(u)=(\kappa(u),\lambda(u))\qquad \textrm{for each }u\in V(G).$$

  Since $\rho$ uses $q\cdot f(p)$ colors, it suffices to verify that $\rho$ is $p$-centered.  
  Let $G'=G/\prt$.
  Take any connected subgraph $H$ of $G$. Let
  ${\cal X}\subseteq \prt$ be the set of those parts of
  $\prt$ that intersect $H$. Since $H$ is connected, the graph
  $G'[{\cal X}]$ is connected as well. We infer that either parts
  from ${\cal X}$ receive more than $p$ different colors in $\lambda_0$,
  or there is a part $A\in {\cal X}$ whose color is unique in
  ${\cal X}$ under~$\lambda_0$. In the first case, it follows immediately that $H$
  receives more than $p$ different colors in~$\rho$, as there are
  already~$p$ different second coordinates of the colors of vertices of~$H$. In the second case, each vertex of $A$ receives a different
  color under $\lambda$, and no other vertex of $H$ can share this
  color, because $A$ is colored uniquely among~${\cal X}$. It follows that
  every vertex of $V(A)\cap V(H)$ has a unique color under~$\rho$
  among vertices of $H$; since this intersection is non-empty, the
  claim follows. 
\end{proof}

We will often use the following combination of \cref{lem:bounded-radius} and \cref{lem:prt-lift}, 
where each part of the partition is a geodesic.

\begin{corollary}\label{lem:geodesics-lift}
  Suppose that a minor-closed class of graphs $\Cc$ has the following
  property: for every graph $G\in \Cc$ there exists a ptime
  computable partition
  $\prt_G$ of~$G$ into geodesics in~$G$ so that the class
  $\Dd=\{G/\prt_G\colon G\in \Cc\}$
  admits ptime computable $p$-centered colorings with 
  $f(p)$ colors, for some function
  \mbox{$f\colon \N\to \N$}. Then~$\Cc$ admits ptime computable 
  $p$-centered colorings with
  \[(p+1)(4p+1)^2\cdot f(p)^2\] colors. 
\end{corollary}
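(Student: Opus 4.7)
The plan is to chain \cref{lem:bounded-radius} and \cref{lem:prt-lift}, exploiting the fact that a partition into geodesics automatically yields small parts once we restrict attention to graphs of small radius. The key quantitative observation is that in a connected graph of radius at most $2p$, the diameter is at most $4p$, and therefore every geodesic has at most $4p+1$ vertices.

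Concretely, I first verify the hypothesis of \cref{lem:bounded-radius} for the function $f'(p) := (4p+1)\cdot f(p)$. Let $G \in \Cc$ be connected with radius at most $2p$. By assumption we can compute in polynomial time a partition $\prt_G$ of $G$ into geodesics in $G$, and the quotient $G/\prt_G$ (which lies in $\Dd$) admits a ptime computable $p$-centered coloring with $f(p)$ colors. Since each part of $\prt_G$ is a geodesic in $G$ and $G$ has radius at most $2p$, each part has at most $4p+1$ vertices. \cref{lem:prt-lift} applied with $q = 4p+1$ then produces a ptime computable $p$-centered coloring of $G$ using $(4p+1)\cdot f(p) = f'(p)$ colors.

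Now I invoke \cref{lem:bounded-radius} with the function $f'$: it yields a ptime computable $p$-centered coloring of every graph in $\Cc$ using
\[(p+1)\cdot f'(p)^2 \;=\; (p+1)(4p+1)^2\cdot f(p)^2\]
colors, which is exactly the claimed bound. Minor-closedness of $\Cc$ is used in the appeal to \cref{lem:bounded-radius}; it is also the reason why the partitions $\prt_G$ are available for the bounded-radius minors $G_j$ produced inside the proof of that lemma, since each $G_j$ is itself a member of $\Cc$.

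There is essentially no obstacle: both ingredients are already packaged as black boxes, and the only combinatorial content is the bound $|V(A)| \le 4p+1$ for a geodesic $A$ in a radius-$2p$ graph. The one point that merits a sentence of care in the write-up is the order of composition and the resulting polynomial blow-up: one must apply \cref{lem:prt-lift} \emph{first}, at the level of bounded-radius graphs where geodesics are short, and only then lift to the whole class via \cref{lem:bounded-radius}; applying the lemmas in the opposite order would not give any control on the length of the geodesics in $\prt_G$.
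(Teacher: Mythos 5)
Your proof is correct and follows exactly the paper's argument: reduce to connected radius-$\le 2p$ graphs via \cref{lem:bounded-radius}, note that a geodesic in such a graph has at most $4p+1$ vertices, and apply \cref{lem:prt-lift} with $q=4p+1$ to get $(4p+1)\cdot f(p)$ colors at the bounded-radius level, yielding $(p+1)(4p+1)^2 f(p)^2$ overall. Your closing remark about the order of composition is accurate and matches how the paper structures the argument.
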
 
\begin{proof}
  By \cref{lem:bounded-radius}, it suffices to show that for every
  $p\in \N$, every connected graph $G\in \Cc$ of radius at most $2p$
  has a polynomial time computable 
  $p$-centered coloring with $(4p+1)\cdot f(p)$ colors. By
  assumption, there is a ptime computable 
  partition $\prt_G$ of $G$ such that every
  $P\in \prt_G$ is a geodesic in $G$ and the graph $H=G/\prt_G$ admits
  a ptime computable 
  $p$-centered coloring with $f(p)$ colors. Observe that
  any geodesic in a graph of radius at most $2p$ has length at most
  $4p$, hence each geodesic $P\in \prt_G$ contains at most $4p+1$
  vertices. The claim follows by \cref{lem:prt-lift}.
\end{proof}

\newcommand{\bag}{\beta}
\newcommand{\adh}{\alpha}
\newcommand{\mrg}{\mu}
\newcommand{\Torso}{\Gamma}
\newcommand{\Sk}{S}

\subsection{Lifting through tree decompositions}\label{sec:tdec}

\noindent In this paper, it will be convenient to work with rooted tree decompositions.
That is, the shape of a tree decomposition will be a {\em{directed tree}} $T$:
an acyclic directed graph with one \emph{root} node having out-degree~$0$
and all other nodes having out-degree $1$.
This imposes standard parent/child relation in $T$, where the parent of a non-root node is its unique out-neighbor.

\begin{definition}
  A {\em{tree decomposition}} of a graph $G$ is a pair
  $\Tt=(T,\bag)$, where $T$ is a directed tree and
  $\bag\colon V(T)\to 2^{V(G)}$ is a mapping that assigns each node
  $x$ of $T$ its {\em{bag}} $\bag(x)\subseteq V(G)$ so that the
  following conditions are satisfied:
  \begin{enumerate}[label=(T\arabic*),ref=(T\arabic*)]
  \item\label{p:amoeba} For each $u\in V(G)$, the set
    $\{x\colon u\in \bag(x)\}$ is non-empty and induces a connected
    subtree of $T$.
  \item\label{p:edge} For every edge $uv\in E(G)$, there is  $x\in V(T)$ such
    that $\{u,v\}\subseteq \bag(x)$.
  \end{enumerate}
\end{definition}

Let $\Tt=(T,\bag)$ be a tree decomposition of $G$.
The {\em{width}} of $\Tt$ is the maximum bag size minus~$1$, i.e., $\max_{x\in V(T)} |\bag(x)|-1$. 
The {\em{treewidth}} of a graph $G$ is the minimum possible width of a tree decomposition of $G$. 
For a non-root node~$x$ with parent $y$, we define the \emph{adhesion set} of $x$ as $\adh(x)=\bag(x)\cap \bag(y)$. If $x$ is the root, then we set $\adh(x)=\emptyset$ by convention.
The {\em{adhesion}} of the tree decomposition $\Tt=(T,\bag)$ is the maximum size of an adhesion set in $\Tt$, i.e., $\max_{x\in V(T)}|\adh(x)|$.

The following lemma expresses how centered colorings can be combined along tree decompositions with small adhesion.

\begin{lemma}\label{lem:tree-decomps}
  Let $k$ be a fixed integer and $\Cc$ be a class of graphs that
  admits ptime computable 
  polynomial centered colorings of degree $d$.  Suppose a class
  of graphs~$\Dd$ has the following property: every graph $G\in \Dd$
  admits a ptime computable tree decomposition over $\Cc$ with adhesion
  at most $k$. Then $\Dd$ has ptime computable 
  polynomial centered colorings of degree~$d+k$.
\end{lemma}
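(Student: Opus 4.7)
The plan is to construct the $p$-centered coloring of $G\in\Dd$ as the pointwise product of two ingredients: a \emph{torso coloring} inherited from the $p$-centered colorings of the torsos, using $\Oh(p^d)$ colors from the hypothesis on $\Cc$, and a \emph{quotient coloring} obtained by applying the already-established bounded-treewidth case to an auxiliary graph of treewidth at most $k$, using $\Oh(p^k)$ colors. The product uses $\Oh(p^{d+k})$ colors, matching the claimed degree $d+k$.

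Concretely, I would fix the ptime-computable tree decomposition $\Tt=(T,\bag)$ of $G$ with adhesion at most $k$ and torsos in $\Cc$, root $T$ arbitrarily, and for each vertex $v$ define its \emph{home node} $x_v$ as the topmost node of $T$ whose bag contains $v$; let $V_x:=\{v:x_v=x\}$. For each $x\in V(T)$, use the hypothesis on $\Cc$ to compute a $p$-centered coloring $\lambda_x$ of $\Torso(x)$ with $\Oh(p^d)$ colors, using a common palette. Next, form the quotient graph $G^q$ with vertex set $\{[x]:V_x\neq\emptyset\}$ and an edge $[x][y]$ whenever $E(G)$ contains an edge between $V_x$ and $V_y$. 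Assigning to each $x\in V(T)$ the bag $\{[x]\}\cup\{[x_a]:a\in\adh(x)\}$ yields a tree decomposition of $G^q$ of width at most $k$, the axioms following from the separator property of adhesions. Applying the bounded-treewidth case to $G^q$ produces a $p$-centered coloring $\pi^q$ of $G^q$ with $\binom{p+k}{k}$ colors, which I lift to $\pi(v):=\pi^q([x_v])$. The master coloring is then $c(v):=(\lambda_{x_v}(v),\pi(v))$, using $\Oh(p^{d+k})$ colors in total.

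For the $p$-centered verification on a connected subgraph $H\subseteq G$, let $T_H\subseteq T$ be the connected subtree of nodes whose bag meets $V(H)$, let $r_H$ be its root, and set $B:=V(H)\cap\bag(r_H)$. A preliminary observation I would establish first is that $B\subseteq V_{r_H}$ and that $B$ induces a connected subgraph of $\Torso(r_H)$: each child adhesion of $r_H$ is a separator in $G$ and becomes a clique in $\Torso(r_H)$, so any detour of $H$ through a child subtree can be shortcut inside the torso. Assuming $V(H)$ uses at most $p$ colors of $c$, the $p$-centered property of $\lambda_{r_H}$ on $B$ yields some $w\in B$ with unique $\lambda_{r_H}$-color in $B$, and the $p$-centered property of $\pi^q$ on the connected quotient $H^q\subseteq G^q$ yields some $[x^*]\in V(H^q)$ with unique $\pi^q$-color. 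If $[x^*]=[r_H]$, then $c(w)$ is unique in $V(H)$: any $v\in V(H)$ with $\pi(v)=\pi^q([r_H])$ must lie in $V_{r_H}\cap V(H)=B$, and the first coordinate then forces $v=w$. The case $[x^*]\neq[r_H]$ is the main obstacle: one must locate a unique $c$-color among $V(H)\cap V_{x^*}$, which requires applying the $p$-centered property of $\lambda_{x^*}$ to a suitable connected subset of $V(H)\cap\bag(x^*)\subseteq V(\Torso(x^*))$; the subtlety is that this set may contain adhesion vertices of $x^*$ whose first coordinate in $c$ is governed by an ancestor's torso coloring rather than $\lambda_{x^*}$, so threading the different torso colorings together with the quotient coloring to close the argument is the central technical challenge of the proof.
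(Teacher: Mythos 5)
Your setup (home nodes, torso colorings restricted to margins, a width-$k$ quotient decomposition built from the adhesion sets, and a product coloring of degree $d+k$) matches the paper's, but the obstacle you flag at the end is a genuine gap, not a loose end: the argument as designed cannot be closed. A $p$-centered coloring $\pi^q$ of the quotient only guarantees that \emph{some} class $[x^*]$ of $H^q$ is uniquely colored, and you have no control over which one. If $x^*\neq r_H$, then $V(H)\cap\mrg(x^*)$ need not induce a connected subgraph of $\Torso(x^*)$ --- the pieces of $H$ inside $\mrg(x^*)$ may be joined only via detours through $\adh(x^*)$ and ancestors of $x^*$ --- so the $p$-centered property of $\lambda_{x^*}$ says nothing about it. Passing to the connected set $V(H)\cap\bag(x^*)$ does not help either, because the adhesion vertices do not carry their $\lambda_{x^*}$-colors in the master coloring: neither branch of the $p$-centered alternative for $\lambda_{x^*}$ on that set transfers to $c$ (more than $p$ values of $\lambda_{x^*}$ there do not force more than $p$ values of $c$ on $H$, and the uniquely $\lambda_{x^*}$-colored vertex may lie in $\adh(x^*)$). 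Concretely, two nonadjacent vertices $a,b\in\mrg(x^*)$ with $\lambda_{x^*}(a)=\lambda_{x^*}(b)$, connected in $H$ only through the parent side, defeat the argument whenever $[x^*]$ happens to be the uniquely colored class.

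The paper avoids this by replacing the quotient's $p$-centered coloring with a strictly stronger object: a coloring $\kappa$ of the \emph{skeleton} DAG $\Sk$ (arcs from each margin $\mrg(x)$ to each vertex of $\adh(x)$) that is constant on margins and proper on the $p$-th power of $\Sk$; a counting argument (the combinatorial core of the weak-coloring-number bound for treewidth) shows $\binom{p+k}{k}$ colors suffice. This lets one always work with the \emph{minimal} node $z=r_H$, the only node for which $V(H)\cap\bag(z)=V(H)\cap\mrg(z)$ is automatically connected in the torso: if some vertex of $H$ outside $\mrg(z)$ shares the $\kappa$-color of $\mrg(z)$, a directed skeleton path inside $V(H)$ from it down to $\mrg(z)$ is exhibited, and since consecutive windows of $p+1$ vertices on such a path are rainbow under $\kappa$, the path must have length greater than $p$ and already witnesses $p+1$ distinct colors on $H$; otherwise the $\kappa$-color class of $\mrg(z)$ within $H$ is exactly $\mrg(z)\cap V(H)$ and the torso coloring $\lambda_z$ finishes. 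So the missing idea is precisely this distance-$p$ rainbow property along skeleton paths, which a $p$-centered coloring of the quotient does not provide. (A secondary point: the bounded-treewidth case you invoke is itself derived in the paper from this same skeleton machinery, so you would in any case need to establish it independently to avoid circularity.)
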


Before proving \cref{lem:tree-decomps}, we collect several properties
of tree decompositions. Let $\Tt=(T,\bag)$ be a tree decomposition of $G$. We use the following notation whenever $\Tt$ is clear from the context.
\begin{enumerate}
  \item We have a natural ancestor/descendant relation in $T$: a node is a
descendant of all the nodes that appear on the unique path leading from it to the root.
Note that every node of $T$ is also its own ancestor and descendant.
We write $x\leq_T y$ if $x$ is an ancestor of $y$.
Then $\leq_T$ is a partial order on the nodes of $T$ with the root being 
the unique $\leq_T$-minimal element.
  \item The \emph{margin} of a node $x$ is the set $\mrg(x)=\bag(x)\setminus \adh(x)$. Recall here that $\adh(x)$ is the adhesion set of $x$.
  \item For every vertex $u$ of $G$, let $x(u)$ be the unique $\leq_T$-minimal node of $T$ with $u\in \bag(x)$. Note that this node is unique due to condition~\ref{p:amoeba}.
        We define a quasi-order $\leq_{\Tt}$ on the vertex set of $G$ as follows: $u\leq_{\Tt} v$ if and only if $x(u)\leq_T x(v)$.
  \item The \emph{torso} of a node $x$ is the graph $\Torso(x)$
    on vertex set $\bag(x)$ where two vertices $u,v\in \bag(x)$ are
    adjacent if and only if $uv\in E(G)$ or if there exists $y\neq x$
    such that $u,v\in \bag(y)$. Equivalently, $\Torso(x)$ is obtained from $G[\bag(x)]$ by turning the adhesion sets of $x$ and of all children of $x$ into cliques.
  \item We call $\Tt$ a tree-decomposition \emph{over} a class
    $\Cc$ of graphs if $\Torso(x)\in \Cc$ for every node $x$ of~$T$.
  \item The \emph{skeleton} of $G$ over $\Tt$ is the directed graph $\Sk$
    with vertex set $V(\Sk)=V(G)$ and arc set defined as follows:
    for each $x\in V(T)$, $u\in \mrg(x)$, and $v\in \adh(x)$, we put the arc $(u,v)$ into the arc set of $\Sk$.
  \end{enumerate}

Note that if $(u,v)$ is an arc in the skeleton $\Sk$, then in particular $v<_{\Tt} u$, equivalently $x(v)<_T x(u)$. This implies that the skeleton is always acyclic (i.e. it is a DAG).

\pagebreak
The following lemmas express well-known properties of tree decompositions.

\begin{lemma}\label{lem:edge-anc}
If $\Tt=(T,\bag)$ is a tree decomposition of a graph $G$ and $uv$ is an edge in $G$, then $x(u)$ is an ancestor of $x(v)$ or vice versa.
Consequently, $u\leq_{\Tt} v$ or $v\leq_{\Tt} u$.
\end{lemma}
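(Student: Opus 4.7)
The plan is straightforward: use property \ref{p:edge} to find a common bag, then show that both $x(u)$ and $x(v)$ must be ancestors of it, and finally appeal to the fact that the ancestors of a node in a rooted tree form a totally ordered chain.

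Concretely, first I would invoke \ref{p:edge} to obtain a node $x\in V(T)$ with $\{u,v\}\subseteq\bag(x)$. Next, set $X_u=\{y\in V(T)\suchthat u\in\bag(y)\}$ and $X_v=\{y\in V(T)\suchthat v\in\bag(y)\}$; by \ref{p:amoeba}, each of these is a non-empty set inducing a connected subtree of $T$, and both contain $x$. Recalling that $x(u)$ was defined as the unique $\leq_T$-minimal element of $X_u$, I would argue that $x(u)$ is an ancestor of $x$: indeed, the unique path in $T$ from $x$ to $x(u)$ lies entirely in $X_u$ by connectedness, so all of its nodes contain $u$ in their bags, and by $\leq_T$-minimality of $x(u)$ inside $X_u$, this path must go upward from $x$ to $x(u)$. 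The same reasoning shows that $x(v)$ is an ancestor of $x$.

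The final step is the observation that in a rooted tree, the set of ancestors of any fixed node is linearly ordered by $\leq_T$, because each non-root node has exactly one parent, so the ancestors of $x$ form the unique path from $x$ to the root. Hence $x(u)$ and $x(v)$ are comparable under $\leq_T$, giving the first part of the statement. The consequence $u\leq_{\Tt} v$ or $v\leq_{\Tt} u$ then follows immediately from the definition $u\leq_{\Tt} v \iff x(u)\leq_T x(v)$.

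There is no real obstacle here; the statement is a routine unpacking of the definitions. The only thing to be a little careful about is the minimality argument showing that $x(u)\leq_T x$ (rather than $x\leq_T x(u)$): this uses both that $X_u$ is connected and that $x(u)$ is its unique $\leq_T$-minimal element, so I would state explicitly that if $x(u)$ were not an ancestor of $x$, then the path from $x(u)$ to $x$ would pass through the common ancestor of $x(u)$ and $x$, which would be a strictly smaller element of $X_u$, contradicting minimality.
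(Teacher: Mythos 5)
Your proof is correct and rests on exactly the same ingredients as the paper's one-line argument: condition (T1) (connectivity of the set of bags containing a vertex) and condition (T2) (a common bag for the endpoints of an edge), combined with the fact that the ancestors of a node in a rooted tree form a chain. The paper phrases it contrapositively (incomparable $x(u),x(v)$ would force the two bag-sets to be disjoint), while you argue directly from a common bag, but the content is identical, so there is nothing further to add.
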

\begin{proof}
Otherwise the sets of nodes whose bags contain $u$ and $v$, respectively, would be disjoint, which would be a contradiction with the existence of the edge~$uv$ by condition~\ref{p:edge}.
\end{proof}

\begin{lemma}\label{lem:margins-prt}
Let $\Tt=(T,\bag)$ be a tree decomposition of a graph $G$.
For every vertex $u$ of $G$, the node $x(u)$ is the unique node of $T$ whose margin contains $u$.
Consequently, $\{\mrg(x)\}_{x\in V(T)}$ is a partition of the vertex set of~$G$.
\end{lemma}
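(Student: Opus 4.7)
The plan is to prove the two assertions of \cref{lem:margins-prt} in sequence: first that $u \in \mrg(x(u))$ always holds, second that this is the only node whose margin contains $u$. The ``consequently'' part then follows immediately by collecting these facts.

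For the first direction, I would fix a vertex $u$ and set $x = x(u)$, i.e., the unique $\leq_T$-minimal node whose bag contains $u$. By definition $u \in \bag(x)$, so I only need to rule out that $u \in \adh(x)$. If $x$ is the root of $T$, then $\adh(x) = \emptyset$ by convention and we are done. Otherwise let $y$ be the parent of $x$; if $u$ were in $\adh(x) = \bag(x) \cap \bag(y)$, then in particular $u \in \bag(y)$, but $y <_T x$, contradicting the $\leq_T$-minimality of $x = x(u)$.

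For the uniqueness direction, suppose $u \in \mrg(x)$ for some node $x$. Then $u \in \bag(x)$, so by the definition of $x(u)$ we have $x(u) \leq_T x$. If $x \neq x(u)$, let $y$ be the parent of $x$; then $y$ lies on the unique path from $x(u)$ to $x$ in $T$. By condition~\ref{p:amoeba} applied to $u$, the set of nodes whose bag contains $u$ is connected in $T$ and contains both $x(u)$ and $x$, hence contains $y$ as well. Therefore $u \in \bag(y)$, which gives $u \in \bag(x) \cap \bag(y) = \adh(x)$, contradicting $u \in \mrg(x)$. Hence $x = x(u)$.

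Combining the two directions: each vertex $u \in V(G)$ belongs to $\mrg(x(u))$ and to no other margin, so $\{\mrg(x)\}_{x \in V(T)}$ is indeed a partition of $V(G)$. No step here poses any real obstacle; the only subtlety worth highlighting is the invocation of the connectivity property~\ref{p:amoeba} to conclude $u \in \bag(y)$ for the intermediate parent $y$, which is what links the tree structure to the uniqueness claim.
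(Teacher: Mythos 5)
Your proof is correct and follows essentially the same route as the paper's: both arguments reduce the claim to the observation that, by condition~\ref{p:amoeba}, the nodes whose bags contain $u$ form a connected subtree of $T$, and the unique node of that subtree whose parent's bag omits $u$ (or which is the root) is exactly the $\leq_T$-minimal node $x(u)$. The paper states this in one line; you merely spell out the two inclusions separately, which is a purely expository difference.
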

\begin{proof}
Vertex $u$ belongs to $\mrg(x)$ for some node $x$ if and only if $u\in \bag(x)$ and either $x$ is the root, or the parent $y$ of $x$ satisfies $u\notin \bag(y)$.
By condition~\ref{p:amoeba}, among nodes $x$ with $u\in \bag(x)$ there is exactly one satisfying the second condition, being $x(u)$.
\end{proof}

Note that by \cref{lem:margins-prt}, the margins of nodes of $T$ are exactly the classes of equivalence in the quasi-order $\leq_{\Tt}$ on $V(G)$.

\medskip

We start the proof of \cref{lem:tree-decomps} by observing some properties of the skeleton graph.
Fix $p\in \N$, a graph~$G$, a tree decomposition $\Tt=(T,\bag)$ of $G$ with adhesion at most $k$, and let $\Sk$ be the skeleton of $G$ over~$\Tt$.

First, we show that restricted reachability in $G$ implies reachability in the skeleton.

\setcounter{claim}{0}

\begin{lemma}\label{lem:path-skeleton}
Let $u,v$ be vertices of $G$ with $v<_{\Tt} u$ and let $P$ be a path in $G$ with endpoints $u$ and $v$ such that every vertex $w$ of $P$ apart from $v$ satisfies $v<_{\Tt} w$.
Then there exists a directed path $Q$ in $\Sk$ leading from~$u$ to~$v$ and satisfying $V(Q)\subseteq V(P)$.
\end{lemma}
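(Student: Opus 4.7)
The plan is to prove this by induction on the length $\ell$ of $P$, writing $P = w_0 w_1 \cdots w_\ell$ with $w_0 = u$ and $w_\ell = v$, and building $Q$ arc-by-arc by peeling off one adhesion set at a time. Let $y_0 = x(u)$, so $u \in \mrg(y_0)$. The relation $v <_{\Tt} u$ unpacks to $x(v) <_T y_0$, so $y_0$ is not the root and its adhesion set $\adh(y_0)$ is defined.

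The engine of the induction step is the standard separator property of tree decompositions: writing $V_{y_0} = \bigcup_{z \geq_T y_0} \bag(z)$, the adhesion set $\adh(y_0)$ separates $V_{y_0} \setminus \adh(y_0)$ from $V(G) \setminus V_{y_0}$ in $G$. Since $u \in \mrg(y_0) \subseteq V_{y_0} \setminus \adh(y_0)$ and $x(v) <_T y_0$ forces $v \notin V_{y_0} \setminus \adh(y_0)$, walking along $P$ from $u$ must enter $\adh(y_0)$ at some step. Let $i^{*} \in \{1,\ldots,\ell\}$ be the smallest index with $w_{i^{*}} \in \adh(y_0)$. By the definition of $\Sk$, the ordered pair $(u, w_{i^{*}})$ is then an arc of~$\Sk$, since $u \in \mrg(y_0)$ and $w_{i^{*}} \in \adh(y_0)$.

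If $w_{i^{*}} = v$ (which in particular handles the base case $\ell = 1$), then I simply take $Q$ to be the single arc $(u,v)$ and finish. Otherwise $w_{i^{*}} \in \adh(y_0) \subseteq \bag(\text{parent}(y_0))$ forces $x(w_{i^{*}}) <_T y_0 = x(u)$, hence $w_{i^{*}} <_{\Tt} u$. Furthermore, the hypothesis on $P$ transfers verbatim to the shorter subpath $P' = w_{i^{*}} w_{i^{*}+1} \cdots w_\ell$, since every vertex of $P'$ is also a vertex of $P$; in particular $v <_{\Tt} w_{i^{*}}$ because $w_{i^{*}} \neq v$. Applying the induction hypothesis to $P'$ yields a directed path $Q'$ in $\Sk$ from $w_{i^{*}}$ to $v$ with $V(Q') \subseteq V(P')$, and prepending the arc $(u, w_{i^{*}})$ produces the required $Q$, whose vertex set is contained in $\{u\} \cup V(P') \subseteq V(P)$.

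I do not expect a genuine obstacle; the only mildly delicate point is pinpointing the right separator to isolate a single skeleton step, and once $\adh(y_0)$ is used, everything else is automatic. The separator property itself is the folklore fact for tree decompositions (a direct consequence of conditions \ref{p:amoeba} and \ref{p:edge}), and the bookkeeping of $<_{\Tt}$ only traces through the definition of $x(\cdot)$ from \cref{sec:tdec}.
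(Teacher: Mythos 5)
Your proposal is correct and follows essentially the same route as the paper: an induction on the length of $P$ that peels off a first arc $(u,w)$ with $w$ the first vertex of $P$ landing in $\adh(x(u))$, and then recurses on the suffix from $w$ to $v$. The only cosmetic difference is how that first arc is justified — you invoke the separator property of $\adh(x(u))$ directly, whereas the paper derives the same fact by an inner induction along the prefix of $P$ using \cref{lem:edge-anc} together with conditions \ref{p:amoeba} and \ref{p:edge}.
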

\begin{proof}
We proceed by induction on the length of path $P$. 
Let $w$ be the first (closest to~$u$) vertex on $P$ satisfying $w<_{\Tt} u$; such $w$ exists because $v$ satisfies the condition.

\begin{claim}\label{cl:uw-arc}
$(u,w)$ is an arc in $\Sk$.
\end{claim}
\begin{clproof}
Let $w'$ be the predecessor of $w$ on $P$.
We argue by induction that every vertex~$t$ on the prefix of~$P$ between $u$ and $w'$ satisfies $u\leq_{\Tt} t$.
This holds trivially for $t=u$. Supposing it holds for some vertex~$t$, we argue that it holds also for the successor~$t'$ of~$t$ on the prefix.
Indeed, we have $t\leq_{\Tt} t'$ or $t'\leq_{\Tt} t$ by \cref{lem:edge-anc}, implying that either $u\leq_{\Tt} t'$ or $u>_{\Tt} t'$, 
but the latter case is excluded by the choice of $w$, proving the induction step.
In particular, we infer that $u\leq_{\Tt} w'$, implying $w<_{\Tt} u\leq_{\Tt} w'$, which means that $x(w)<_T x(u)\leq_T x(w')$.

Since $ww'$ is an edge in $G$, assertion $x(w)<_T x(w')$ together with conditions~\ref{p:amoeba} and~\ref{p:edge} imply that $w$ is contained in all the bags of nodes on the unique path in~$T$ between 
$x(w)$ and $x(w')$. Since $x(w)<_T x(u)\leq_T x(w')$, we infer that $w$ is contained in the bag of both $x(u)$ and of its parent, which means that $w\in \adh(x(u))$.
As $u\in \mrg(x(u))$ by \cref{lem:margins-prt}, $(u,w)$ is an arc in $\Sk$, as claimed.
\end{clproof}

Since $w$ lies on $P$, by assumption we have either $w=v$ or $v<_{\Tt} w$.
In the former case we are immediately done, as we take $Q$ to be the path consisting only of the arc $(u,w)=(u,v)$.
In the latter case we may apply the induction assumption to $w$ and $v$ connected by the suffix of $P$ from $w$ to $v$. 
This yields a path $Q'$, which may be extended to a suitable path $Q$ by adding the edge $(u,v)$ at the front.
\end{proof}

\cref{lem:path-skeleton} suggests studying reachability in the skeleton.
We next show that if the considered tree decomposition has small adhesion, then every vertex reaches only a small number of vertices via short paths in the skeleton.
The argument essentially boils down to the combinatorial core of the proof that graphs of treewidth $k$ have weak $p$-coloring number $\binom{p+k}{k}$~\cite{GroheKRSS15}.

\setcounter{claim}{0}

\begin{lemma}\label{lem:skeleton-count}
Consider any vertex $u$ of $G$ and let $p\in \N$. Then there exist at most $\binom{p+k}{k}$
vertices of $G$ that are reachable by a directed path of length at 
most~$p$ from $u$ in $\Sk$.
\end{lemma}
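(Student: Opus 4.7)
The plan is to prove the lemma by an injective encoding argument that mirrors the proof of Grohe et al.~\cite{GroheKRSS15} for the weak coloring number of graphs with a tree decomposition of bounded width. The heart of the argument is to associate to each reachable vertex a distinct element of a set of cardinality $\binom{p+k}{k}$.

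The structural prerequisite, already essentially recorded in the paper, is that directed paths in $\Sk$ travel strictly upward in the tree order. Since each arc $(u_{i-1},u_i)$ satisfies $x(u_i)<_T x(u_{i-1})$, for any directed path $u=u_0\to u_1\to\cdots\to u_\ell$ in $\Sk$ the nodes $x(u_0),\ldots,x(u_\ell)$ are strictly decreasing in $T$, so they all lie on the unique path $y_0=r,y_1,\ldots,y_m=x(u)$ from the root of $T$ to $x(u)$. Writing $x(u_i)=y_{\delta_i}$ yields $m=\delta_0>\delta_1>\cdots>\delta_\ell\geq 0$, and for each $i\geq 1$ the vertex $u_i$ lies in $\adh(y_{\delta_{i-1}})$, a set of size at most $k$.

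The target set for the encoding is the collection of non-decreasing sequences of length $p$ over $\{1,\ldots,k+1\}$, of cardinality $\binom{p+k}{k}$ by stars-and-bars. I would set up a consistent \emph{slot assignment} for the adhesions along the root-to-$x(u)$ path $y_0,\ldots,y_m$: using a greedy left-to-right scan of the path, every vertex $w$ that enters an adhesion is assigned a slot from $\{1,\ldots,k\}$, and this slot is preserved across all consecutive adhesions that contain $w$ (this is feasible since $|\adh(y_j)|\leq k$). For each reachable vertex $v$, I then choose a canonical shortest directed path $u=u_0\to\cdots\to u_\ell=v$ using a deterministic tiebreaking rule (for instance, lex-minimal under the slot ordering), record the sequence of slots $s(u_1),\ldots,s(u_\ell)$, pad with $k+1$'s to length $p$, and sort the resulting sequence non-decreasingly to obtain the signature $\sigma(v)$.

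The main obstacle, and the combinatorial core of the proof, is verifying that $\sigma$ is injective. The difficulty is that a given non-decreasing sequence of slots could a priori arise from different canonical paths: different orderings of the same multiset of slots, or different depths $\delta_i$ on the $y$-path, could produce the same signature. Resolving this requires showing that the shortest-path condition together with the consistency of slot assignments across consecutive adhesions uniquely determines the whole path from its multiset of slots. The key lemma to prove is that given the signature, one can walk upward from $y_m=x(u)$ toward the root of $T$ and, using the canonical tiebreaking rule, unambiguously reconstruct each step $u_{i-1}\to u_i$: the slot-preservation property ensures that the identity of each $u_i$ is determined by its slot together with the topmost adhesion in which it first appears, and the shortest-path condition excludes ambiguities introduced by shortcuts. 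This inversion argument is the same in spirit as the one used in~\cite{GroheKRSS15}, now adapted to the setting where bags may be large but adhesions are bounded by $k$.
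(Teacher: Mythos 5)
Your approach is genuinely different from the paper's: the paper proves this lemma by induction on $p+k$, splitting the reachable set $R$ into a part $R_1$ below the $\leq_T$-minimal node $y$ among $\{x(w)\colon w\in\adh(x(u))\}$ and a part $R_2$ between $y$ and $x(u)$, showing that $R_1$ is reachable from a single vertex $a$ by paths of length at most $p-1$ and that $R_2$ lives in a restricted decomposition of adhesion at most $k-1$, and concluding via $\binom{p+k-1}{k}+\binom{p+k-1}{k-1}=\binom{p+k}{k}$. An injective-encoding proof into the $\binom{p+k}{k}$ non-decreasing sequences of length $p$ over $\{1,\ldots,k+1\}$ is a legitimate alternative strategy in principle, and your slot assignment along the root-to-$x(u)$ path is feasible (each adhesion has at most $k$ elements, so a first-fit scan gives a consistent slot in $\{1,\ldots,k\}$ to each vertex across the consecutive adhesions containing it).

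However, there is a genuine gap: the injectivity of your signature map $\sigma$ is exactly the content of the lemma, and you do not prove it --- you explicitly defer it as ``the key lemma to prove.'' Worse, the signature as defined appears too lossy for the reconstruction you sketch. After padding and \emph{sorting} the slot sequence, $\sigma(v)$ retains only the multiset of slots used by the canonical path to $v$; it records neither the order in which slots are visited nor the depths $\delta_1>\cdots>\delta_\ell$ at which the intermediate vertices sit on the $y$-path. Your proposed inversion relies on knowing, for each step, ``the topmost adhesion in which $u_i$ first appears,'' but that information is not present in a sorted multiset: two distinct reachable vertices can easily use the same multiset of slots along different descents of the $y$-path (e.g., the same slot value reused by different vertices at disjoint depth ranges), and nothing in the shortest-path condition or the tiebreaking rule is shown to exclude this. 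Until you either prove that the sorted multiset determines the canonical path, or enrich the signature (e.g., by encoding positional information in the non-decreasing sequence, in the spirit of how the Pascal identity is ``unrolled'') while keeping the count at $\binom{p+k}{k}$, the bound does not follow. As it stands, the proposal identifies the right target cardinality and a plausible framework, but the combinatorial core of the argument is missing.
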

\begin{proof}
  We proceed by induction on $p+k$, with base case $p=1$ or $k=1$.
  When $p=1$, we observe that the out-degrees in the skeleton are bounded by $k$, so there can be at most $k+1=\binom{k+1}{k}$ vertices reachable from $u$ by a path of length at most~$1$.
  When $k=1$, the skeleton $\Sk$ is a directed forest and every vertex reaches at most $p+1=\binom{p+1}{1}$ vertices by paths of length at most $p$.

  We proceed to the induction step.
  Let $x=x(u)$.
  By definition, the set of out-neighbors of $u$ in~$\Sk$ is exactly the adhesion set $\adh(x)$.
  If $\adh(x)=\emptyset$, then~$u$ has no out-neighbors and there is nothing to prove, so assume otherwise.
  For each $w\in \adh(x)$ we have $x(w)<_T x$, hence nodes $\{x(w)\colon w\in \adh(x)\}$ are pairwise comparable in the order $\leq_T$.
  Let $y$ be the $\leq_T$-minimal element of $\{x(w)\colon w\in \adh(x)\}$. 
  Note that $y<_T x$. Fix any $a\in\adh(x)$ with $y=x(a)$.

  Let $R$ be the set of vertices reachable from $u$ by a directed path of length at most~$p$ in $\Sk$; we need to prove that $|R|\leq \binom{p+k}{k}$.
  Note that whenever some $v\in R$ can be reached from $u$ by a directed path $P$ of length at most $p$ in $\Sk$, then all vertices on $P$ are contained in $R$; this is witnessed by the prefixes of $P$.
  Since arcs in the skeleton $\Sk$ point always to a vertex that is strictly smaller in the quasi-order $\leq_{\Tt}$, we have that $v<_{\Tt} u$ for each $v\in R\setminus \{u\}$;
  this in particular implies $x(v)\leq_T x$ for all $v\in R$.
  As $y<_T x$, we may partition $R$ into two subsets~$R_1$ and $R_2$ as follows:
  $$R_1=(R\cap\{v\colon x(v)<_T y\})\cup \{a\} \quad\textrm{and}\quad R_2=(R\cap\{v\colon y\leq_T x(v)\leq_T x\})\setminus \{a\}.$$
  Note that in particular $a\in R_1$ and $u\in R_2$, and by the choice of $y$ we have that~$a$ is the only out-neighbor of $u$ in $R_1$. We now analyze the interaction between $R_1$ and~$R_2$.
  
  \begin{claim}\label{cl:R2}
  In $\Sk$ there is no arc with tail in $R_1$ and head in $R_2$.
  \end{claim}
  \begin{clproof}
  Arcs in $\Sk$ always point to a vertex that is strictly smaller in the quasi-order~$\leq_\Tt$, but for each $s\in R_2$ and $t\in R_1$ we have $x(t)\leq_T y\leq_T x(s)$.
  \end{clproof}
  
  \begin{claim}\label{cl:R1}
  For every arc $(s,t)$ of $\Sk$ with $s\in R_2$ and $t\in R_1$, either $t=a$ or $(a,t)$ is an arc in $\Sk$.
  \end{claim}
  \begin{clproof}
  Supposing $t\neq a$ we have $x(t)<_T y$.
  As $(s,t)$ is an arc of $\Sk$, we have $t\in \adh(x(s))$, which implies $t\in \bag(x(s))$.
  As $s\in R_2$, we have $y\leq_T x(s)$, hence $x(t)<_T y\leq_T x(s)$.
  By condition~\ref{p:amoeba}, $t\in \bag(x(s))$ entails that $t$ belongs to every bag on the unique path in $T$ between $x(t)$ and $x(s)$, so in particular $t\in \adh(y)$.
  As $x(a)=y$, we find that $a\in \mrg(y)$ and $t\in \adh(y)$, hence $(a,t)$ is an arc in~$\Sk$.
  \end{clproof}
  
  \cref{cl:R2} and \cref{cl:R1} together imply the following.
  
  \begin{claim}\label{cl:summarize}
  Every vertex $v\in R_2$ is reachable from $u$ in $\Sk$ by a directed path of length at most $p$ whose vertices are all contained in $R_2$.
  Every vertex $v\in R_1$ is reachable from $a$ in $\Sk$ by a directed path of length at most $p-1$ whose vertices are all contained in $R_1$.
  \end{claim}
  \begin{clproof}
  For the first statement, by \cref{cl:R2} every path $P$ witnessing $v\in R$ has to be entirely contained in~$R_2$.
  For the second statement, suppose $P$ is a path witnessing $v\in R$.
  Let $s$ be the last vertex on $P$ that belongs to $R_1$ and let $t$ be its successor on~$P$; these vertices exist due to $u\in R_2$ and $v\in R_1$.
  If $t=a$, then we may take the suffix of~$P$ from $t$ to $v$.
  Otherwise, by \cref{cl:R1} we have that $(a,t)$ is an arc in $\Sk$.
  It then suffices to augment the suffix of $P$ from $t$ to~$v$ by adding the arc $(a,t)$ at the front.
  Observe that the path obtained in this manner has length at most $p-1$,
  because $t\in R_1\setminus \{a\}$ entails that $t$ is not an out-neighbor of $u$, which means that the prefix of~$P$ from $u$ to $t$ has length at least $2$.
  \end{clproof}
  
  By combining the induction assumption with the second statement of~\cref{cl:summarize} we infer that
  $$|R_1|\leq \binom{p-1+k}{k}.$$
  Observe now that for the graph $G[R_2]$ we may construct a tree decomposition $\Tt'$ from~$\Tt$ by removing all vertices outside of $R_2$ from all the bags, 
  and moreover removing all nodes not lying on the path from $x$ to~$y$ in~$T$; this is because $x(v)$ lies on this path for each $v\in R_2$.
  It is easy to see that the skeleton of $G[R_2]$ over~$\Tt'$ is equal to $\Sk[R_2]$.
  Further, observe that since $(u,a)$ is an arc of $\Sk$, we have $a\in \adh(x)\subseteq \bag(x)$, 
  which together with $a\in \bag(y)$ implies that in $\Tt$, the vertex $a$ is contained in the bags of all the nodes $z$ satisfying $y\leq_T z\leq_T x$.
  Since $a\notin R_2$, this implies that the adhesion of~$\Tt'$ is at most $k-1$.
  Then combining the induction assumption with the first statement of~\cref{cl:summarize} yields
  $$|R_2|\leq \binom{p+k-1}{k-1}.$$
  All in all, we have
  $$|R|=|R_1|+|R_2|\leq \binom{p+k-1}{k}+\binom{p+k-1}{k-1}=\binom{p+k}{k};$$
  this concludes the induction step.
\end{proof}

We may now use \cref{lem:skeleton-count} to find a coloring of the skeleton using a small number of colors so that all pairs of vertices connected by short paths receive different colors.

\begin{lemma}\label{lem:col-H}
  There is a coloring $\kappa\colon V(\Sk)\rightarrow C$ with a
  color set $C$ of size $\binom{p+k}{k}$ such that:
  \begin{itemize}
  \item if there exists a
  directed path of length at most $p$ from $u$ to $v$ in $\Sk$, then
  $\kappa(u)\neq \kappa(v)$; and 
  \item for every
  node $x\in V(T)$, all vertices of $\mrg(x)$ receive the same color under~$\kappa$.
  \end{itemize} 
  Moreover, given $G,\Tt,p$ on input, such a coloring can be computed in polynomial time.
\end{lemma}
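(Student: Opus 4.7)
The plan is to build $\kappa$ by a greedy top-down procedure on the tree $T$: I process the nodes $x\in V(T)$ in any linear extension of $\leq_T$ (for instance, BFS from the root), and at each node assign one color from $C$ to the entire margin $\mrg(x)$. Assigning a single color per margin makes the second required property automatic, since by \cref{lem:margins-prt} the margins partition $V(G)$.

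The key observation is that every vertex $u\in \mrg(x)$ has exactly the same out-neighborhood in $\Sk$, namely the adhesion set $\adh(x)$. Let $R_x$ denote the set of vertices reachable in $\Sk$ from $\adh(x)$ by directed paths of length at most $p-1$. Then for any fixed $u\in \mrg(x)$, the set of vertices reachable from $u$ by directed paths of length at most $p$ is precisely $\{u\}\cup R_x$. Applying \cref{lem:skeleton-count} to such a $u$, and using that arcs of $\Sk$ strictly decrease $\leq_{\Tt}$ (so $u\notin R_x$), we obtain $|R_x|\leq \binom{p+k}{k}-1$. By the same monotonicity, every $v\in R_x$ satisfies $x(v)<_T x$, so $v$ has already been colored at the moment we process $x$. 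Hence at most $\binom{p+k}{k}-1$ of the $\binom{p+k}{k}$ colors of $C$ appear on $R_x$, and we can pick any unused color and assign it to all of $\mrg(x)$.

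With this construction the first required property follows immediately: if $v$ is reachable from $u$ via a directed path of length at most $p$ in $\Sk$ and $v\neq u$, then $v\in R_{x(u)}$, so $\kappa(u)$ was chosen precisely to differ from $\kappa(v)$. Polynomial-time computability is straightforward: compute $\Sk$ from $G$ and $\Tt$, traverse the nodes of $T$ in BFS order, and for each $x$ determine $R_x$ by a depth-limited BFS in $\Sk$ starting from $\adh(x)$.

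The main subtlety I anticipate is that a naive application of \cref{lem:skeleton-count} separately to each vertex of $\mrg(x)$ would seem to force too many forbidden colors to handle the whole margin at once; this is circumvented by the shared-out-neighborhood observation, together with the fact that no two distinct vertices of $\mrg(x)$ can reach each other in $\Sk$, since they are $\leq_{\Tt}$-equivalent while arcs strictly decrease $\leq_{\Tt}$.
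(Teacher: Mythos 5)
Your proof is correct and is essentially the same argument as the paper's: both perform a greedy coloring in an order that processes all vertices reachable by short directed paths first, use \cref{lem:skeleton-count} to bound the forbidden color set by $\binom{p+k}{k}-1$, and exploit the fact that all vertices of a margin share the out-neighborhood $\adh(x)$ (the paper colors vertex-by-vertex along a topological order of the $p$-transitive closure of $\Sk$ and then observes that same-margin vertices automatically coincide, while you color margin-by-margin top-down in $T$, which amounts to the same procedure).
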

\begin{proof}
As observed before, the skeleton $\Sk$ can be computed in polynomial time. 
We first compute the $p$-transitive closure $\Sk^p$ of $\Sk$,
which is a directed graph on the same vertex set as $\Sk$ where we put an arc $(u,v)$ whenever there is a path in $\Sk$ from~$u$ to~$v$ of length at most $p$.
By \cref{lem:skeleton-count}, every vertex has out-degree at most $\binom{p+k}{k}-1$ in~$\Sk^p$; the $-1$ summand comes from counting the vertex itself among reachable ones in \cref{lem:skeleton-count}.
As $\Sk$ is acyclic, so is $\Sk^p$ as well.
Hence, we can compute any topological ordering of $\Sk^p$ and iterate along the ordering while coloring the vertices greedily using the color set $\{1,\ldots,\binom{p+k}{k}\}$.
Specifically, every vertex $u$ receives the smallest color that is not present among the out-neighbors of $u$, which were all colored in the previous iterations.
This yields a proper coloring $\kappa$ of the undirected graph underlying $\Sk^p$, which is also coloring of $\Sk$ with the sought properties.
Note here that all vertices residing in the same margin $\mrg(x)$ for some $x\in V(T)$ have exactly the same out-neighbors in $\Sk$, so they will receive the same color in the procedure.
\end{proof}

We are ready to prove \cref{lem:tree-decomps}. 

\begin{proof}[of \cref{lem:tree-decomps}]
  Fix any graph $G\in \Dd$ and let $\Tt$ be a tree decomposition of $G$ over $\Cc$ with adhesion at most $k$; by assumption, such $\Tt$ can be computed in polynomial time.
  Let $\Sk$ be the skeleton of $G$ over~$\Tt$ and let $\kappa\colon V(G)\to C$ be the coloring of $\Sk$ provided by \cref{lem:col-H}.
  As asserted, $|C|=\binom{p+k}{k}=\Oh(p^k)$ and~$\kappa$ can be computed in polynomial time.
  
  We now define a coloring $\lambda$ of $G$ as follows.
  By assumption, for every node~$x$ of~$T$, we may compute in polynomial time a $p$-centered coloring $\lambda_x$ of the torso~$\Torso(x)$, 
  where each coloring $\lambda_x$ uses the same color set~$D$ of size $\Oh(p^d)$.
  Then $\lambda\colon V(G)\to D$ is defined as 
  $$\lambda(u)=\lambda_{x(u)}(u)\qquad\textrm{for each }u\in V(G).$$
  In other words, we restrict each coloring $\lambda_x$ to $\mrg(x)$ and $\lambda$ is the union of all those restrictions.

  We finally define a coloring $\rho\colon V(G)\to C\times D$ as the product of $\kappa$ and $\lambda$, that is
  $$\rho(u)=(\kappa(u),\lambda(u))\qquad\textrm{for each }u\in V(G).$$

  Note that $\rho$ uses $\Oh(p^{d+k})$ colors, hence it suffices to prove that $\rho$ is a \mbox{$p$-centered} coloring of $G$.
  
  To this end, fix any connected subgraph $H$ of $G$.
  Since $H$ is connected, by conditions~\ref{p:amoeba} and~\ref{p:edge} 
  we infer that the set $\{x\colon \bag(x)\cap V(H)\neq \emptyset\}$ is connected in~$T$.
  Consequently, this set contains a unique $\leq_T$-minimal node; call it $z$.
  Note that $z\leq_T x(u)$ for each $u\in V(H)$.
  Since the bag of the parent of $z$ (provided it exists) is disjoint with $V(H)$, we infer that $V(H)\cap \bag(z)=V(H)\cap \mrg(z)$.
  Fix any vertex $v\in V(H)\cap \mrg(z)$ and let $\kappa(v)=c$.

  Suppose first that there exists a vertex $u\in V(H)$ with $x(u)\neq z$ and $\kappa(u)=c$.
  Let $P$ be any path in $H$ connecting $u$ and $v$ and let $v'$ be the first (closest to $u$) vertex on $P$ satisfying $x(v')=z$.
  Then we have $v,v'\in \mrg(z)$, so by the properties of $\kappa$ asserted by \cref{lem:col-H} we have that $\kappa(v')=\kappa(v)=c$.
  As $x(u)\neq z$, we have $u\neq v'$.
  
  Let $P'$ be the prefix of $P$ from $u$ to $v'$.
  By the choice of $z$ and of $v'$ we have that $v'<_\Tt w$ for each vertex~$w$ on $P'$ different from $v'$.
  Hence, we may apply Lemma~\ref{lem:path-skeleton} to~$u$ and $v'$ to infer that in $\Sk$ there is a directed path $Q$ leading from~$u$ to $v'$ and satisfying $V(Q)\subseteq V(P')\subseteq V(H)$.
  By the properties of $\kappa$ asserted by \cref{lem:col-H}, every $p+1$ consecutive vertices on $Q$ receive pairwise different colors under $\kappa$.
  Since $u\neq v'$ and $\kappa(u)=\kappa(v')=c$, we conclude that $Q$ has length at least $p+2$ and among the first $p+1$ vertices of $Q$ there are $p+1$ different colors present.
  But $V(Q)\subseteq V(H)$, so $H$ receives more than~$p$ different colors under $\kappa$, hence it also receives more than~$p$ different colors under $\rho$ and we are done.
  
  We are left with the case when the vertices of $H$ that receive color $c$ under~$\kappa$ are exactly the vertices of $\mrg(z)$.
  Let $H'$ be the subgraph of $\Torso(z)$ induced by $V(H)\cap \mrg(z)$.
  We claim that $H'$ is connected.
  For this, take any $a,b\in V(H)\cap \mrg(z)$ and let $R$ be a path in $H$ connecting $a$ and $b$.
  By the properties of tree decompositions, for every maximal infix of $R$ lying outside of $V(H)\cap \mrg(z)$, the two vertices immediately preceding and immediately succeeding this infix on $R$ 
  have to belong to the same adhesion set $\adh(z')$ for some child $z'$ of $z$.
  As $\adh(z')$ is turned into a clique in $\Torso(z)$, we may shortcut this infix by using the edge connecting the two vertices.
  By performing this operation for infix of $R$ as above, we turn $R$ into a path $R'$ in $H'$ connecting $a$ and $b$.
  
  Since $H'$ is connected, the vertices of $V(H')=V(H)\cap \mrg(z)$ either receive more than $p$ different colors in $\lambda_z$, 
  or some color appears exactly once in $H'$ under $\lambda_z$.
  Recall that $\lambda$ and $\lambda_z$ coincide on $\mrg(z)$, so the above alternative holds for $\lambda$ as well.
  In the former case, $H'$ receives more than $p$ different colors in $\lambda$, implying the same for $H$ and $\rho$.
  In the latter case, let $v_0$ be a vertex whose color under $\lambda$ is unique in $H'$.
  Then the color of $v_0$ under $\rho$ is unique in $H$: the colors of all other vertices of $V(H)\cap \mrg(z)$ differ on the second coordinate,
  while the colors of all vertices of $V(H)\setminus \mrg(z)$ differ on the first coordinate.
  Since $H$ was chosen arbitrarily, this concludes the proof.
\end{proof}

With a slight modification of the proof of 
\cref{lem:tree-decomps} we obtain 
the following bound for graphs of bounded treewidth. 

\begin{lemma}\label{lem:bnd-tw}
For all $k,p\in \N$, the class of graphs of treewidth at most
$k$ admits ptime computable $p$-centered colorings with
$\binom{p+k}{k}$ colors. 
\end{lemma}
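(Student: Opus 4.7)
The plan is to specialize the argument from \cref{lem:tree-decomps} by taking a tree decomposition whose margins are all singletons, which lets us drop the product with a torso coloring and land exactly on the bound $\binom{p+k}{k}$. Given a graph $G$ of treewidth at most $k$, I would first compute in polynomial time a tree decomposition of $G$ of width at most $k$ (e.g. via Bodlaender's algorithm, as $k$ is fixed), and then refine it into a \emph{smooth} tree decomposition $\Tt = (T, \bag)$ in which $|\mrg(x)| = 1$ for every $x$. The refinement splits every node $x$ with $|\mrg(x)| \geq 2$ into a chain of new nodes, each peeling off one vertex of the original margin; since $|\bag(x)| \leq k+1$, a short calculation shows the new bags and adhesion sets have sizes at most $k+1$ and $k$ respectively, so the refined decomposition still has width and adhesion at most $k$.

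Next, I would apply \cref{lem:col-H} directly to the skeleton $\Sk$ of $G$ over this smooth $\Tt$, producing in polynomial time a coloring $\kappa \colon V(G) \to C$ with $|C| = \binom{p+k}{k}$ that separates any two vertices linked by a directed path of length at most $p$ in $\Sk$. The main claim is that this $\kappa$ is itself already a $p$-centered coloring of $G$, so no extra product factor is required.

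To prove this, I would recycle the final portion of the proof of \cref{lem:tree-decomps}. Fix a connected subgraph $H$ of $G$ and let $z$ be the $\leq_T$-minimal node of $T$ with $\bag(z) \cap V(H) \neq \emptyset$. As before, $z$ is unique and the parent bag (if any) is disjoint from $V(H)$, whence $V(H) \cap \bag(z) = V(H) \cap \mrg(z) = \{v\}$ by smoothness. If no other vertex of $H$ receives color $\kappa(v)$, then $v$ witnesses the $p$-centered property on $H$. Otherwise, pick $u \in V(H) \setminus \{v\}$ with $\kappa(u) = \kappa(v)$ and take any $u$-to-$v$ path in $H$; \cref{lem:path-skeleton} yields a directed path $Q$ from $u$ to $v$ in $\Sk$ with $V(Q) \subseteq V(H)$. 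Since $\kappa(u) = \kappa(v)$ but $\kappa$ separates vertices at skeleton-distance at most $p$, $Q$ must have length strictly greater than $p$; hence its first $p+1$ vertices receive $p+1$ pairwise distinct colors, all in $V(H)$, so $H$ uses more than $p$ colors. The only real work is the refinement step and the observation that single-vertex margins make the $\lambda$ factor of \cref{lem:tree-decomps} redundant; no further obstacle is expected.
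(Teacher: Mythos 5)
Your proposal is correct and follows essentially the same route as the paper's own proof: normalize the width-$k$ decomposition so that margins are singletons, apply \cref{lem:col-H} to the skeleton, and reuse the tail of the argument of \cref{lem:tree-decomps} with the torso-coloring factor dropped. The only cosmetic difference is that the paper picks $u$ at minimum distance from $v$ and a shortest connecting path, which, as you correctly observe, is not needed since smoothness already guarantees $v<_{\Tt}w$ for every $w\neq v$ on any path of $H$ ending at $v$.
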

\begin{proof}
As shown by Bodlaender~\cite{bodlaender1996linear},
for every $k\in \N$, given a graph $G$ of treewidth at most~$k$
we may compute a tree decomposition~$\Tt=(T,\beta)$ of $G$ of width at most $k$ in linear time. We may assume that 
$|\mu(x)|\leq 1$ for all $x\in V(T)$, that is, each bag introduces
at most one new vertex: if there is $x\in V(T)$ with $\mu(x)=
\{v_1,\ldots, v_m\}$, then we can replace it by a path of 
new nodes $x_1,\ldots, x_m$ and $\beta(x_i)=\alpha(x)\cup \{v_1,\ldots, v_i\}$. It is easy to see that by iteratively applying these
modifications we obtain a tree decomposition with the desired property of the same width. 
  We may further assume that in $\Tt$ there are no two adjacent nodes with equal bags, as such two nodes can be contracted to one node with the same bag. Then $\Tt$ has adhesion at most $k$ and
  we may apply \cref{lem:col-H}. 
Let $\Sk$ be the skeleton of $G$ over~$\Tt$ and let $\kappa\colon V(G)\to C$ be the coloring of $\Sk$ provided by \cref{lem:col-H}.
As asserted, $|C|=\binom{p+k}{k}$ and~$\kappa$ can be 
computed in polynomial time. We claim that $\kappa$ is a 
$p$-centered coloring of $G$. 

  To see this, fix any connected subgraph $H$ of $G$.
  Since $H$ is connected, by conditions~\ref{p:amoeba} and~\ref{p:edge} 
  we infer that the set $\{x\colon \bag(x)\cap V(H)\neq \emptyset\}$ is connected in~$T$.
  Consequently, this set contains a unique $\leq_T$-minimal node; call it $z$. By our assumption that $|\mu(x)|=1$ for all $x\in V(T)$
  the quasi-order $\leq_T$ is a partial order and we have 
  $z <_T x(u)$ for each $u\in V(H)\setminus\{v\}$.
  Since the bag of the parent of $z$ (provided it exists) is disjoint with $V(H)$, we infer that $V(H)\cap \bag(z)=V(H)\cap \mrg(z)=\{v\}$
  for some vertex $v\in V(H)$. Note that this also implies
  $v<_\mathcal{T} u$ for all $u\in V(H)\setminus\{v\}$. Let 
  $\kappa(v)=c$. 
  
  Assume
  that there exists a vertex $u\in V(H)\setminus\{v\}$ with 
  $\kappa(u)=c$. Choose such~$u$ that has minimum distance
  to $v$. Let $P$ be a shortest path in $H$ connecting~$u$ and $v$. Then $P$ does not contain another
  vertex $w$ with $\kappa(w)=c$ (otherwise $u$ is not a vertex
  with minimum distance to $v$ and $\kappa(u)=c$). As 
  $v<_\mathcal{T} w$ for all $w\in V(P)\setminus\{v\}$ we may 
  apply \cref{lem:path-skeleton} to find a directed 
  path~$Q$ in $S$ leading from $u$ to $v$ and satisfying 
  $V(Q)\subseteq V(P)$. 
  By the properties of $\kappa$ asserted by \cref{lem:col-H}, every $p+1$ consecutive vertices on $Q$ receive pairwise different colors under $\kappa$.
  Since $u\neq v$ and $\kappa(u)=\kappa(v)=c$, we conclude that $Q$ has length at least $p+2$ and among the first $p+1$ vertices of $Q$ there are $p+1$ different colors present.
  But $V(Q)\subseteq V(H)$, so $H$ receives more than~$p$ different colors under $\kappa$.
  Since $H$ was chosen arbitrarily, this concludes the proof.
\end{proof}

%

\section{Planar graphs}\label{sec:planar}



%
\noindent In this section we establish the result for planar graphs. We first prove \cref{thm:planar-partition}, which we
repeat for convenience.
 
 \setcounter{claim}{0}

 \setcounter{aux}{\value{theorem}}
 \setcounter{theorem}{\value{planar}}
 
\begin{theorem}
  For every planar graph $G$ there exists a partition $\prt$ of $G$
  such that~$\prt$ is a family of geodesics in~$G$ and $G/\prt$ has treewidth at most $8$.
  Moreover, such a partition $\prt$ of $G$ together with a tree
  decomposition of $G/\prt$ of width at most $8$ can be computed in
  time $\Oh(n^2)$.
\end{theorem}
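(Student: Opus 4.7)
My plan is to construct $\prt$ from a BFS exploration of $G$, in the spirit of vertical-path decompositions that underlie product-structure-type theorems for planar graphs. First, I would assume that $G$ is connected (otherwise process each connected component independently and take the disjoint union of the resulting partitions, which does not hurt the treewidth bound on $G/\prt$) and, since adding edges to $G$ can only enlarge the quotient $G/\prt$ and shorten distances, that $G$ is a plane triangulation. I would then pick an arbitrary root $r$, run BFS from $r$ to obtain layers $L_0, L_1, L_2, \ldots$, and select a canonical BFS tree $T$ in which the parent $p(v)$ of each non-root vertex $v$ is the leftmost neighbor of $v$ at the previous layer in the rotation system of the embedding. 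By construction, every path in $T$ descending from $r$ is a geodesic in $G$.

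Next, I would define $\prt$ via a canonical-child rule. For each vertex $u$, designate at most one of its BFS children as \emph{canonical} using the embedding --- for instance, the child incident to the edge immediately clockwise after the parent edge $u\,p(u)$ in the rotation around $u$. Two vertices are placed in the same part of $\prt$ precisely when they are linked by a chain of canonical-child edges, so each part is a contiguous subpath of some root-to-leaf path in $T$, hence a geodesic in $G$. Checking that this is a valid partition into geodesics is routine.

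The heart of the argument is bounding the treewidth of $G/\prt$ by $8$. The strategy is to exploit the planar structure: in the triangulation $G$, every triangular face has its three vertices spread across at most three parts of $\prt$, so the quotient inherits a planar-like combinatorial structure that can be peeled off by following $T$ and the dual of the triangulation. I would construct a tree decomposition whose tree shape mirrors $T$, and whose bag at each node collects the constantly-many parts of $\prt$ currently ``active'' in the corresponding region of the embedding, with transitions across non-tree edges controlled by the canonical-child choice. I expect the main obstacle to be pinning down the specific constant $8$: a bounded-width decomposition should drop out of such a construction with moderate effort, but squeezing the bag size down to at most nine vertices requires a delicate case analysis of how the canonical-child choice, the BFS layering, and the rotation system interact at each triangular face and at each BFS non-tree edge. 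Once this is in hand, the $O(n^2)$ running time follows easily: BFS, the construction of $\prt$, and the construction of the tree decomposition via a traversal can all be executed in $O(n^2)$ time with straightforward data structures.
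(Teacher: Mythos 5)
Your high-level target --- a partition into vertical paths of a BFS tree whose quotient has bounded treewidth --- is indeed a true statement (it is essentially the planar product structure theorem of Dujmovi\'c et al., which postdates and builds on this very result), and the vertical paths are correctly observed to be geodesics. However, the proposal has a genuine gap exactly at the step that constitutes the entire content of the theorem: bounding the treewidth of $G/\prt$. You defer this to ``a delicate case analysis'' of how a fixed local rule (leftmost parent, canonical child chosen from the rotation system) interacts with faces and non-tree edges, but no such local rule is known to work, and there is a structural reason to doubt it. A bag of the decomposition must cover every edge of the quotient, i.e.\ every pair of parts joined by \emph{any} edge of $G$; a single triangular face constrains only three parts, and with a fixed canonical-child rule you have no control over which long, winding vertical paths end up adjacent across the many non-tree edges of the triangulation. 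In all known proofs (including the paper's and the later product-structure proofs) the choice of which paths to cut is \emph{not} local: one recursively finds, via Sperner's lemma applied to a partition of the current disk into three ``nearest boundary segment'' regions, a triangular face whose three vertices hang by geodesics onto the three boundary pieces; these three geodesics form the separator, the region splits into three subregions each bounded by at most six geodesics, and the invariant ``at most $6$ boundary geodesics plus $3$ new ones, hence bags of size $9$'' is what yields the constant $8$. Without this recursive selection mechanism your construction does not yield bounded quotient treewidth, let alone the constant $8$.

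A secondary but real issue is your reduction to triangulations. The theorem requires the parts to be geodesics in the original graph $G$, and adding edges \emph{shortens} distances: a BFS tree of the triangulation $G^{+}$ uses edges of $E(G^{+})-E(G)$, so its vertical paths are in general not even paths in $G$, and a shortest path in $G^{+}$ need not be a shortest path in $G$. The paper handles this by using the triangulation only for the topological part of the argument (faces, Sperner's lemma) while computing all distances and geodesics in $G$ itself; your sketch conflates the two. If you want to pursue the BFS-vertical-path route, you must run BFS in $G$, keep all geodesics in $G$, and replace the local canonical-child rule by the recursive tripod/Sperner argument --- at which point you have essentially reconstructed the paper's proof in different clothing.
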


\setcounter{theorem}{\value{aux}}

\begin{proof}
  We provide a proof of the existential statement and at the end we
  briefly discuss how it can be turned into a suitable algorithm with
  quadratic time complexity.

  We may assume that $G$ is connected, for otherwise we may apply the
  claim to every connected component of $G$ separately and take the
  union of the obtained partitions. Let us fix any plane embedding of
  $G$. We also fix any triangulation~$G^+$ of $G$. That is, $G^+$ is a
  plane supergraph of~$G$ with $V(G^+)=V(G)$ whose embedding extends
  that of $G$, and every face of $G^+$ is a triangle. Let~$\fout$ be
  the outer face of $G^+$.

  In the following, by a {\em{cycle}} $C$ is $G^+$ we mean a simple
  cycle, i.e., a subgraph of $G$ consisting of a sequence of pairwise
  different vertices $(v_1,\ldots,v_k)$ and edges
  $v_1v_2,v_2v_3,\ldots,v_{k-1}v_k,v_kv_1$ connecting them in order.
  The embedding of such a cycle~$C$ splits the plane into two regions,
  one bounded and one unbounded. By the subgraph {\em{enclosed}} by
  $C$, denoted $\enc(C)$, we mean the subgraph of $G^+$ consisting of
  all vertices and edges embedded into the closure of the bounded
  region. Note that $C$ itself is a subgraph of $\enc(C)$. Moreover,
  $\enc(C)$ is a plane graph whose outer face is~$C$ and every
  non-outer face is a triangle.

  A cycle $C$ in $G^+$ shall be called {\em{tight}} if $C$ can be
  partitioned into paths
  $P_1,P_2,\ldots,P_k$ for some $k\leq 6$ so that each $P_i$
  ($i\in \{1,\ldots,k\}$) is a geodesic in $G$; in particular, all
  edges of $P_i$ belong to $E(G)$. Note that thus a tight cycle can
  contain at most $6$ edges from $E(G^+)-E(G)$, as each such edge must
  connect endpoints of two cyclically consecutive paths among
  $P_1,\ldots,P_k$. The crux of the proof lies in the following claim;
  see \cref{fig:planar-recurrence} for a visualization.

  \begin{claim}\label{cl:recursion} Let $C$ be a tight cycle, let
    $P_1,\ldots,P_k$ be a partition of $C$ witnessing its tightness,
    and let $H=\enc(C)$. Then there exists a partition $\qrt$ of $H$
    such that: \begin{enumerate}[label=(S\arabic*),ref=(S\arabic*)]
    \item\label{cnd:geodesics} $\qrt$ is a family of geodesics in $G$
      containing $P_1,\ldots,P_k$; and \item\label{cnd:tw} $H/\qrt$
      admits a rooted tree decomposition of width at most $8$ in which
      $P_1,\ldots,P_k$ belong to the root bag. \end{enumerate}
  \end{claim}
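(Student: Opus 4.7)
The plan is to prove Claim~\ref{cl:recursion} by strong induction on $|V(H) \setminus V(C)|$, the number of interior vertices of the tight cycle $C$. In the base case this quantity is $0$, so $V(H) = V(C)$ and the partition $\qrt = \{P_1, \ldots, P_k\}$ together with the single-bag tree decomposition whose root bag is $\{P_1, \ldots, P_k\}$ does the job; the bag has size at most $6 \leq 9$ so \ref{cnd:tw} is satisfied, and \ref{cnd:geodesics} is immediate.

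For the inductive step, the core task is to exhibit a geodesic $Q$ in $G$ embedded inside $\enc(C)$ with both endpoints on $C$, such that $Q$ splits $H$ into two subregions $H^{(1)}, H^{(2)}$, each enclosed by a tight cycle with at most $6$ geodesic pieces. The ideal situation is when $Q$ runs between two corners of $C$ (vertices separating consecutive $P_i$'s), since then the $P_i$'s are left intact and each new bounding cycle consists of $Q$ together with a contiguous subfamily of $\{P_1, \ldots, P_k\}$; a pigeonhole argument on $k \leq 6$ ensures both halves contain at most $\lceil k/2 \rceil + 1 \leq 4$ pieces. When such a corner-to-corner $Q$ is unavailable one must instead allow $Q$ to meet the interior of some $P_i$, thereby splitting $P_i$ into two subpaths that are themselves geodesics in $G$; the accounting then has to be redone, but the bound $k \leq 6$ still leaves enough slack to keep each side tight with at most $6$ pieces.

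Once $Q$ is chosen, the recursion is straightforward: apply the inductive hypothesis to each $H^{(i)}$ with its tight bounding cycle to obtain partitions $\qrt^{(i)}$ into geodesics of $G$ together with rooted tree decompositions $\Tt^{(i)}$ of $H^{(i)} / \qrt^{(i)}$ of width at most $8$ whose root bags contain the pieces bounding $H^{(i)}$. I would then set $\qrt := \qrt^{(1)} \cup \qrt^{(2)}$ (identifying the shared geodesic $Q$ appearing on both sides) and form $\Tt$ by creating a fresh root whose bag is $\{P_1, \ldots, P_k, Q\}$, of size at most $7 \leq 9$, and attaching $\Tt^{(1)}, \Tt^{(2)}$ as subtrees. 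Property \ref{cnd:geodesics} is then immediate, and the verification of property \ref{cnd:tw} reduces to checking condition~\ref{p:amoeba}: since $Q$ sits in the new root bag and in the root bags of both $\Tt^{(i)}$, the nodes containing $Q$ form a connected subtree, and likewise for every other part of $\qrt$ since it resides in only one of $\Tt^{(1)}, \Tt^{(2)}$.

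The main obstacle will be producing the cutting geodesic $Q$ with the right properties. The delicate point is that $Q$ must be a geodesic in the entire graph $G$, not merely in~$H$: a priori, a shortest $G$-path between two vertices of $C$ could leave $\enc(C)$ and traverse the outside. I expect to resolve this via an uncrossing or exchange argument that exploits the tightness of $C$, namely that each $P_i$ is itself a geodesic in~$G$; informally, any purported shortcut through the outside of $C$ can be replaced by a portion of the $P_i$'s without increasing length, so some shortest connection between the prescribed endpoints lies inside $H$. A secondary obstacle is guaranteeing that both halves are strictly smaller, which follows as soon as $Q$ contains at least one interior vertex of $H$; this can be arranged by forcing $Q$ to pass through any preselected interior vertex of $\enc(C)$.
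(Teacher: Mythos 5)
Your proposal has a genuine gap at its core: the existence of the cutting geodesic $Q$. You need a path inside $\enc(C)$ with both endpoints on $C$ that is a geodesic \emph{in $G$}, and your proposed fix --- an exchange argument replacing any outside detour "by a portion of the $P_i$'s without increasing length" --- does not work. The boundary $C$ is a concatenation of up to six geodesics, not a single geodesic, so the arc of $C$ between two prescribed endpoints can be much longer than their distance in $G$; a shortest $x$--$y$ path may genuinely be forced outside $\enc(C)$ (e.g.\ a single exterior vertex adjacent to two far-apart corners of $C$ creates a length-$2$ shortcut that no interior path can match). Hence for some tight cycles there is \emph{no} corner-to-corner geodesic of $G$ lying in $H$ at all. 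Your secondary fix --- forcing $Q$ through a preselected interior vertex --- makes things worse rather than better: the concatenation of two shortest paths through a prescribed vertex is not a geodesic, so such a $Q$ would violate condition \ref{cnd:geodesics}. Finally, the tightness accounting for a two-way split is not "slack": if both endpoints of $Q$ land on the same $P_i$, one side of the split inherits two subpaths of $P_i$, all five remaining $P_j$'s, and $Q$, i.e.\ eight pieces, which breaks the invariant $k\leq 6$.

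The paper circumvents exactly these obstructions by never asking for a boundary-to-boundary geodesic. Instead, each interior vertex $v$ is assigned a shortest path $\Pi(v)$ in $G$ from $v$ to the \emph{set} $V(C)$; such a path is automatically a geodesic of $G$ and automatically stays inside $H$. Partitioning $C$ into three arcs $Q_1,Q_2,Q_3$ and the interior vertices into $B_1,B_2,B_3$ by nearest arc, Sperner's Lemma on the triangulation yields a triangular face $f$ whose three vertices lie in distinct $B_j$'s; the three paths $K_j=\Pi(v_j)$ together with the edges of $f$ cut $H$ into \emph{three} regions, each bounded by a tight cycle with at most six geodesic pieces (the two trimmed paths $L_j,L_{j+1}$ plus at most two pieces from each of two arcs). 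The three-way split with vertex-to-boundary geodesics is the essential idea your two-way chord cannot replace; the rest of your scheme (recursing, merging split boundary pieces back into the $P_i$'s, gluing the tree decompositions under a fresh root containing the boundary pieces and the cut paths) does match the paper's combination step. The induction measure also differs (you count interior vertices, the paper counts bounded faces), but since the face $f$ disappears from every subregion even when the $K_j$'s contain no interior vertices, the paper's measure is the one that actually guarantees progress.
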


  \cref{thm:planar-partition} then follows by applying
  \cref{cl:recursion} to the outer face $\fout$, regarded as a tight
  cycle enclosing the whole graph $G^+$. Indeed, $\fout$ is a
  triangle, so partitioning it into three single-vertex geodesics
  witnesses that it is tight.

  We prove \cref{cl:recursion} by induction with respect to the number
  of bounded faces of $H=\enc(C)$. For the base of the induction, if
  $H$ has one bounded face, then $H=C$ is in fact a triangular,
  bounded face of $G^+$ and we may set $\qrt=\{P_1,\ldots,P_k\}$.

  Suppose then that $H$ has more than one face. Since $C$ is a simple
  cycle in~$G^+$, it has length at least $3$. Therefore, we may assume
  that $k\geq 3$, for otherwise we may arbitrarily split one of
  geodesics $P_i$ into two or three subpaths so that $C$ is
  partitioned into three geodesics, apply the reasoning, and at the
  end merge back the split geodesic in the obtained partition of $H$.
  Now, as $3\leq k\leq 6$, we may partition~$C$ into
  three paths $Q_1,Q_2,Q_3$ so that each $Q_j$ is either equal to some~$P_i$, or is equal to the concatenation of some~$P_i$ and $P_{i+1}$
  together with the edge of $C$ connecting them. Note that paths~$Q_j$
  are not necessarily geodesics.

\begin{figure}[!ht]
  \centering
  \def\svgwidth{0.65\columnwidth}
  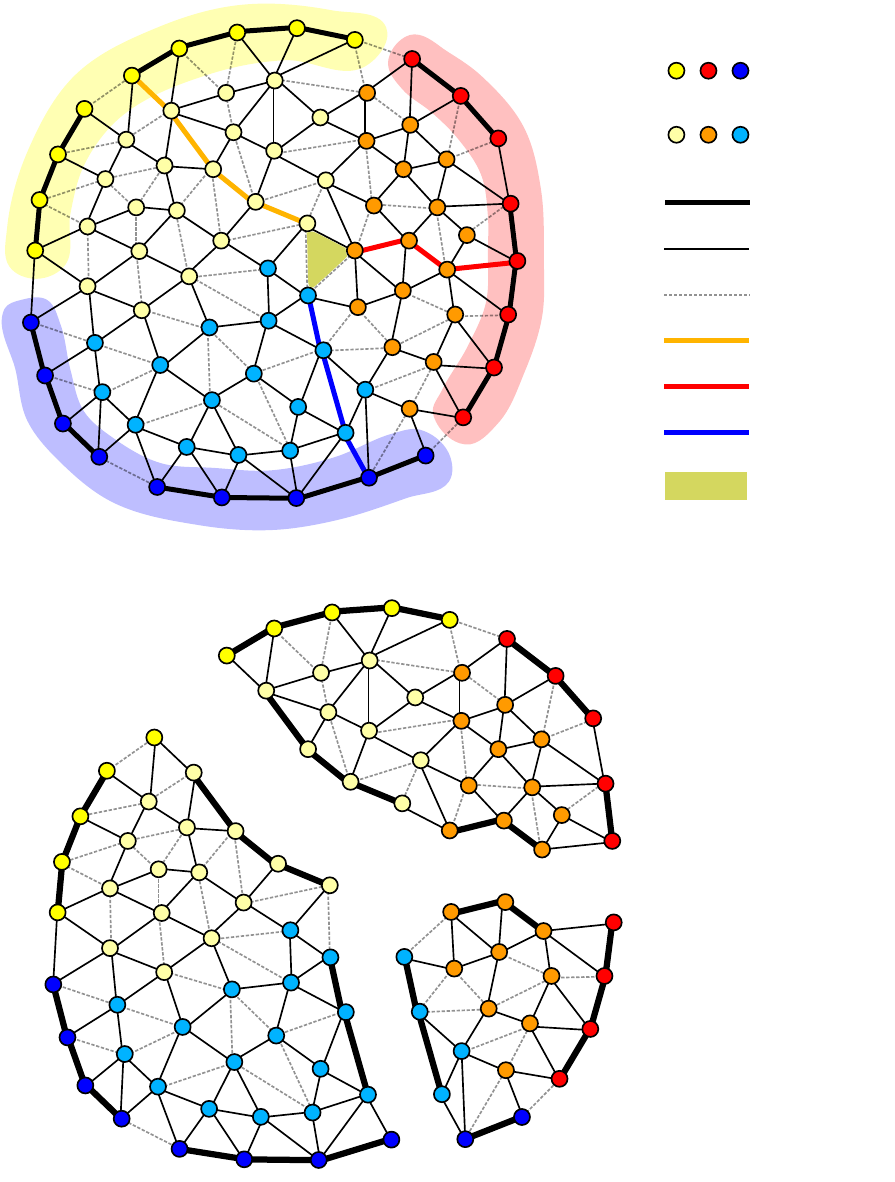 
  \caption{An example
                  situation in \cref{cl:recursion} and its proof. The
                  top panel depicts the construction of sets $A_j$ and
                  $B_j$, face $f$, and paths $K_j$. Note that $Q_1$ is
                  the concatenation of $P_1$ and $P_2$, $Q_2$ is the
                  concatenation of $P_3$ and $P_4$, and $Q_3$ is the
                  concatenation of $P_5$ and $P_6$. The bottom panel
                  depicts the graphs $H_j$, with cycles $C_j$
                  enclosing them, to which the claim is applied
                  inductively. Note that a partition of each cycle
                  $C_j$ into at most $6$ geodesics is depicted, which
                  witnesses that $C_j$ is
                  tight.}\label{fig:planar-recurrence}
\end{figure}

For $j\in \{1,2,3\}$, let $A_j=V(Q_j)$. Since $G$ is
connected, for every vertex $v$ of~$H$ there is some path in~$G$
connecting $v$ with $V(C)$. If we take the shortest such path, then it
is entirely contained in the graph~$H$. For $v\in V(H)$, let
$\pi(v)$ be the vertex of $V(C)$ that is the closest to~$v$ in $G$; in
case of ties, prefer a vertex belonging to $A_j$ with a smaller index
$j$, and among one $A_j$ break ties arbitrarily. Further, for each
$v\in V(H)$ fix $\Pi(v)$ to be any shortest path in $G$ connecting~$v$
with~$\pi(v)$; note that $\Pi(v)$ is a geodesic in $G$. For
$j\in \{1,2,3\}$, let $B_j\subseteq V(H)$ be the set of those vertices~$v$ of $H$ for which $\pi(v)\in A_j$; clearly, $\{B_1,B_2,B_3\}$ is a
partition of $V(H)$. Observe that $A_j\subseteq B_j$ and for every
vertex $v\in B_j$, the path~$\Pi(v)$ connects $v$ with~$A_j$, all its
vertices belong to $B_j$, and all its vertices apart from the endpoint
$\pi(v)$ do not lie on $C$.

Thus, we have partitioned the vertices of the disk-embedded graph $H$
into three parts $B_1,B_2,B_3$ so that $C$ --- the boundary of the
disk into which $H$ is embedded~--- is split into three nonempty
segments: one contained in $B_1$, one contained in $B_2$, and one
contained in $B_3$. All bounded faces of~$H$ are triangles. Hence, we
may apply Sperner's Lemma to $H$ to infer that there is a bounded face
$f$ of $H$ with vertices $v_1,v_2,v_3$ such that $v_j\in B_j$ for all
$j\in \{1,2,3\}$.

Let $K_j=\Pi(v_j)$, for $j\in \{1,2,3\}$. Observe that paths
$K_1,K_2,K_3$ are geodesics in $G$ and they are pairwise
vertex-disjoint, as each $K_j$ is entirely contained in~$B_j$.
Furthermore, the only vertex of $K_j$ that lies on $C$ is $\pi(v_j)$,
and moreover we have $\pi(v_j)\in A_j$. For $j\in \{1,2,3\}$ define
$C_j$ as the concatenation of: path $K_j$, edge $v_jv_{j+1}$ of the
face $f$, path $K_{j+1}$, and the subpath of $C$ between $\pi(v_j)$
and $\pi(v_{j+1})$ that is disjoint from $A_{j+2}$; here, indices
behave cyclically. From the asserted properties of $K_1,K_2,K_3$ it
follows that $C_j$ is a simple cycle, unless it degenerates to a
single edge~$v_jv_{j+1}$ traversed there and back in case $f$ shares
this edge with $C$. In the following we shall assume for simplicity
that all of $C_1,C_2,C_3$ are simple cycles; in case one of them
degenerates, it should be simply ignored in the analysis. Observe that
the disks bounded by $C_1,C_2,C_3$ are pairwise disjoint and if we
denote $H_j=\enc(C_j)$ for $j\in \{1,2,3\}$, then graphs~$H_j$ and~$H_{j+1}$ share only the path $K_{j+1}$. Moreover, each graph $H_j$
has strictly fewer bounded faces than $H$, since~$f$ is not a face of
any $H_j$.

Denote $L_j=K_j-\pi(v_j)$ for $j\in \{1,2,3\}$; in other words, $L_j$
is the path obtained from $K_j$ by removing the endpoint lying on $C$.
Observe that $L_j$ is a geodesic, unless it is empty in case
$v_j=\pi(v_j)$ lies on $C$. We now observe that each cycle $C_j$, for
$j\in \{1,2,3\}$, is tight. Indeed, $C_j$ can be partitioned into
geodesics $L_j$ and $L_{j+1}$ (provided they are not empty), a subpath
of $Q_j$, and a subgraph $Q_{j+1}$. By construction, path $Q_j$ can be
partitioned into one or two geodesics, so the same holds also for any
its subpath; similarly for any subpath of $Q_{j+1}$. We conclude that
$C_j$ can be partitioned into at most six geodesics: $L_j$ and
$L_{j+1}$ (provided they are not empty), one or two contained in
$Q_j$, and one or two contained in $Q_{j+1}$. This witnesses the
tightness of~$C_j$.

We now apply the induction hypothesis to each $C_j$ with the partition
witnessing its tightness as described above. This yields a suitable
partition $\qrt_j$ of~$H_j$ and tree decomposition $\Tt_j$ of
$H_j/\qrt_j$ of width at most $8$. Obtain a family~$\qrt_j'$ from the
partition~$\qrt_j$ as follows: for every path $R\in \qrt_j$ that is
contained in some geodesic $P_i$, for $i\in \{1,\ldots,k\}$, replace
$R$ with $P_i$. Note here that such paths $R$ have to be in the
partition of $C_j$ witnessing the tightness of $C_j$ and there can be
at most $4$ of them. Now let $$\qrt=\qrt_1'\cup \qrt_2'\cup \qrt_3'.$$
It follows readily from the construction that $\qrt$ is a partition of
$H$ into geodesics that contains all paths $P_i$ for
$i\in \{1,\ldots,k\}$; this yields condition~\ref{cnd:geodesics}.

For condition~\ref{cnd:tw}, consider a rooted tree decomposition $\Tt$
of $H/\qrt$ obtained as follows. Construct a root node $x$ with bag
$\beta(x)=\{P_1,\ldots,P_k\}\cup \{L_1,L_2,L_3\}$. Further, for each
$j\in \{1,2,3\}$ obtain $\Tt_j'$ from $\Tt_j$ by performing the same
replacement as before in every bag of $\Tt_j$: for every $R\in \qrt_j$
contained in a geodesic $P_i$, replace $R$ with $P_i$. Finally, attach
tree decompositions $\Tt_j'$ for $j\in \{1,2,3\}$ below $x$ by making
their roots into children of $x$. It can be easily seen that $\Tt$
obtained in this manner is indeed a tree decomposition of $H/\qrt$.
Moreover, we have $|\beta(x)|\leq 9$ and each decomposition $\Tt_j'$
has width at most $8$ by the induction hypothesis, so $\Tt$ has width
at most $8$ as well.

This finishes the proof of \cref{cl:recursion} and of the existential
part of \cref{thm:planar-partition}. The algorithmic statement
follows by turning the inductive proof into a recursive algorithm with
time complexity~$\Oh(n^2)$ in a straightforward way. Indeed, it is
easy to see that given $C$ as in \cref{cl:recursion}, the cycles
$C_1,C_2,C_3$ can be computed in linear time, and in the recursion we
investigate a linear number of recursive calls. 
\end{proof}

From \cref{thm:planar-partition}, \cref{lem:bnd-tw}, and
\cref{lem:geodesics-lift} we infer the result for planar
graphs.

\begin{theorem}\label{thm:planar} 
  The class of planar graphs admits ptime computable
  polynomial centered colorings of degree~$19$. 
\end{theorem}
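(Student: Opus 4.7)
The plan is simply to assemble the three preceding results into a single composition. Given a planar graph $G$ on $n$ vertices, I would first invoke \cref{thm:planar-partition} to compute in time $\Oh(n^2)$ a partition $\prt_G$ of $G$ into geodesics of $G$ together with a tree decomposition of width at most $8$ of the quotient graph $G/\prt_G$. In particular, every graph in the class $\Dd=\{G/\prt_G \colon G \text{ planar}\}$ has treewidth at most $8$.

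Next, I would apply \cref{lem:bnd-tw} with $k=8$ to the class $\Dd$: for each quotient graph $G/\prt_G$, we obtain in polynomial time a $p$-centered coloring with $\binom{p+8}{8}=\Oh(p^8)$ colors. Thus $\Dd$ admits ptime computable $p$-centered colorings with $f(p)=\binom{p+8}{8}$ colors.

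Finally, I would feed this into \cref{lem:geodesics-lift}, which requires precisely that the ambient class (here the planar graphs, which is minor-closed) admits a ptime computable partition of each of its graphs into geodesics such that the quotient class admits ptime computable $p$-centered colorings with $f(p)$ colors. The conclusion yields ptime computable $p$-centered colorings of $G$ with
\[
(p+1)(4p+1)^{2}\cdot f(p)^{2}=(p+1)(4p+1)^{2}\cdot \binom{p+8}{8}^{2}=\Oh(p^{1+2+16})=\Oh(p^{19})
\]
colors, exactly the degree claimed.

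There is no real obstacle: the entire argument is a bookkeeping composition of \cref{thm:planar-partition}, \cref{lem:bnd-tw}, and \cref{lem:geodesics-lift}, and the only thing to check is that the polynomial-time guarantees chain correctly, which is immediate since each of the three building blocks is explicitly ptime computable and their interfaces match.
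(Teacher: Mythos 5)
Your proposal is correct and is exactly the paper's argument: the paper derives \cref{thm:planar} in a single sentence as the composition of \cref{thm:planar-partition}, \cref{lem:bnd-tw} (with $k=8$), and \cref{lem:geodesics-lift}, and your degree count $(p+1)(4p+1)^2\binom{p+8}{8}^2=\Oh(p^{19})$ matches. Nothing is missing.
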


\section{Bounded genus graphs}\label{sec:genus}

\noindent In this section we lift the result to surface-embedded graphs.
By a {\em{surface}} we mean a compact, connected $2$-dimensional manifold $\Sigma$ without boundary.
An {\em{embedding}} of a graph $G$ in $\Sigma$ maps vertices of $G$ to distinct points in $\Sigma$ and edges of $G$ to pairwise non-crossing curves on $\Sigma$ connecting respective endpoints.
When we talk about a $\Sigma$-embedded graph, we implicitly identify the graph with its embedding in $\Sigma$.
For a $\Sigma$-embedded graph $G$, every connected component of $\Sigma-G$ is called a {\em{face}}.
The set of faces of $G$ is denoted by $F(G)$.
The embedding is {\em{proper}} if every face is homeomorphic to an open disk.

Recall that every surface $\Sigma$ has its {\em{Euler genus}} $g=g(\Sigma)$, which is an invariant for which the following holds: for every properly $\Sigma$-embedded connected graph~$G$, we have
$$|V(G)|-|E(G)|+|F(G)|=2-2g.$$

A subgraph $K$ of a properly $\Sigma$-embedded graph $G$ is called a {\em{cut-graph}} of $G$ if the topological space $\Sigma-K$ is homeomorphic to a disk.
The lift of our results from planar graphs to $\Sigma$-embeddable graphs is based on the following lemma, which was essentially proved by Erickson and Har-Peled in~\cite[Lemma 5.7]{EricksonH04}.
Since we need a slightly different phrasing, which puts focus on some properties of the construction that are implicit in~\cite{EricksonH04}, we provide our own proof.

\begin{lemma}\label{lem:genus-cutting}
Let $G$ be a connected graph properly embedded in a surface $\Sigma$ of Euler genus $g=g(\Sigma)$.
Then~$G$ contains a ptime computable 
cut-graph $K$ such that $K$ can be partitioned into at most $4g$ geodesics in~$G$.
\end{lemma}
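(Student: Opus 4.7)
The plan is to use the classical tree-cotree decomposition in the spirit of Erickson and Har-Peled~\cite{EricksonH04}. First I would fix an arbitrary vertex $v \in V(G)$ and compute a shortest-path tree $T$ of $G$ rooted at $v$, which is polynomial time (by BFS, since $G$ is unweighted). Next I would consider the dual graph $G^*$ of the embedding of $G$ in $\Sigma$ and compute a spanning tree $T^*$ of $G^*$ that uses only edges dual to non-tree edges $E(G)\setminus E(T)$; this is always possible, because contracting $T$ inside $G$ yields a connected graph on $\Sigma$ whose vertices are in bijection with the faces of $G$, so the non-tree edges form a connected spanning subgraph of $G^*$. Let $X$ be the set of edges of $G$ that are neither in $T$ nor dual to an edge of $T^*$. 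Euler's formula $|V(G)|-|E(G)|+|F(G)|=2-2g$, combined with $|E(T)|=|V(G)|-1$ and $|E(T^*)|=|F(G)|-1$, then yields $|X|=2g$.

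I would next define the candidate cut-graph $K$ as the union of $X$ together with, for each $e=x_ey_e\in X$, the unique paths in $T$ from $v$ to $x_e$ and from $v$ to $y_e$. The central topological claim, inherited from the tree-cotree framework, is that $\Sigma\setminus K$ is homeomorphic to an open disk: intuitively, $\Sigma\setminus K$ deformation-retracts onto a thickening of $T^*$ inside the surface, which is contractible because $T^*$ is a tree, so its two-dimensional thickening is a disk. This establishes that $K$ is a cut-graph of $G$.

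For the partition into geodesics, let $T_K$ denote the smallest subtree of $T$ containing $v$ together with all endpoints of edges of $X$; by construction $V(T_K)=V(K)$. Every vertex of $T_K$ that is neither $v$ nor an endpoint of some $X$-edge lies strictly between two such endpoints on a root-path of $T$, hence has at least two neighbors in $T_K$ and cannot be a leaf of $T_K$. Consequently $T_K$ has at most $2|X|=4g$ leaves. A standard heavy-path-style decomposition then partitions $V(T_K)$ into as many vertex-disjoint ancestor-to-descendant subpaths of $T$ as there are leaves of $T_K$; each such subpath, say from ancestor $a$ to descendant $b$, has length $d_G(v,b)-d_G(v,a)=d_G(a,b)$, hence is a geodesic in $G$. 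One further checks that no $X$-edge has both endpoints on a single ancestor-to-descendant subpath of $T$ (otherwise, comparing the tree-distances of its endpoints from $v$, it would either already be a tree edge or force two vertices with equal distance from $v$ to coincide), so each piece of the partition is an induced geodesic of $G$ as required by the partition definition.

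The main obstacle I anticipate is the topological claim that $\Sigma\setminus K$ is an open disk. Although this is a standard consequence of tree-cotree duality on surfaces, some care is required in the non-orientable case and in verifying that the chord edges of $X$ do not spoil the retraction onto $T^*$. Once this is established, polynomial-time computability of the whole construction follows from standard algorithms for BFS, computing surface duals, and tree-path decompositions.
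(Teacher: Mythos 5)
Your proposal follows essentially the same route as the paper's proof: a tree--cotree decomposition with a shortest-path (BFS) tree $T$, a dual spanning tree on the edges not dual to $T$, the $2g$ leftover edges $X$ counted via Euler's formula, and the cut-graph $K$ formed by $X$ together with the root paths to the endpoints of $X$, partitioned into at most $4g$ vertical (hence geodesic) subpaths. The one point you flag as the main obstacle --- that $\Sigma\setminus K$, rather than $\Sigma\setminus(T\cup X)$, is a disk --- is exactly where the paper supplies the missing detail: it first proves the claim for $\widetilde{K}=T\cup X$ by iteratively gluing faces along the dual spanning tree, and then observes that deleting a degree-$1$ vertex from a cut-graph preserves the cut-graph property, which handles the pruning from $\widetilde{K}$ down to $K$.
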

\begin{proof}
\hspace{-0.3pt}As mentioned above, this result was essentially proved in~\cite[Lemma~5.7]{EricksonH04} and we follow closely the reasoning described there.
Our presentation is based on the presentation of~\cite[Lemma~10.1]{PilipczukPSL13}, which is another rephrasing of the same combinatorial fact.

In the proof we will use the dual multigraph $G^\star$.
Recall that the vertex set of $G^\star$ is the face set of $G$, and for every edge $e$ of $G$ we put in $G^\star$ the {\em{dual edge}} $e^\star$ connecting the two faces of $G$ incident to $e$.
We may define a proper $\Sigma$-embedding of $G^\star$ by mapping every face $f\in F(G)$ to any fixed point inside it, and every dual edge $e^\star=ff'$ to a suitable chosen curve connecting the points assigned to $f,f'$
that is contained in $f\cup f'\cup e$ and crosses $e$ once. Also, clearly~$G^\star$ is connected.

Fix any vertex $u\in V(G)$.
Let $T$ be the tree of a breadth-first search from $u$ in $G$.
Since $G$ is connected, $T$ is a tree on the same vertex set as $G$.
Moreover, for every $v\in V(G)$ the unique path in $T$ between~$u$ and~$v$ is a shortest $u$-$v$ path in $G$; in particular, it is a geodesic
in $G$. Let $M=\{e^\star\colon e\in E(T)\}\subseteq E(G^\star)$ be the set of edges dual to the edges of $T$.

Consider now the graph $G^\star-M$ and observe that it is connected, because $T$ is acyclic.
Let~$S$ be any spanning tree of $G^\star-M$.
Denote $N=E(S)$. 
Clearly, $M$ and $N$ are disjoint subsets of $E(G^\star)$ and we have $|M|=|V(G)|-1$ and $|N|=|V(G^\star)|-1=|F(G)|-1$.
Consequently, if we define $X^\star=E(G^\star)-(M\cup N)$, then
$$|X^\star|=|E(G^\star)|-|M|-|N|=|E(G)|-|V(G)|-|F(G)|+2=2g,$$
where $g$ is the Euler genus of $\Sigma$.
Denote $X=\{e\colon e^\star\in X^\star\}$.

Define a subgraph $K$ of $G$ by taking the union of the edges of $X$ and the $u$-$v$ paths in $T$, for all vertices~$v$ that are endpoints of edges in $X$.
Since $|X|=2g$, there are at most $4g$ such vertices $v$, hence $K$ is the union of the $2g$ edges of~$X$ and at most $4g$ paths $P_1,\ldots,P_k$ in $T$, where each $P_i$ is a geodesic in $G$.
By considering paths $Q_i=P_i-(V(P_1)\cup \ldots \cup V(P_{i-1}))$ and removing all $Q_i$s that turn out to be empty, we find that $K$ can be partitioned 
into at most~$4g$ geodesics
in $G$. Therefore, it remains to argue that $K$ is a cut-graph of $G$.

Consider first the graph $\widetilde{K}$ obtained from $T$ by adding all edges of $X$.
We argue that $\widetilde{K}$ is a cut-graph of~$G$.
To see this, fix any face $f\in F(G)$ and consider~$S$ as a tree rooted at $f$.
Note that $\Sigma-\widetilde{K}$ can be obtained from $f$ by iteratively gluing faces of $F(G)-\{f\}$ along edges dual to the edges of $S$, in a top-down manner on $S$.
Each face of $f$ is homeomorphic to a disk and gluing two disks along a common segment of their boundaries yields a disk.
Thus, throughout the above process we maintain the invariant that the topological space glued so far is homeomorphic to a disk, yielding at the end that $\Sigma-\widetilde{K}$ is homeomorphic to a disk.

Next, observe that $K$ can be obtained from $\widetilde{K}$ by iteratively removing vertices of degree $1$.
Note the following claim: if $H$ is a cut-graph of $G$ and $w$ is a vertex of degree $1$ in $H$, then $H'=H-w$ is also a cut-graph of $G$.
Indeed, $\Sigma-H'$ is obtained from the disk $\Sigma-H$ by gluing together the two segments of its boundary corresponding to the two sides of the unique edge of $H$ incident to~$w$.
These two segments share an endpoint and are glued in opposite directions, so $\Sigma-H'$ remains homeomorphic to a disk.
By applying this claim iteratively starting with $\widetilde{K}$, we infer that $K$ is indeed a cut-graph of $G$.
\end{proof}

\cref{lem:genus-cutting} can be now used to lift \cref{thm:planar-partition} to graphs embeddable into a surface of fixed genus.
The proof is a technical lift of the proof of \cref{thm:planar-partition}, precomposed with cutting the surface using the cut graph provided by \cref{lem:genus-cutting}.

\begin{theorem}\label{thm:genus-partition}
  Let $\Sigma$ be a surface of Euler genus $g$.
  Then for every graph $G$ that can be embedded into $\Sigma$,
  there is a ptime computable 
  partition $\prt$ of $G$ and a subset $\Qq\subseteq \prt$ with $|\Qq|\leq 16g$ such that
  $\prt$ is a family of geodesics in $G$ and $(G/\prt)-\Qq$ has treewidth at most $8$.
\end{theorem}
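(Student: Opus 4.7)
My plan is to reduce Theorem~\ref{thm:genus-partition} to the planar case (Theorem~\ref{thm:planar-partition}) using the cut-graph produced by Lemma~\ref{lem:genus-cutting}. After preprocessing (assume $G$ is connected, and observe that if $G$ does not properly embed in $\Sigma$ then it embeds in a surface of strictly smaller Euler genus), I would apply Lemma~\ref{lem:genus-cutting} to obtain a cut-graph $K$ that decomposes into at most $4g$ geodesics $R_1,\ldots,R_s$ in $G$. These become the exceptional parts of the decomposition: each $R_i$ is installed as a part of $\prt$ and simultaneously placed in $\Qq$. Cutting $\Sigma$ along $K$ unfolds the surface into a disk $D$ in which the graph becomes planar; the boundary of $D$ traverses each edge of $K$ twice and naturally decomposes into at most $8g$ consecutive arcs, each of which is a subpath of some $R_i$ and hence still a geodesic in~$G$.

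Next I would replay the recursive argument of Claim~1 from the planar proof inside the disk $D$, with the initial call taking the boundary of $D$ partitioned into up to $8g$ geodesic arcs in place of the three geodesics bounding $\fout$. The recursive step is essentially unchanged: split the boundary arcs into three consecutive groups $Q_1,Q_2,Q_3$ of nearly equal sizes, triangulate the interior, apply Sperner's lemma to find an interior face $f=v_1v_2v_3$ with $v_j$ closest to $Q_j$, construct interior geodesic paths $L_1,L_2,L_3$ from $v_j$ to the nearest boundary vertex \emph{by BFS in the ambient graph $G$}, and recurse on the three resulting sub-disks. A direct counting shows that each sub-disk's boundary is split into at most $2\lceil k/3\rceil+2$ arcs, which strictly decreases until $k\leq 6$, at which point the argument proceeds verbatim as in the planar proof.

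The tree decomposition assembled from this recursion has, in each bag, the current sub-disk's boundary geodesics plus the three interior paths introduced at that level. Although bags near the root may contain $\Theta(g)$ boundary geodesics, each such geodesic is a subpath of some $R_i \in \Qq$ and therefore corresponds to a part in $\Qq$; removing $\Qq$ from each bag leaves at most $9$ parts per bag, exactly as in the planar case, yielding the treewidth bound of~$8$ on $(G/\prt)-\Qq$. Counting exceptional parts gives $|\Qq|\leq 4g$ if only the $R_i$'s are exceptional, well within the bound $16g$; the extra slack absorbs a finer refinement in which each boundary arc is treated as its own part of $\prt$ (so that up to $8g$ arcs are placed in $\Qq$ individually), which may simplify the tree-decomposition bookkeeping.

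The main technical obstacle I anticipate is aligning the metric of $G$ with the topological cutting. Concretely, when constructing the interior paths~$L_j$ as shortest paths in~$G$, one must ensure they stay inside the current sub-disk and do not cross $K$ or jump between distinct copies of the same $V(K)$-vertex appearing on the boundary of~$D$. This is handled by reusing the BFS tree underlying the proof of Lemma~\ref{lem:genus-cutting}, which provides a coherent global system of shortest paths; since all vertices of~$V(K)$ end up in~$\Qq$ anyway, their duplication under the cutting operation is invisible in the quotient $(G/\prt)-\Qq$ and causes no trouble in the final treewidth estimate.
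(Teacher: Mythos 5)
Your overall strategy is the same as the paper's: cut $\Sigma$ along the Erickson--Har-Peled cut-graph $K$, place the $\le 4g$ geodesics covering $V(K)$ into $\Qq$, unfold into a disc, and rerun the planar recursion with the boundary arcs coming from $K$ treated as ``free''. The final step (observing that the duplicated copies of $V(K)$ disappear in $(G/\prt)-\Qq$, so one only needs a partition of $G-V(K)$ whose quotient has bounded treewidth) also matches the paper. However, there is a genuine gap in how you control the recursion.

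The problem is your invariant. You track only the \emph{total} number of boundary arcs $k$ and claim it strictly decreases under $k\mapsto 2\lceil k/3\rceil+2$ until $k\le 6$. First, arithmetically this map has fixed points at $8$ and $10$ (e.g.\ $7\mapsto 8\mapsto 8$), so the count never drops to $6$ and you would not recover the width-$8$ bound. Second, and more seriously, the total arc count is the wrong quantity: what must be bounded by a constant is the number of boundary pieces that are \emph{not} subpaths of the cut-graph arcs, because only those survive in $(G/\prt)-\Qq$. Each level of the recursion introduces two fresh interior geodesics $L_j,L_{j+1}$ onto a child's boundary, and these are not subpaths of any $R_i$. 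If you split the boundary into three groups ``of nearly equal sizes'' measured by arc count, nothing prevents all previously accumulated interior geodesics from landing in the two groups inherited by one child; that child then carries $b+2$ non-free pieces where its parent carried $b$, and over the recursion (whose depth is governed by the number of faces, not by $\log g$) this grows without bound, destroying the constant treewidth bound. The paper's fix is precisely to redefine tightness so that a cycle has at most $6$ pieces that are genuine geodesics (plus arbitrarily many pieces that are subpaths of \emph{different} paths of $\Rr$), and to perform the three-way split so that each group contains at most $2$ of the genuine geodesics --- the exact analogue of the $2$-$2$-$2$ grouping in the planar proof. Without this separate invariant your bag-size claim ``removing $\Qq$ leaves at most $9$ parts'' does not follow. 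Two smaller inaccuracies: the boundary of the disc does not decompose into $8g$ arcs each lying in some $R_i$ --- it also contains copies of the up to $6g$ ``delimiting'' edges of $K$ not covered by any geodesic, and the correct arc count (obtained by deleting their $\le 12g$ copies) is $12g$; and your fallback of putting each boundary arc individually into $\Qq$ does not give a partition of $G$, since distinct arcs map to overlapping vertex sets (a vertex of $K$ has as many boundary copies as its degree in $K$). Finally, your proposed justification that the interior paths $L_j$ are geodesics in $G$ ``by reusing the BFS tree of Lemma~\ref{lem:genus-cutting}'' does not work as stated: that BFS tree is rooted at one fixed vertex and has nothing to do with distances to the boundary of the current sub-disc; the $L_j$ must be constructed as shortest paths to the current cycle, as in the planar proof.
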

\begin{proof}[of \cref{thm:genus-partition}]
We assume that $G$ can be properly embedded in~$\Sigma$, since otherwise $G$ is embeddable in a surface of smaller genus and we can apply the reasoning for this surface.

By \cref{lem:genus-cutting}, we can compute a cut-graph $K$ of $G$ and its partition $\Qq$ into at most $4g$ geodesics in $G$.
Consider a graph $\wh{G}$ obtained by {\em{cutting $G$ along $K$}} as follows; see Figure~\ref{fig:cutting} for reference.
Starting from $G$, first duplicate every edge~$e$ of $K$ and embed the two copies of $e$ in a small neighborhood of the original embedding, thus creating a very thin face of length $2$ incident to both copies.
Next, for every vertex $u$ of~$K$ scan the edges of~$K$ incident to $u$ in the cyclic order around $u$ in the embedding; we note that we do not assume here that $\Sigma$ is orientable, 
as reversing the order yields the same construction.
For every pair of consecutive edges $e,e'$ in the order, create a copy of~$u$ and make it incident to one copy of $e$, one copy of~$e'$, and all the edges of $G$ lying between $e$ and $e'$ in the cyclic order, 
as in Figure~\ref{fig:cutting}. The original vertex~$u$ is removed and copies are shifted within a small neighborhood of the original placement of $u$ as in Figure~\ref{fig:cutting}.

Finally, remove from $\Sigma$ the space that is not contained in the (slightly shifted) faces of $G$; this space is depicted in grey in the last panel of Figure~\ref{fig:cutting}.
Since $K$ was a cut-graph, the obtained topological space is homeomorphic to a closed disc and the graph $\wh{G}$ is embedded into it.
Moreover, the boundary of $\wh{G}$ is a simple cycle $C_0$ that contains two copies of every edge of $K$ and as many copies of every vertex of $K$ as its degree in $K$.
Let $\pi\colon V(\wh{G})\cup E(\wh{G})\to V(G)\cup E(G)$ be the mapping sending every vertex and edge of $\wh{G}$ to its origin in $G$.

\begin{figure}[!ht]
  \centering
  \includegraphics[width=0.6\textwidth]{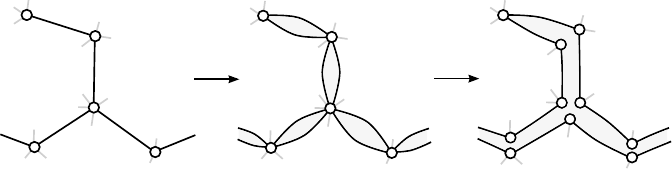}
  \caption{Cutting $G$ along $K$. The edges of $K$ are depicted in black, the remaining edges of $G$ are in grey. The grey area in the last panel is the part of the surface that gets removed.}\label{fig:cutting}
\end{figure}

The graph $\wh{G}$ is often called the {\em{polygonal schema}} of $\Sigma$, where $\Sigma$ is treated as a $2$-dimensional cell complex consisting of faces of $G$ glued along the edges of $G$.
We remark that in~\cite{EricksonH04}, Erickson and Har-Peled explain a different, equivalent construction of $\wh{G}$: take the faces of $G$, considered as closed discs, 
and instead of gluing them along all the edges of $G$ to obtain surface $\Sigma$, we glue them along the edges of $E(G)\setminus E(K)$ to obtain the disc $\Sigma-K$.
The graph obtained in this manner is $\wh{G}$.

Call an edge $e$ of $K$ {\em{delimiting}} if $e$ is not contained in any geodesic from~$\Qq$.
In the proof of \cref{lem:genus-cutting} we have argued that $K$ is a tree with $2g$ edges added, 

\noindent thus $|E(K)|=|V(K)|-1+2g$.
Since $|\Qq|\leq 4g$ and every geodesic $Q\in \Qq$ satisfies $|E(Q)|=|V(Q)|-1$, we have
\begin{align*}
|E(K)|-\sum_{Q\in \Qq} |E(Q)|&=|E(K)|-\sum_{Q\in \Qq} |V(Q)| + |\Qq|=|V(K)|-|E(K)|+|\Qq|\\
& =2g-1+|\Qq|\leq 6g.
\end{align*}
Hence, in $K$ there is at most $6g$ delimiting edges, so on $C_0$ there are at most $12g$ copies of delimiting edges.
Let $\Rr$ be the partition of $C_0$ into paths obtained by removing from $C_0$ all copies of all delimiting edges and taking all the obtained connected components as parts of $\Rr$.
Then $|\Rr|\leq 12g$ and every path in $\Rr$ is mapped under $\pi$ to a subpath of a path in~$\Qq$.

We now apply the reasoning leading to \cref{cl:recursion} from the proof of \cref{thm:planar-partition}.
First, we triangulate~$\wh{G}$, obtaining a triangulated, disc-embedded graph $\wh{G}^+$ with boundary being a simple cycle $C_0$.
Next, we redefine the notion of a tight cycle as follows: a cycle $C$ in $\wh{G}^+$ is {\em{tight}} if it admits a partition into paths in~$\wh{G}$, 
out of which all but at most $6$ are subpaths of different paths from $\Rr$, and the remaining at most $6$ paths are geodesics in $G$.
One can then readily verify that with the definition of tightness amended in this way, the inductive proof of \cref{cl:recursion} works just as before.
Here are the main differences:
\begin{itemize}
\item When defining paths $Q_1,Q_2,Q_3$, instead of requiring that they consist of at most two geodesics on $C$, 
      they are now a concatenation of an arbitrary number of paths from the partition witnessing tightness of $C$,
      however all but at most $2$ of these paths have to be subpaths of paths from $\Rr$.
\item When splitting the graph along paths $K_1,K_2,K_3$, we observe that $L_1,L_2,L_3$ (paths $K_1,K_2,K_3$ trimmed by removing the vertex lying on $C$) are geodesics in $G$, 
      as $K_1,K_2,K_3$ are defined as shortest paths in $\wh{G}$ from the face $f$ to the cycle $C$.
\end{itemize}
We leave verifying the straighforward details to the reader.

All in all, the above reasoning yields a partition $\wh{\prt}$ of $\wh{G}$ into paths such that $\Rr\subseteq \wh{\prt}$, all paths in $\wh{\prt}\setminus \Rr$ are geodesics in $G$, and $\wh{G}/\wh{\prt}$ 
admits a tree decomposition where every bag contains at most $9$ paths from $\wh{\prt}\setminus \Rr$.
Note now that since $\bigcup_{R\in \Rr} \pi(V(R)) = V(K)$, we have that $\wh{\prt}\setminus \Rr$ is a partition of $G-V(K)$ such that $(G-V(K))/(\wh{\prt}\setminus \Rr)$ has treewidth at most $8$.
Since $\Qq$ consists of geodesics in $G$ and $\bigcup_{Q\in \Qq} V(P) = V(K)$, we may output $\Qq$ and the partition $\prt=(\wh{\prt}\setminus \Rr)\cup \Qq$ of~$G$.
\end{proof}

We note that to prove \cref{thm:genus-partition}, one cannot just remove the cut-graph $K$, apply the planar case (\cref{thm:planar-partition}), and take the union of the resulting partition
and the partition of $K$ into $\Oh(g)$ geodesics. This is because the obtained geodesics would be geodesics in $G-V(K)$, and not in $G$.

From~\cref{thm:genus-partition} we may infer the result for graphs embedded into a fixed surface.

 \setcounter{aux}{\value{theorem}}
 \setcounter{theorem}{\value{genus}}
 
\begin{theorem}\label{thm:genus}
For every surface $\Sigma$, the class of graphs embeddable in $\Sigma$ admits polynomial time computable polynomial centered colorings of degree $19$.
More precisely, if the Euler genus of $\Sigma$ is $g$, then the obtained $p$-centered coloring uses $\Oh(g^2p^3+p^{19})$ colors.
\end{theorem}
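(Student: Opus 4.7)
The plan is to assemble the structural result of Theorem~\ref{thm:genus-partition} with the three lifting tools from Section~\ref{sec:prelims}, namely Lemma~\ref{lem:bounded-radius}, Lemma~\ref{lem:prt-lift}, and Lemma~\ref{lem:bnd-tw}. Since the class of graphs embeddable in $\Sigma$ is minor-closed, Lemma~\ref{lem:bounded-radius} reduces the whole problem to exhibiting, for every connected $\Sigma$-embeddable graph $G$ of radius at most $2p$, a ptime computable $p$-centered coloring of $G$ with $\Oh(p^9+gp)$ colors. Feeding this function into Lemma~\ref{lem:bounded-radius} then squares the count and multiplies by $p+1$, producing $(p+1)\cdot\Oh(p^9+gp)^2=\Oh(p^{19}+g^2p^3+gp^{11})$; the cross term is harmless because $2gp^{11}\le g^2p^3+p^{19}$ by AM--GM, so the final bound $\Oh(p^{19}+g^2p^3)$ claimed in the theorem is obtained.

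To handle a bounded-radius $G$, I first invoke Theorem~\ref{thm:genus-partition} to compute in polynomial time a partition $\prt$ of $G$ into geodesics, together with a subset $\Qq\subseteq\prt$ with $|\Qq|\le 16g$, such that the graph $H:=(G/\prt)-\Qq$ has treewidth at most $8$. Lemma~\ref{lem:bnd-tw} then gives a ptime computable $p$-centered coloring of $H$ with $\binom{p+8}{8}=\Oh(p^8)$ colors. I extend this coloring to a coloring $\kappa$ of the whole quotient $G/\prt$ by assigning to every node of $\Qq$ its own, pairwise distinct fresh color, spending a further $16g$ colors. The resulting $\kappa$ is $p$-centered on $G/\prt$: any connected subgraph $H'$ of $G/\prt$ either avoids $\Qq$, in which case the property is inherited directly from the coloring of $H$, or contains some vertex of $\Qq$, which by construction has a color that appears exactly once in $H'$. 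Thus $G/\prt$ admits a ptime computable $p$-centered coloring using $\Oh(p^8+g)$ colors.

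The last step is to push the coloring from the quotient back to $G$ via Lemma~\ref{lem:prt-lift}. Because $G$ has radius at most $2p$, every geodesic of $G$, and hence every part of $\prt$, has at most $4p+1$ vertices. Applying Lemma~\ref{lem:prt-lift} with $q=4p+1$ and $f(p)=\Oh(p^8+g)$ yields a ptime computable $p$-centered coloring of $G$ using $(4p+1)\cdot\Oh(p^8+g)=\Oh(p^9+gp)$ colors, which is exactly the bound required to feed into Lemma~\ref{lem:bounded-radius}.

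I do not expect any real obstacle here; the construction is a direct assembly of results already in hand. The only slightly delicate point, and the one I would be careful to verify, is the simple but important observation that a $p$-centered coloring of $(G/\prt)-\Qq$ can be extended to a $p$-centered coloring of $G/\prt$ by using fresh individual colors on $\Qq$. Everything else, including the bookkeeping of the polynomial and the genus-dependent summand, is routine.
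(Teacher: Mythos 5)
Your proposal is correct and follows essentially the same route as the paper: reduce to radius at most $2p$ via Lemma~\ref{lem:bounded-radius}, apply Theorem~\ref{thm:genus-partition}, color the treewidth-$8$ part via Lemma~\ref{lem:bnd-tw}, and lift through the partition via Lemma~\ref{lem:prt-lift}. The only (immaterial) difference is that you assign the fresh colors to the $\le 16g$ quotient nodes of $\Qq$ before lifting, whereas the paper lifts only the partition $\prt-\Qq$ of $G-\bigcup_{Q\in\Qq}V(Q)$ and then gives fresh colors to the $\Oh(gp)$ vertices of $\bigcup_{Q\in\Qq}V(Q)$ individually; both orderings yield the same $\Oh(p^9+gp)$ bound and your justification of the extension step is sound.
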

\setcounter{theorem}{\value{aux}}
\begin{proof}
By Lemma~\ref{lem:bounded-radius}, it suffices to prove that every connected graph $G$ embeddable in~$\Sigma$ of radius at most~$2p$ admits a $p$-centered coloring with $\Oh(gp+p^9)$ colors.
Apply \cref{thm:genus-partition} to $G$, yielding a partition $\prt$ of $G$ into geodesics and $\Qq\subseteq \prt$ with $|\Qq|\leq 16g$ such that $(G/\prt)-\Qq$ has treewidth at most $8$.
Since a geodesic in a graph of radius at most $2p$ contains at most $4p+1$ vertices, we have that each geodesic in $\prt$ involves at most $4p+1$ vertices
and in particular the total number of vertices involved in geodesics from $\Qq$ is at most $16g(4p+1)=\Oh(gp)$.
By \cref{lem:bnd-tw}, the graph $(G/\prt)-\Qq$ admits a ptime computable $p$-centered coloring with $\Oh(p^8)$ colors.
Applying \cref{lem:prt-lift} to the graph $G'=G-\bigcup_{Q\in \Qq} V(Q)$ and its partition $\prt\setminus \Qq$, we obtain a ptime computable $p$-centered coloring of~$G'$ with $\Oh(p^{19})$ colors.
It now remains to extend this coloring to~$G$ by assigning each of the $\Oh(gp)$ vertices of $\bigcup_{Q\in \Qq} V(Q)$ a fresh, individual color.
\end{proof}

\vspace{-3mm}
\section{Nearly embeddable graphs}\label{sec:nearly-embeddable}

\vspace{-1mm}
\noindent We now move to nearly embeddable graphs.
Roughly saying, a graph $G$ is {\em{$(a,q,w,g)$-nearly embeddable}} if it is embeddable into a surface of Euler genus $g$ modulo at most $a$ {\em{apices}} and at most $q$ {\em{vortices}} of width at most $w$ each. 
This is formalized next, following the definitional layer of Grohe~\cite{grohe2003local}.

For two graphs~$G$ and $H$, by $G\cup H$ we denote the graph with vertex set $V(G)\cup V(H)$ and edge set $E(G)\cup E(H)$; note that this makes sense also when~$G$ and $H$ share vertices or edges.
A {\em{path decomposition}} is a tree decomposition where the underlying tree is a path.
A {\em{boundaried surface}}~$\Sigma$ is a 2-dimensional compact manifold with boundary homeomorphic to $q$ copies of $S^1$, for some $q\in \N$, which shall be called the {\em{boundary cycles}}.
The {\em{Euler genus}} of such a boundaried surface $\Sigma$ is the Euler genus of the closed surface obtained from $\Sigma$ by gluing a disc along each boundary cycle.

\begin{definition}\label{def:nearly-embeddable}
A graph $G$ is {\em{$(a,q,w,g)$-nearly embeddable}} 
if there is a vertex subset $A$ with $|A|\leq a$, (possibly empty) subgraphs $G_0,\ldots, G_q$ of $G$, and a boundaried surface~$\Sigma$ of genus $g$ with $q$ boundary cycles $C^1,\ldots,C^q$ 
such that the following conditions hold:
\begin{itemize}
\item We have $G-A=G_0\cup G_1\cup \ldots\cup G_q$.
\item Graphs $G_0,G_1,\ldots,G_q$ have disjoint edge sets, and graphs $G_1,\ldots,G_q$ are pairwise disjoint.
\item Graph $G_0$ has an embedding into $\Sigma$ such that all vertices of $V(G_0)\cap V(G_i)$ are embedded on~$C^i$, for $1\leq i\leq q$.
\item For $1\leq i\leq q$, let $m_i=|V(G_0)\cap V(G_i)|$ and let $u^i_1,u^i_2,\ldots,u^i_{m_i}$ be the vertices of $V(G_0)\cap V(G_i)$ in the order of appearance on the cycle $C^i$.
Then $G_i$ has a path decomposition $\Tt_i=(T_i,\bag_i)$ of width at most $w$, where $T_i$ is a path $(x^i_1,\ldots,x^i_{m_i})$ and $u^i_j\in \bag(x^i_j)$ for all $1\leq j\leq m_i$.
\end{itemize}
\end{definition}

The vertices of $A$ in the above definition are called \emph{apices}, the subgraphs $G_1,\ldots, G_q$ are called \emph{vortices}, and $G_0$ is called the \emph{skeleton graph}.
We now lift the results to nearly embeddable graphs.


\setcounter{claim}{0}

\begin{theorem}\label{thm:nearly-embeddable}
For any fixed $a,q,w,g\in \N$, the class of $(a,q,w,g)$-nearly embeddable graphs admits polynomial centered colorings of degree $\Oh(gq\cdot w^{q})$.
Moreover, these centered colorings are ptime computable assuming the input graph is given together with a decomposition as in \cref{def:nearly-embeddable}.
\end{theorem}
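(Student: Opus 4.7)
The plan is to reduce the theorem to the already-proved $\Sigma$-embeddable case (Theorem~\ref{thm:genus}) by successively removing the apex vertices and absorbing the vortices using the lifting tools from Section~\ref{sec:prelims}.

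First, I would apply Lemma~\ref{lem:bounded-radius} to reduce to connected graphs of radius at most~$2p$; the BFS-based layering construction from that lemma preserves $(a,q,w,g)$-nearly embeddability with essentially unchanged parameters, since contracting a connected subset of low-numbered layers only modifies the skeleton graph $G_0$ while keeping the vortex and apex structure intact. Next, I would reserve $a$ fresh colors for the apex vertices in $A$, giving each its own unique color; the remaining task is to color $G'=G-A=G_0\cup G_1\cup\cdots\cup G_q$ using only the embedded skeleton and the vortices.

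The crux of the argument is to construct a tree decomposition $\Tt$ of $G'$ with the following properties: the root torso is a mild enlargement of the skeleton $G_0$ that is still embeddable in a surface $\Sigma'$ of Euler genus $g+\Oh(qw^2)$, while each vortex $G_i$ hangs below the root as a sub-decomposition of adhesion $\Oh(w)$ inherited from the path decomposition $\Tt_i$. The root torso is obtained from $G_0$ by, for each vortex $G_i$, locally adding a small patch near the boundary cycle $C^i$ that encodes the interface between $G_0$ and the attachment bags of $\Tt_i$; since each such patch can be drawn in a disc neighborhood of $C^i$, the genus increases by only $\Oh(w^2)$ per vortex. With $\Tt$ in hand, Theorem~\ref{thm:genus} provides a polynomial $p$-centered coloring of the root torso, and Lemma~\ref{lem:tree-decomps} (together with Lemma~\ref{lem:bnd-tw} applied inside each vortex) then propagates the coloring through the tree. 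Iterating the absorption once per vortex --- each iteration combining the current coloring with a path-decomposition coloring of the next vortex via the partition lift (Lemma~\ref{lem:prt-lift}) --- accumulates multiplicatively, yielding the stated degree bound $\Oh(gq\cdot w^q)$.

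The main obstacle I anticipate is compatibly integrating each vortex with the skeleton in the tree decomposition while keeping the adhesion small. Condition~\ref{p:amoeba} forces every boundary vertex $u^i_j$ shared between $G_0$ and $G_i$ to appear in the bag of every node on the path from the root (whose bag must contain $V(G_0)\supseteq\{u^i_j\}$) down to any vortex node containing $u^i_j$. A naive attachment of $\Tt_i$ below the root would therefore force the adhesion of the attachment to grow with the length of $C^i$, which is unacceptable. Overcoming this will require a careful restructuring of each path decomposition $\Tt_i$, for instance into a balanced binary tree decomposition with small separator bags, combined with slicing the vortex into short segments and attaching each segment individually via its own small-adhesion gadget while locally modifying the genus of the root torso to accommodate the multiple attachment points.
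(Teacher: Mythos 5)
The central step of your plan --- building a tree decomposition of $G'$ whose root torso is surface-embeddable and whose vortices hang below with small adhesion, then invoking \cref{lem:tree-decomps} --- does not work, and the obstacle you name at the end is in fact fatal to this route rather than a technicality to be engineered around. By condition~\ref{p:amoeba}, every attachment vertex $u^i_j$ that lies both in the root bag and in some bag of the sub-decomposition for $G_i$ must appear in every bag on the connecting path, so the adhesion of any such attachment is at least the number of attachment vertices shared with that subtree; slicing $\Tt_i$ into short segments does not help, because each segment still attaches to the skeleton along its own set of $u^i_j$'s and these cannot all be pushed into bags of bounded size without re-creating the same problem one level down. Your fallback --- absorbing the interface into the root torso as a ``patch'' costing $\Oh(w^2)$ genus per vortex --- is also not sound: a width-$w$ vortex attached along a cycle with $m$ attachment vertices can force genus growing with $m$, which is exactly why vortices appear as a separate primitive in the Robertson--Seymour structure theorem instead of being absorbed into the surface part. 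This is the missing idea: the paper never separates the vortices from the skeleton by a small-adhesion decomposition. Instead it works at the level of the quotient graph. One first forms an auxiliary graph $\wh{G}$ by adding, for each vortex, a hub vertex $z^i$ joined to all $u^i_j$ together with the boundary cycle; $\wh{G}$ embeds in the closed surface of genus $g$, so \cref{thm:genus-partition} gives a geodesic partition with quotient of treewidth $\Oh(g)$, and because each $\wh{G}_i$ has diameter $2$, every geodesic meets it in at most $3$ vertices. Making each vortex vertex a singleton part and splitting the geodesics accordingly, the attachment vertices of each vortex form a path in the quotient, and Grohe's gluing lemma (\cref{lem:pw-tw}) bounds the treewidth after attaching each vortex by $(t+1)(w+1)-1$; iterating over the $q$ vortices is what produces the $w^q$ in the exponent. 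Finally \cref{lem:bnd-tw} and \cref{lem:prt-lift} finish the argument.

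A secondary, fixable gap: you cannot apply \cref{lem:bounded-radius} off the shelf, because the class of $(a,q,w,g)$-nearly embeddable graphs is not minor-closed, and the BFS layering does not merely ``modify the skeleton'' --- a vortex may straddle the boundary between the contracted, retained, and deleted layers, destroying the vortex structure. The paper repairs this by first giving every vortex a universal vertex (so it lives in at most $3$ consecutive layers, at the cost of width $w+1$) and then widening each window of layers by $\Oh(q)$ so that every vortex is wholly inside or wholly outside; the resulting colors become $4$-tuples rather than $3$-tuples. You would need some such adjustment before the layering reduction is legitimate.
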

\begin{proof}
Fix $p$ for which we need to find a $p$-centered coloring of the input graph~$G$. Since $q$ is a fixed constant, we may assume that $p\geq 4q$, for otherwise we replace~$p$ with $\max(p,4q)$ in the reasoning.
As in \cref{def:nearly-embeddable}, let $A$ be the apex set, $G_0$ be the skeleton graph, $G_1,\ldots ,G_q$ be vortices, and~$\Sigma$ be the target surface of the near-embedding.
Note that all of the above is given on input. Let $G'=G-A=G_0\cup G_1\cup \ldots \cup G_q$.

Assign to each apex $w\in A$ an individual color $c_w$ that will not be used for any other vertex.
Thus, by using at most $a$ additional colors we may focus on finding a $p$-centered coloring of~$G'$.

We first argue that without loss of generality we may assume that $G'$ is connected and has radius bounded linearly in $p$.
In previous sections we used \cref{lem:bounded-radius} for such purposes, but this time there is a technical issue: the class of $(a,q,w,g)$-nearly embeddable graphs is not necessarily minor-closed.
However, the proof of \cref{lem:bounded-radius} can be amended, as explained next.

\begin{claim}\label{cl:radius-reduction}
Without loss of generality we may assume that $G'$ is connected and has radius at most $2p+2q$.
\end{claim}
\begin{clproof}
%
Clearly, we may assume that $G'$ is connected, because we may treat every connected component separately and take the union of the obtained colorings.
Also, we argue that we may assume that each vortex $G_i$, for $1\leq i\leq q$, is connected and has radius at most $1$.
This can be easily achieved by adding edges to $G_i$ so that one of its vertices, say $v_i$, becomes universal, i.e., is adjacent to all the other vertices of $G_i$.
Note that $v_i$ can be added to every bag of the assumed path decomposition $\Tt_i$ of $G_i$, so the width of every vortex grows to $w+1$ at most.
Thus, after modification the obtained graph~$G'$ is $(0,q,w+1,g)$-nearly embeddable, but whether the width of vortices is $w$ or $w+1$ has no impact on the claimed asymptotic bound of $\Oh(gq\cdot w^q)$ on
the degree of polynomial centered colorings.

We now follow the steps of the proof of \cref{lem:bounded-radius}.
Construct the same layering structure: pick any vertex $u$ and partition the vertex set of $G'$ into layers $L_0,L_1,L_2,\ldots$ according to distances from~$u$; let $k$ be such that $L_k$ is the largest nonempty layer.
Since we assumed that each vortex $G_i$ has radius at most $1$, it is entirely contained in $3$ consecutive layers.
Next, for every $j\in \{0,1,\ldots,k\}$ divisible by $p$ we considered the graph $G'_j$ obtained from $G'$ by contracting all layers $L_t$ for $t<j$ onto $u$ and removing all layers $L_t$ for $t\geq j+2p$.
Let $I_j=\{j,j+1,\ldots,j+2p-1\}$ be the interval of indices of layers that are preserved by this construction.

We now extend the interval $I_j$ to achieve the following condition: every vortex~$G_i$ is either entirely contained or entirely disjoint with $\bigcup_{t\in I_j} L_t$.
This can be achieved by adding one or two indices to $I_j$, either from the lower or the higher end, 
as long as there exists a vortex that intersects some layer~$L_t$ with $t\in I_j$ and some other layer~$L_{t'}$ with $t'\notin I_j$. 
Note that here we use the assumption that every vortex is contained in at most $3$ consecutive layers.
Since there are $q$ vortices in total, after this operation the interval~$I_j$ consists of at most $2p+2q$ consecutive layers and 
$$I_j\subseteq \{j-2q,\ldots,j+2p-1+2q\}\subseteq \{j-p/2,\ldots,j+2p-1+p/2\},$$
where the last containment follows from the assumption $p\geq 4q$. 
Observe that this means that every index~$t$ is contained in at most $3$ intervals $I_j$, for $j\in \{0,1,\ldots,k\}$ divisible by $p$.

After this modification we proceed as in the proof of \cref{lem:bounded-radius}.
Namely, let~$G'_j$ be the graph obtained from $G'$ by contracting all layers below the lower end of $I_j$ onto $u$ and removing all layers above the higher end of $I_j$.
Observe that $G'_j$ is still $(0,q,w+1,g)$-nearly embeddable, as every vortex either got entirely contracted onto~$u$, or got entirely removed, or is entirely preserved intact.
Moreover, as in \cref{lem:bounded-radius} we have that $G'_j$ is connected and has radius at most~$2p+2q$.

We may apply the assumption to compute a $p$-centered coloring $\lambda_j$ of $G'_j$, for each relevant $j$, and superimpose the colorings $\lambda_j$ just as in \cref{lem:bounded-radius}.
Note that now, in the obtained coloring $\lambda$ the color of each vertex $v$ is a $4$-tuple instead of a $3$-tuple, because it consists of the index of $v$'s layer modulo $p+1$ and 
the colors of~$v$ under~$\lambda_j$ for those indices $j$ for which $v\in I_j$, and there are at most three such $j$s.
Thus, if each~$\lambda_j$ uses $p^{\Oh(gq\cdot w^q)}$ colors, then $\lambda$ uses $(p+1)\cdot p^{3\cdot \Oh(gq\cdot w^q)}=p^{\Oh(gq\cdot w^q)}$ colors. 
It is straightforward to see that the remainder of the reasoning of the proof of \cref{lem:bounded-radius} goes through without changes, yielding that $\lambda$ is $p$-centered.

\mbox{}
\end{clproof}

By Claim~\ref{cl:radius-reduction}, from now on we assume that $G'$ is connected and has radius at most $2p+2q$.
We may also assume that each vortex $G_i$ for $1\leq i\leq q$ is non-empty, otherwise we apply the reasoning for smaller $q$.
Note that by the connectedness this implies $m_i\geq 1$.
We now follow the construction of Proposition 3.8 of Grohe~\cite{grohe2003local}, who showed
that nearly embeddable graphs without apices have bounded local treewidth. 

Construct a graph $\wh{G}$ from $G_0$ as follows.
For each $i\in \{1,\ldots,q\}$, introduce a new vertex $z^i$ and add edges: $z^i u^i_j$ for all $1\leq j\leq m_i$, and $u^i_j u^i_{j+1}$ for all $1\leq j\leq m_i$, where $u^i_{m_i+1}=u^i_1$.
Let $\wh{G}_i$ be the subgraph of $\wh{G}$ consisting of vertices $\{z^i\}\cup \{u^i_j\colon 1\leq j\leq m_i\}$ and edges added above; note that~$\wh{G}_i$ has diameter~$2$. 
Let $\wh{\Sigma}$ be the closed surface of Euler genus $g$ obtained from $\Sigma$ by gluing a disk $D^i$ along the boundary cycle~$C^i$, for each $1\leq i\leq q$.
Then $\wh{G}$ is embeddable into $\wh{\Sigma}$, because each subgraph~$\wh{G}_i$ can be embdedded into $D^i$.
Also, $\wh{G}$ has not much larger diameter than~$G'$.

\begin{claim}\label{cl:hat-diam}
The graph $\wh{G}$ is connected and has diameter at most $8(p+q)+2$.
\end{claim}
\begin{clproof}
Take any two vertices $a$ and $b$ of $\wh{G}$, and suppose for a moment that $a,b\in V(G_0)$. 
Then $a$ and $b$ can be connected by a path $P$ in $G'$ of length at most $4(p+q)$, because the radius of $G'$ is at most $2(p+q)$. 
Observe that every maximal infix of~$P$ that traverses the edges of some vortex $G_i$, where $1\leq i\leq q$, can be replaced by a path of length $2$ in $H$ through the vertex $z^i$.
By performing such replacement for every such infix we obtain a path $P'$ in $H$ with the same endpoints and length at most $8(p+q)$.
To resolve the case when $a$ or $b$ is among the new vertices $z^1,\ldots,z^q$, it suffices to observe that each of them is adjacent to some vertex of $G_0$.
\end{clproof}

Now that $\wh{G}$ is embeddable into $\wh{\Sigma}$, we may apply \cref{thm:genus-partition} to 
construct a partition~$\wh{\Pp}$ of $\wh{G}$ such that~$\wh{\Pp}$ is a family of geodesics in~$\wh{G}$ and 
$\wh{G}/\wh{\Pp}$ has treewidth~$\Oh(g)$. 
Note that since $\wh{G}$ has diameter at most $8(p+q)+2$ by Claim~\ref{cl:hat-diam}, each geodesic in $\wh{\Pp}$ has at most $8(p+q)+3$ vertices.

We now observe that because graphs $\wh{G}_i$ have diameter at most $2$, geodesics in $\wh{P}$ have only small interaction with them.

\begin{claim}\label{cl:intersect-vortex}
For $1\leq i\leq q$, every path $P\in\wh{\Pp}$ contains at most $3$ vertices of~$\wh{G}_i$.
\end{claim}
\begin{clproof}
If $P$ contained more than $3$ vertices in $\wh{G}_i$, then two of them would be at distance more than $2$ on~$P$, but $\wh{G}_i$ has diameter at most $2$.
This would contradict the assumption that $P$ is a geodesic in $\wh{G}$.
\end{clproof}

Construct a partition $\Pp$ of $G'$ into paths as follows.
First, each vertex $v\in V(G_1)\cup \ldots \cup V(G_q)$, i.e. participating in any vortex, gets assigned to a single-vertex path consisting only of $v$.
The remaining vertices, those of $V(G_0)\setminus (V(G_1)\cup \ldots \cup V(G_q))$,
are partitioned into inclusion-wise maximal paths contained in paths in $\wh{P}$.
In other words, for every path $P\in \wh{\Pp}$ we remove all vertices of $V(\wh{G}_1)\cup \ldots \cup V(\wh{G}_q)$, thus splitting $P$ into several paths, and put all those paths into $\Pp$.
Note that since each part of $\wh{\Pp}$ contains at most $8(p+q)+3$ vertices, the same holds also for $\Pp$, even though paths in $\Pp$ are not necessarily geodesics~in~$G'$.

For the next, crucial step we shall need the following technical lemma of Grohe~\cite{grohe2003local}, which enables gluing tree decompositions along common interfaces.

\begin{lemma}[Lemma 2.2 of \cite{grohe2003local}]\label{lem:pw-tw}
  Let $G,H$ be graphs and let $(T,\bag)$ be a path decomposition
  of~$H$ of width at most $k$. Assume that $T$ is a path $(x_1,\ldots,x_m)$ for some $m\in \N$.
  Let $v_1,\ldots, v_m$ be a path in $G$ such that $v_i\in \bag(x_i)$ for $1\leq i\leq m$ and
  $V(G)\cap V(H)=\{v_1,\ldots, v_m\}$. Then $\tw(G\cup H)\leq (\tw(G)+1)(k+1)-1$.
\end{lemma}

We now claim the following.

\begin{claim}
The graph $G'/\prt$ has treewidth $\Oh(gq\cdot w^q)$.
\end{claim}
\begin{clproof}
Let $\wh{\prt}'$ be a partition of $\wh{G}$ obtained as follows.
Examine every path $P\in \wh{\prt}$ and partition it into subpaths: 
a single-vertex path for each $v\in V(\wh{G}_1)\cup \ldots\cup V(\wh{G}_q)$ traversed by $P$,
and maximal infixes of~$P$ consisting of vertices not in $V(\wh{G}_1)\cup \ldots\cup V(\wh{G}_q)$.
Add all the obtained subpaths to $\wh{\prt}'$, thus eventually obtaining a partition of~$\wh{G}$.
Note that by \cref{cl:intersect-vortex}, we add at most $6q+1$ subpaths of $P$ for each $P\in \wh{\prt}$.

As $\wh{G}/\wh{\prt}$ has treewidth $\Oh(g)$, it is easy to see that $\wh{G}/\wh{\prt}'$ has treewidth $\Oh(gq)$.
Indeed, we may take a tree decomposition of $\wh{G}/\wh{\prt}$ of width $\Oh(g)$ and replace every path $P\in \wh{\prt}$ with all its subpaths added to $\wh{\prt}'$ in every bag, thus
obtaining a tree decomposition of $\wh{G}/\wh{\prt}'$ of width $\Oh(gq)$.

Note that since each subgraph $\wh{G}_i$ of $\wh{G}$ contains the cycle $(u^i_1,\ldots,u^i_{m_i})$, 
in~$\wh{G}/\wh{\prt}'$ the single-vertex paths consisting of vertices $u^i_1,\ldots,u^i_{m_i}$ form a cycle in the same way.

Now it remains to observe that $G'/\prt$ is a subgraph of a graph that can be obtained from $\wh{G}/\wh{\prt'}$ by iteratively adding vortices $G_1,\ldots,G_q$, with path decompositions $\Tt_1,\ldots,\Tt_q$
of width at most $w$ (here, we implicitly identify vertices of vortices with single-vertex paths consisting of them). 
Noting that the prerequisites of \cref{lem:pw-tw} are satisfied, 
every such addition increases the treewidth from the current value, say $t$, to at most $(t+1)(w+1)-1$.
Since the treewidth of $\wh{G}/\wh{\prt'}$ is $\Oh(gq)$, it follows that the treewidth of $G'/\prt$ is $\Oh(gq\cdot w^q)$.
\end{clproof}

Since $G'/\prt$ has treewidth $\Oh(gq\cdot w^q)$, by \cref{lem:bnd-tw} it admits a ptime computable $p$-centered coloring with $p^{\Oh(qg\cdot w^q)}$ colors.
As each part of $\prt$ has at most $8(p+q)+3$ vertices, we may conclude by \cref{lem:prt-lift}.
\end{proof}

\section{Proper minor-closed classes}\label{sec:minor}

\noindent Our main result now follows easily by combining the structure 
theorem of Robertson and Seymour with the already prepared tools. 

\begin{theorem}[Robertson and Seymour~\cite{robertson2003graph}]\label{thm:structure-thm}
For every $t\in \N$ there exist $a,q,w,g,k\in \N$ such that 
every graph $G$ excluding $K_t$ as a minor admits a tree 
decomposition of adhesion at most $k$ over the class of
$(a,q,w,g)$-nearly embeddable graphs. 
\end{theorem}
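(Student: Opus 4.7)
This is the Robertson--Seymour graph minors structure theorem, whose full proof is the culmination of the Graph Minors series and spans many long papers. I would not attempt a self-contained proof within this paper; rather, I would cite it as a black box. What I can do is sketch the shape of the argument as it appears in the literature, organized around the key intermediate results that I would assume as building blocks.

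The plan is to proceed by induction on $t$ and, simultaneously, by a well-founded measure such as $|V(G)|+|E(G)|$. For the induction to produce a tree decomposition with small adhesion over the class of $(a,q,w,g)$-nearly embeddable graphs, I would first dispose of two easy cases. If $G$ has treewidth bounded by some function of $t$, then an optimal tree decomposition of $G$ already suffices: each torso is a graph of constant size and hence trivially $(a,q,w,g)$-nearly embeddable for suitable constants. If $G$ admits a separator of size at most $k = k(t)$ whose removal splits it into nontrivially smaller pieces, then I would apply the induction hypothesis to the pieces augmented by the separator turned into a clique and glue the resulting decompositions along the separator, exactly as in the proof of \cref{lem:tree-decomps}. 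The crux is therefore the case where $G$ has large treewidth but no small balanced separator, which I would call the \emph{highly connected case}.

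The heart of the argument then rests on two deep intermediate theorems of Robertson and Seymour. First, the \emph{Excluded Grid Theorem} (Graph Minors V) produces in $G$ a wall (equivalently, a large grid minor) of size $r=r(t)$ growing with the treewidth bound. Second, the \emph{Flat Wall Theorem} (Graph Minors XIII) says that inside any sufficiently large wall of a $K_t$-minor-free graph one can either extract a $K_t$-minor -- which is excluded by assumption -- or find a \emph{flat} subwall of size $r'=r'(t)$ after removing a bounded number of \emph{apex} vertices. Flatness provides a concrete planar embedding of a large part of $G$. I would then invoke the refinement due to the subsequent papers in the series (notably Graph Minors XVI--XVII) which upgrades this planar substructure to a near-embedding: the flat subwall can be extended to a nearly embeddable subgraph in some bounded-genus surface, with a bounded number of vortices of bounded width attached along the boundary cycles and bounded apices, where all the relevant parameters depend only on $t$.

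With this near-embedded substructure $H$ located inside $G$, I would extract it together with a bounded-size attachment interface to the remainder of $G$, and recurse on what lies outside. The hard part -- and the reason the original proof is so long -- is to organize the recursion so that the pieces meet in adhesion sets of uniformly bounded size $k=k(t)$ and so that the near-embedded parts assembled into the torsos are genuinely $(a,q,w,g)$-nearly embeddable for fixed constants depending only on $t$. This requires a careful ``whittling'' or ``tangle-directed'' argument: one uses the theory of tangles (Graph Minors X) to certify that the located near-embedded piece is canonically associated with a large tangle in $G$ and that removing it together with a bounded separator leaves strictly simpler tangles, so the inductive measure decreases. The main obstacle, and what the Graph Minors series handles with enormous care, is ensuring that the apices, vortices, and genus remain uniformly bounded through the iteration, rather than growing with each recursive step; this is why one cannot hope for a short proof and why I would treat the statement here strictly as a cited black box.
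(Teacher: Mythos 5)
Your approach matches the paper exactly: the paper does not prove this statement but simply cites it as the Robertson--Seymour structure theorem from Graph Minors XVI (adding only a remark, with further references, that such a decomposition is polynomial-time computable), and you likewise treat it as a black box. Your sketch of how the Graph Minors series establishes it is a reasonable high-level account, but no such sketch is needed or given in the paper.
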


Furthermore, it is known that a tree decomposition as stated in \Cref{thm:structure-thm}, together with decompositions of torsos witnessing their $(a,q,w,g)$-near embeddability,
can be computed in polynomial time, see~\cite{demaine2005algorithmic,GroheKR13,KawarabayashiW11}. 
Now \Cref{thm:main-minor} is an immediate consequence of this result combined with \cref{lem:tree-decomps} and \cref{thm:nearly-embeddable}, as every proper minor-closed class excludes some clique $K_t$ as a minor.

\section{{\sc{Subgraph Isomorphism}} for graphs of bounded treedepth}\label{sec:iso}

In this section we prove \cref{thm:si}, but the main technical contribution is the proof of the
following lemma. 

\begin{lemma}\label{lem:SI-td}
Suppose we are given a graph $H$ on $p$ and a graph $G$ on $n$ vertices, together with a treedepth decomposition of $G$ of depth $d$.
Then it can be decided whether $H$ is a subgraph of $G$ in time $2^{\Oh(p\log d)}\cdot n^{\Oh(1)}$ and space $n^{\Oh(1)}$.
\end{lemma}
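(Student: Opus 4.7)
The plan is to combine the color-coding technique of Alon, Yuster and Zwick~\cite{AlonYZ95} with a recursive dynamic program on the treedepth decomposition of $G$ whose recursion depth is bounded by $d$, so that it can be evaluated in polynomial space.

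First, I would reduce the problem to a colored variant via color coding. Using a perfect hash family (equivalently, an $(n,p)$-splitter), one can construct in time $2^{\Oh(p \log p)} \cdot n^{\Oh(1)}$ and polynomial space a family $\Ff$ of functions $\chi\colon V(G) \to \{1, \dots, p\}$ of size $|\Ff| = 2^{\Oh(p \log p)} \cdot \log n$ such that for every $p$-element subset $S \subseteq V(G)$, at least one $\chi \in \Ff$ is injective on $S$. For each $\chi \in \Ff$ and each of the $p!$ bijections $\pi\colon V(H) \to \{1, \dots, p\}$, we ask whether there is a homomorphism $\phi\colon V(H) \to V(G)$ with $\chi \circ \phi = \pi$. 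Since $\pi$ is a bijection, any such $\phi$ is automatically injective, and the original instance is a yes-instance iff some of the resulting $2^{\Oh(p \log p)} \log n$ colored instances is.

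Next, for each colored instance I would solve it by a recursive procedure $\mathrm{Embed}(v, \alpha)$ on the given treedepth decomposition $F$ of $G$. Here $v$ is the current node of $F$ and $\alpha$ is a consistent partial assignment of $H$-vertices to the path $P_v$ from the root of $F$ down to $v$. By the colored setting, such an $\alpha$ is determined by a subset of ``used'' ancestors (each used ancestor $a$ must be the image of $u = \pi^{-1}(\chi(a))$), giving at most $2^d$ possibilities, and its consistency is checked against the $H$-edges and $G$-edges among these used ancestors. The key structural property of $F$ is that every edge of $G$ incident to the subtree $F_v$ from outside necessarily has its other endpoint on $P_v$, so $\alpha$ fully summarizes the information needed to decide whether an embedding extending $\alpha$ exists inside $F_v$. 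At an internal node we enumerate which (if any) $H$-vertex is mapped to $v$, and then recurse on the children of $v$, distributing the yet-unassigned $H$-vertices among the child subtrees via a sequential inner recursion that re-expands queries on demand rather than tabulating them.

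The recursion in $F$ has depth at most $d$, and each stack frame stores only polynomially many bits (for $\alpha$, for the residual set of unassigned $H$-vertices, and for bookkeeping), so the total space stays polynomial. For the time: per colored instance, at each of the $n$ nodes of $F$ the DP encounters at most $p^{\Oh(d)}$ distinct ancestor-states, each of which requires polynomial work once the child subproblems have been resolved, so colored subgraph isomorphism is decided in time $p^{\Oh(d)} \cdot n^{\Oh(1)}$ and polynomial space. Multiplying by the $2^{\Oh(p \log p)} \log n$ colored instances from color coding yields the promised total time $2^{\Oh((p+d)\log p)} \cdot n^{\Oh(1)}$ and polynomial space.

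The hardest part of the plan will be the combining step at internal nodes of $F$: we must distribute the unassigned $H$-vertices among the child subtrees without ever materialising the exponentially large subset tables a standard subset-convolution DP would use. The intended way around this is to iterate over the children sequentially and, for each child in turn, recursively decide which subset of unassigned $H$-vertices it absorbs, re-computing child subproblems on demand and using the colored structure (each $H$-vertex has a pre-determined colour class in $G$) to keep the effective state small. Arguing that this sequential recursion stays within the $p^{\Oh(d)}$ budget at each subtree while still exploring every relevant distribution is the most delicate piece of bookkeeping.
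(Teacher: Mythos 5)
Your overall architecture is the same as the paper's: color coding to reduce to a vertex-labeled variant, followed by a recursion along the treedepth decomposition whose stack depth is bounded by $d$, giving polynomial space. The color-coding step and the complexity accounting for the number of colored instances are fine. However, the step you yourself flag as the hardest --- distributing the not-yet-embedded pattern vertices among the child subtrees of an internal node --- is precisely where your argument has a genuine gap, and the ideas you sketch for it do not close it. A ``sequential inner recursion that re-expands queries on demand'' over arbitrary subsets of unassigned $H$-vertices explores, in the worst case, all ordered partitions of the residual set among the children; without memoization (which you forbid to keep the space polynomial) the recursion tree can have size on the order of $k^{|S|}$ where $k$ is the number of children, and the observation that each $H$-vertex has a predetermined color class does not pin down which child subtree it lands in, so it does not shrink this search.

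The missing idea is to never carry an arbitrary residual subset at all, but only \emph{connected} pieces of the remaining pattern. The paper's subproblems are indexed by a ``chunk'' $(X,D)$, where $X$ is empty or induces a connected subgraph of $H$ and every $H$-edge leaving $X$ lands in $D$ (the already-embedded boundary, pinned to ancestors of the current node by an injective map $\gamma$). At an internal node $u$, either no pattern vertex maps to $u$ and the connected set $X$ must descend entirely into a single child (a disjunction over children), or the unique vertex $w=\alpha^{-1}(u)$ maps to $u$ and then \emph{each connected component} of $X-\{w\}$ independently descends into some child: the value is a conjunction over components of a disjunction over children. The conjunction is sound only because color coding guarantees injectivity across components for free (each $H$-vertex can only land in its own color class), which is exactly the point where your homomorphism-with-prescribed-colors reduction pays off. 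This makes the per-node combination polynomial, bounds the number of subproblems by $3^p\cdot (p+1)^d\cdot n = 2^{\Oh(p+d\log p)}\cdot n$, and --- crucially for avoiding recomputation without tables --- ensures that every subproblem reached by the recursion has a uniquely determined parent call, so each is expanded at most once. Without the connectivity restriction on the carried pattern piece, neither the time bound nor the no-recomputation property goes through.
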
 

We then apply the following connection of $p$-centered colorings
with low-treedepth colorings. 

\begin{definition}
Let $F$ be a rooted forest, i.e., a graph whose connected components
are rooted trees. The \emph{closure} of $F$, denoted $\mathrm{clos}(F)$
has as its vertex set the set $V(F)$ and it contains every edge $uv$ such
that $u,v$ are vertices of a tree $T$ of $F$ and $u\leq_T v$. The
\emph{height} of a tree $T$ is the maximal number of vertices on 
a root-leaf path of $T$. The \emph{treedepth} of a graph $G$ is the
minimum height of a forest~$F$ such that $G\subseteq \mathrm{clos}(F)$. 
Such a forest $F$ is called a \emph{treedepth decomposition} of~$G$. 
\end{definition}

\begin{proposition}[Ne\v{s}et\v{r}il and Ossona de Mendez~\cite{nevsetvril2006tree}]\label{prop:centered-lowtd}
Every $p$-centered coloring $\lambda\colon V(G)\to C$ of a graph~$G$ is also a treedepth-$p$ coloring of $G$ in the following sense: for any color subset $X\subseteq C$ with $|X|\leq p$,
the graph $G[\lambda^{-1}(X)]$ has treedepth at most $|X|$.
Furthermore, a treedepth decomposition of $G[\lambda^{-1}(X)]$
of depth at most $|X|$ can be computed in linear time. 
\end{proposition}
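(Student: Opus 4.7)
The plan is to proceed by induction on $|X|$, using the defining property of $p$-centered colorings to identify a canonical ``root'' vertex at each step. It will be convenient to prove the following slightly stronger statement: for every $k \leq p$ and every connected subgraph $H$ of $G$ on which $\lambda$ uses at most $k$ colors, $H$ has treedepth at most $k$. The original statement then follows by applying this to each connected component of $G[\lambda^{-1}(X)]$ (which is a connected subgraph of $G$ using at most $|X| \leq p$ colors) and taking the disjoint union of the resulting rooted trees into a rooted forest.

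The base case $k = 0$ is vacuous. For the inductive step, $H$ is a connected subgraph of $G$ using at most $k \leq p$ colors of $\lambda$. Since $\lambda$ is $p$-centered and $H$ does not use more than $p$ colors, there must be some color $c$ appearing on exactly one vertex $v \in V(H)$. I will declare $v$ the root. Every connected component $C$ of $H - v$ is itself a connected subgraph of $G$, and since the color $c$ was witnessed only by $v$, the coloring $\lambda$ uses at most $k - 1$ colors on $V(C)$. The inductive hypothesis then yields a treedepth decomposition of $C$ of depth at most $k - 1$, and attaching all these decompositions under the common root $v$ produces a treedepth decomposition of $H$ of depth at most $k$.

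The algorithm is a direct implementation of the recursion: compute the connected components of the current subgraph, for each component scan the vertices to tally occurrences of each color and select a color (necessarily existing) whose count equals $1$, extract the corresponding vertex as the root, and recurse on the connected components obtained after its removal. At each level of the recursion the processed subgraphs are pairwise vertex-disjoint, so one level costs $O(n + m)$ time; since the recursion depth is bounded by $|X|$, which we regard as a fixed constant bounded by $p$, the total running time is linear in the size of $G[\lambda^{-1}(X)]$. The only point that needs checking is that after removing the uniquely-colored vertex each surviving component still satisfies the inductive hypothesis with one fewer available color, but this is immediate from the fact that the removed color had only a single occurrence in the current component — so no obstacle of real substance arises.
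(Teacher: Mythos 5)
Your proof is correct: the induction on the number of used colors, rooting each component at the vertex whose color occurs exactly once and recursing on the components after its removal, is exactly the standard argument for this fact, and your $O(|X|\cdot(n+m))$ recursion suffices for the algorithmic claim. The paper itself does not reprove this proposition but cites Ne\v{s}et\v{r}il and Ossona de Mendez, whose proof is essentially the one you gave, so there is nothing to add.
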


\cref{lem:SI-td} combined with the above can be now used to give a space-efficient fixed-parameter algorithm for {\sc{Subgraph Isomorphism}} on proper minor-closed classes, as explained in \cref{thm:si}.

 \setcounter{aux}{\value{theorem}}
 \setcounter{theorem}{\value{subiso}}
 
\begin{theorem}
Let $\Cc$ be a proper minor-closed class. Then given graphs $H$ and~$G$, on $p$ and $n$ vertices, respectively, where $G\in \Cc$, 
it can be decided whether $H$ is a subgraph of~$G$ in time $2^{\Oh(p\log p)}\cdot n^{\Oh(1)}$ and space $n^{\Oh(1)}$.
\end{theorem}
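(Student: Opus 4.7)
The plan is to combine the three tools assembled in the paper: the main structural result \cref{thm:main-minor} (which guarantees a polynomial time computable $p$-centered coloring with $p^{\Oh(1)}$ colors on $\Cc$), \cref{prop:centered-lowtd} (which converts a $p$-centered coloring into a treedepth-$p$ coloring), and \cref{lem:SI-td} (the space-efficient subgraph isomorphism algorithm on graphs of bounded treedepth). The architecture will mirror the classical $p$-treewidth coloring approach to \textsc{Subgraph Isomorphism}, but using treedepth so that the inner algorithm uses polynomial space.

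First I would invoke \cref{thm:main-minor} on $\Cc$ to compute, in polynomial time, a $p$-centered coloring $\lambda\colon V(G)\to C$ of $G$ with $|C|=p^{\Oh(1)}$. Since $|V(H)|=p$, any subgraph embedding of $H$ into $G$ uses at most $p$ distinct colors from $C$ on its image. Therefore $H$ is a subgraph of $G$ if and only if there exists a subset $X\subseteq C$ with $|X|\leq p$ such that $H$ is a subgraph of $G[\lambda^{-1}(X)]$. The number of such subsets $X$ is bounded by $\binom{|C|}{p}\leq |C|^p = 2^{\Oh(p\log p)}$.

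Next I would enumerate these subsets one by one; by storing only the current subset and reusing space between iterations, the outer enumeration costs no more than polynomial space in $n$. For each fixed $X$, \cref{prop:centered-lowtd} yields in linear time a treedepth decomposition of $G[\lambda^{-1}(X)]$ of depth $d\leq |X|\leq p$. I then call the algorithm of \cref{lem:SI-td} on $H$ and $G[\lambda^{-1}(X)]$ with this decomposition. Its running time is $2^{\Oh((p+d)\log p)}\cdot n^{\Oh(1)} = 2^{\Oh(p\log p)}\cdot n^{\Oh(1)}$ and its space usage is $n^{\Oh(1)}$.

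Multiplying the outer enumeration by the inner running time gives a total time of $2^{\Oh(p\log p)}\cdot 2^{\Oh(p\log p)}\cdot n^{\Oh(1)} = 2^{\Oh(p\log p)}\cdot n^{\Oh(1)}$, while the space is dominated by the inner routine and remains $n^{\Oh(1)}$. Correctness is immediate: in the forward direction, take $X$ to be the image of any embedding of $H$, which has size at most $p$; in the backward direction, an embedding into any $G[\lambda^{-1}(X)]$ is automatically an embedding into $G$. There is no genuine obstacle here provided \cref{lem:SI-td} is in place; the only point deserving care is making sure the outer enumeration is performed in a streaming fashion so that the polynomial-space guarantee of \cref{lem:SI-td} is not destroyed by storing intermediate data across iterations.
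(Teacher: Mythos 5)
Your proposal is correct and follows exactly the same route as the paper: compute a $p$-centered coloring via \cref{thm:main-minor}, enumerate the $2^{\Oh(p\log p)}$ color subsets of size at most $p$, convert each induced subgraph to a treedepth-$p$ decomposition via \cref{prop:centered-lowtd}, and run the polynomial-space algorithm of \cref{lem:SI-td} on each. The accounting of time and space and the correctness argument (the image of an embedding of $H$ meets at most $p$ color classes) match the paper's proof.
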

\setcounter{theorem}{\value{aux}}
\begin{proof}
By \cref{thm:main-minor}, in polynomial time we can compute a $p$-centered coloring~$\lambda$ of $G$ that uses at most $c\cdot p^c$ colors, 
where $c$ is a constant depending only on $\Cc$. Iterate through all color subsets of~$p$ colors and for each such subset $X$ consider the graph $G_X=G[\lambda^{-1}(X)]$.
Observe that since $H$ has~$p$ vertices, $H$ is subgraph of $G$ if and only if $H$ is a subgraph of $G_X$ for any such color subset $X$.
By \cref{prop:centered-lowtd}, $G_X$ has treedepth at most $p$ and a treedepth decomposition of $G_X$ of depth at most $p$ can be computed in linear time.
Hence, we may apply \cref{lem:SI-td} to verify whether $H$ is a subgraph of $G_X$ in time $2^{\Oh(p\log p)}\cdot n^{\Oh(1)}$ and space $n^{\Oh(1)}$.
Since there are $(c\cdot p^c)^p=2^{\Oh(p\log p)}$ color subsets $X$ to consider, and for each we apply an algorithm with time complexity $2^{\Oh(p\log p)}\cdot n^{\Oh(1)}$ and space complexity $n^{\Oh(1)}$,
the claimed complexity bounds follow.
\end{proof}

In the remainder of this section we give a polynomial-space fixed-parameter algorithm for the {\sc{Subgraph Isomorphism}} problem on graphs of bounded treedepth, i.e. we prove \cref{lem:SI-td}.
Recall that we are given graphs $H$ and~$G$, where $H$ has $p$ vertices and $G$ has $n$ vertices,
and moreover we are given a treedepth decomposition $F$ of $G$ of depth at most $d$.
The goal is to check whether~$H$ is a subgraph of $G$;
that is, whether there exists a {\em{subgraph embedding}} from $H$ to $G$, which is an injective mapping $\eta$ from $V(H)$ to $V(G)$ such that $uv\in E(H)$ entails $\eta(u)\eta(v)\in E(G)$.

We first use the color coding technique of Alon et al.~\cite{AlonYZ95} to reduce the problem to the colored variant, where in addition vertices of $G$ are labeled with vertices of $H$ 
and the sought subgraph embedding has to respect these labels.
More precisely, suppose we are given a mapping $\alpha\colon V(G)\to V(H)$.
We say that a subgraph embedding $\eta$ from $H$ to $G$
is {\em{compliant}} with $\alpha$ if $\alpha(\eta(u))=u$ for each $u\in V(H)$.
The following lemma encapsulates the application of color coding to our problem, and its proof repeats the argument from~\cite[Sections~4.2 and~4.3]{NaorSS95}.

\begin{lemma}\label{lem:color-coding}
There exists a family $\Ff$ consisting of $2^{\Oh(p\log p)}\cdot \log n$ functions from $V(G)$ to $V(H)$ so that the following condition holds:
for each injective function $\eta\colon V(H)\to V(G)$ there exists at least one function $\alpha\in \Ff$ such that $\alpha(\eta(u))=u$ for each $u\in V(H)$.
Moreover, $\Ff$ can be enumerated with polynomial delay and using polynomial working space.
\end{lemma}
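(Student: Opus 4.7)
The plan is a standard color-coding argument built on top of perfect hash families. Recall that an $(n,p)$-\emph{perfect hash family} is a collection $\HHH$ of functions $h\colon V(G)\to \{1,\ldots,p\}$ such that for every $p$-element subset $S\subseteq V(G)$ there exists $h\in \HHH$ whose restriction to $S$ is injective (equivalently, bijective onto $\{1,\ldots,p\}$, since $|S|=p$). By the classic derandomization of Naor, Schulman and Srinivasan (as used in Alon--Yuster--Zwick~\cite{AlonYZ95}), such a family of size $2^{\Oh(p)}\cdot\log n$ can be constructed, and its elements can be enumerated with polynomial delay and polynomial working space; in particular no function in $\HHH$ needs to be stored in full, only the short seed that produces it.

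Given such a family $\HHH$, I define
\[
\FFF \;=\; \{\, \pi\circ h \;:\; h\in\HHH,\; \pi\colon\{1,\ldots,p\}\to V(H) \text{ a bijection}\,\}.
\]
Then $|\FFF|\le |\HHH|\cdot p! = 2^{\Oh(p)}\cdot\log n \cdot 2^{\Oh(p\log p)} = 2^{\Oh(p\log p)}\cdot\log n$, matching the claimed bound. Correctness is immediate: given an injective $\eta\colon V(H)\to V(G)$, let $S=\eta(V(H))$, which has size exactly $p$. Pick $h\in\HHH$ whose restriction to $S$ is a bijection onto $\{1,\ldots,p\}$, and set $\pi=(h\circ\eta)^{-1}$, which is a bijection from $\{1,\ldots,p\}$ to $V(H)$. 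Then the function $\alpha=\pi\circ h\in\FFF$ satisfies $\alpha(\eta(u))=\pi(h(\eta(u)))=u$ for every $u\in V(H)$, as required.

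For the enumeration guarantees, I enumerate $\FFF$ by a double loop: the outer loop enumerates $h\in\HHH$ with polynomial delay and polynomial space using the standard construction, and the inner loop enumerates the $p!$ permutations $\pi$ of $\{1,\ldots,p\}$ (identified with bijections to $V(H)$ via any fixed ordering of $V(H)$) in lexicographic order, which can be done in polynomial delay and polynomial space via a standard next-permutation procedure. Each enumerated pair $(h,\pi)$ is emitted as an implicit description of $\alpha$: to evaluate $\alpha$ on a vertex $v\in V(G)$, one simply evaluates $h(v)$ and looks up $\pi$ of the result. Thus $\FFF$ is enumerated with polynomial delay and polynomial working space.

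The only place where there is something to verify beyond bookkeeping is the existence of a perfect hash family of the stated size with the required enumeration properties; this is exactly the content of the Alon--Yuster--Zwick construction, which I invoke as a black box. The remainder is an elementary combination of that family with the $p!$ relabelings needed to turn a purely quantitative injectivity guarantee into the vertex-identification guarantee required by the lemma.
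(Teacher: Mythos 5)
Your proof is correct and follows essentially the same route as the paper: compose a perfect hash family on $V(G)$ with relabelings into $V(H)$. The only cosmetic difference is that the paper uses a polynomial-size $p$-perfect family with range $\{1,\ldots,p^2\}$ composed with all $p^{2p}$ functions into $V(H)$, whereas you use a $2^{\Oh(p)}\log n$-size family with range $\{1,\ldots,p\}$ composed with the $p!$ bijections; both yield the same $2^{\Oh(p\log p)}\cdot\log n$ bound.
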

\begin{proof}
For positive integers $p\leq q$, a family $\Ss$ of functions from $V(G)$ to $\{1,\ldots,q\}$ is called {\em{$p$-perfect}} if for every subset $W\subseteq V(G)$ of size $p$ there exists a function $f\in \Ss$
that is injective on $W$. As argued by Naor et al.~\cite[Lemma~2]{NaorSS95}, for $q=p^2$ there exists a $p$-perfect family $\Ss$ of size $p^{\Oh(1)}\cdot \log n$ that can be computed in polynomial time.
Further, let $\Ll$ be the family of all non-decreasing functions from $\{1,\ldots,p^2\}$ to $V(H)$ with the following property: for each $g\in \Ll$ and $w\in V(H)$, the set $g^{-1}(w)$ is an interval in $\{1,\ldots,p^2\}$. Note that $|\Ll|=2^{\Oh(p\log p)}$.
Define the family $\Ff$ as follows:
$$\Ff=\{g\circ f\ \colon\ f\in \Ss\textrm{ and }g\in \Ll\}.$$
Note that since $|\Ss|=p^{\Oh(1)}\cdot \log n$ and $|\Ll|=2^{\Oh(p\log p)}$, we indeed have that $|\Ff|\leq 2^{\Oh(p\log p)}\cdot \log n$.
Also, clearly~$\Ff$ can be enumerated with polynomial delay and using polynomial working space.

Finally, we verify that $\Ff$ satisfies the promised condition. Let $W=\eta(V(H))$. By the properties of $\Ss$, there exists $f\in \Ss$ that is injective on $W$.
Further, we can define a function $g\colon \{1,\ldots,p^2\}\to V(H)$ by first setting $g(f(\eta(w)))=w$ for each $w\in V(H)$, and extending this mapping to the remaining elements of $\{1,\ldots,p^2\}\setminus f(W)$ so that $g^{-1}(w)$ is an interval in $\{1,\ldots,p^2\}$ for each $w\in V(H)$. Thus $g\in \Ll$, and consequently if we define $\alpha=g\circ f$, then $\alpha \in \Ff$. It remains to note that $\alpha(\eta(w))=w$ for each $w\in V(H)$.
\end{proof}

By applying \cref{lem:color-coding}, to prove \cref{lem:SI-td} we may focus on the variant where we are additionally given a mapping $\alpha\colon V(G)\to V(H)$ and we seek a subgraph embedding
that is compliant with $\alpha$. Indeed, if we give an algorithm with the promised time and space complexity for this variant, then we may apply it for every function $\alpha$ from the family $\Ff$ enumerated
using \cref{lem:color-coding}. This adds a multiplicative factor of $2^{\Oh(p\log p)}\cdot n^{\Oh(1)}$ to the time complexity and an additive factor of $n^{\Oh(1)}$ to the space complexity, 
which is fine for the claimed complexity bounds. 

\medskip

\newcommand{\chld}{\mathsf{Chld}}
\newcommand{\tail}{\mathsf{Tail}}
\newcommand{\Val}{\mathsf{Val}}
\newcommand{\CC}{\mathsf{CC}}

Before we proceed to the algorithm, we introduce some notation.
Recall that~$F$ is the given treedepth decomposition $G$; that is, $F$ is a rooted forest of depth at most $d$ on the same vertex as~$G$ such that every edge of $G$ connects a vertex with its
ancestor in $F$. For $u\in V(G)$, we introduce the following notation:
\begin{itemize}
\item $\chld(u)$ is the set of children of $u$ in $F$;
\item $\tail(u)$ is the set of all strict ancestors of $u$ in $F$ (i.e., excluding $u$ itself);
\item $G_u$ is the subgraph of $G$ induced by the ancestors and descendants of $u$, including $u$ itself.
\end{itemize}

\pagebreak
A pair $(X,D)$ of disjoint subsets of vertices of $H$ is a {\em{chunk}} if
\begin{itemize}
\item $X$ is either empty or it induces a connected subgraph of $H$; and
\item in $H$ there is no edge with one endpoint in $X$ and second in $V(H)-(X\cup D)$.
\end{itemize}
Note that the second condition is equivalent to saying that that $N_H(X)$ is contained in $D$.
A {\em{subproblem}} is a quadruple $(u,X,D,\gamma)$, where
\begin{itemize}
\item $u$ is a vertex of $G$,
\item $(X,D)$ is a chunk, and
\item $\gamma$ is an injective function from $D$ to $\tail(u)$.
\end{itemize}
Note that the number of different subproblems is at most $3^p\cdot d^p\cdot n=2^{\Oh(p\log d)}\cdot n$.
The {\em{value}} of the subproblem $(u,X,D,\gamma)$, denoted $\Val(u,X,D,\gamma)$, is the boolean value of the following assertion:
\begin{center}
{\em{There exists a subgraph embedding $\eta$ from the graph $H[X\cup D]$ to the graph $G_u$\\ such that $\eta(X)\cap \tail(u)=\emptyset$ and $\eta$ restricted to $D$ is equal to $\gamma$.}}
\end{center}
An embedding $\eta$ satisfying the above will be called a {\em{solution}} to the subproblem $(u,X,D,\gamma)$.
Our algorithm will compute the values of subproblems in a recursive manner using the formula presented in the following lemma.
Here, $\gamma[w\to u]$ denotes~$\gamma$ extended by mapping $w$ to $u$, whereas for $Y\subseteq V(H)$ by $\CC(Y)$ we denote the family of vertex sets of the connected components of $H[Y]$.

\begin{lemma}\label{lem:SI-recursion}
Suppose $(u,X,D,\gamma)$ is a subproblem. Then the following assertions hold:
\begin{enumerate}[label=(\roman*),ref=(\roman*)]
\item\label{ass:leaf} If $u$ is a leaf of $F$, then $\Val(u,X,D,\gamma)$ is true if and only if either $X=\emptyset$ and $\gamma$ is a subgraph embedding from $H[D]$ to $G_u$,
or $X=\{w\}$ with $w=\alpha(u)$ and $\gamma[w\to u]$ is a subgraph embedding from $H[D\cup \{w\}]$ to $G_u$.
\item\label{ass:node-not-im} If $u$ is not a leaf of $F$ and $\alpha(u)\notin X$, then
$$\Val(u,X,D,\gamma)=\bigvee_{v\in \chld(u)} \Val(v,X,D,\gamma).$$
\item\label{ass:node-im} If $u$ is not a leaf of $F$ and $w=\alpha(u)\in X$, then 
\begin{align*}
\Val(u,X,D,\gamma)= & \bigvee_{v\in \chld(u)} \Val(v,X,D,\gamma)\, \vee\, \\ 
& \hspace{0.9cm}\bigwedge_{Z\in \CC(X-\{w\})} \bigvee_{v\in \chld(u)} \Val(v,Z,D\cup \{w\},\gamma[w\to u]).
\end{align*}
\end{enumerate}
\end{lemma}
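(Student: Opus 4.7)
The plan is to prove each of the three equalities separately as a two-way implication between existence of a solution to the LHS subproblem and the solution structure described by the RHS. All arguments rest on a single structural observation about the treedepth decomposition $F$: every edge of $G$ connects a vertex to one of its $F$-ancestors, and consequently, for any node $u$ of $F$ and any two distinct children $v_1, v_2 \in \chld(u)$, the sets of strict descendants of $v_1$ and $v_2$ are vertex-disjoint and pairwise non-adjacent in $G$. I will also use the tacit compliance with $\alpha$: the embeddings $\eta$ under consideration satisfy $\alpha(\eta(x)) = x$ for every $x$ in their domain, which forces $\eta(x) = u$ only when $x = \alpha^{-1}(u)$, and ensures that distinct vertices of $V(H)$ go to distinct vertices of $V(G)$.

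For part~\ref{ass:leaf}, since $u$ is a leaf, $V(G_u) = \tail(u) \cup \{u\}$. The constraint $\eta(X) \cap \tail(u) = \emptyset$ forces $\eta(X) \subseteq \{u\}$, and combined with injectivity and the assumption that $X$ is empty or connected, $|X| \leq 1$. In the case $X = \{w\}$, compliance pins $w = \alpha(u)$ and $\eta(w) = u$, so the subgraph-embedding requirement for $\eta$ is exactly the stated condition on $\gamma[w \to u]$.

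For parts~\ref{ass:node-not-im} and~\ref{ass:node-im}, the workhorse is a \emph{single-subtree lemma}: if $Y \subseteq V(H)$ is non-empty and connected, and $\eta$ is a subgraph embedding of $H[Y \cup D']$ into $G_u$ with $\eta(D') \subseteq \tail(u)$ and $\eta(Y) \cap (\tail(u) \cup \{u\}) = \emptyset$, then $\eta(Y)$ lies entirely in the strict descendants of a single child $v \in \chld(u)$, so that the same $\eta$ witnesses the corresponding subproblem at $v$ (using $\tail(v) = \tail(u) \cup \{u\}$). This follows because $\eta(Y)$ is connected via images of edges of $H[Y]$ and cannot straddle two disjoint, pairwise non-adjacent subtrees. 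Part~\ref{ass:node-not-im} then follows: the hypothesis $\alpha^{-1}(u) \notin X$ combined with compliance rules out $u \in \eta(X)$, so the lemma applied with $Y = X$, $D' = D$ yields the required child (the case $X = \emptyset$ is trivial, as then $\eta = \gamma$ is a solution at any child). The converse direction is immediate since $\tail(u) \subseteq \tail(v)$, so a solution at a child is also a solution at $u$.

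Part~\ref{ass:node-im} is the main case. Given a solution $\eta$ at $u$, split on whether $\eta(w) = u$. If $\eta(w) \neq u$ then by compliance $u \notin \eta(X)$, and the reasoning of~\ref{ass:node-not-im} produces the first disjunct. If $\eta(w) = u$ then for each component $Z \in \CC(X - \{w\})$ the single-subtree lemma applied with $Y = Z$, $D' = D \cup \{w\}$ and $\gamma[w \to u]$ in place of $\gamma$ yields, independently per $Z$, a child $v_Z$ witnessing $\Val(v_Z, Z, D \cup \{w\}, \gamma[w \to u])$---precisely the second disjunct. Here one verifies that $(Z, D \cup \{w\})$ is a valid chunk: any $H$-neighbor of $Z$ outside $Z$ lies in $D$ (since $(X, D)$ is a chunk) or equals $w$ (since $Z$ is a component of $X - \{w\}$). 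For the converse, assemble $\eta$ from the witnesses by setting $\eta(w) = u$, $\eta|_D = \gamma$, and $\eta|_Z = \eta_Z$ for each component $Z$; each $\eta_Z$ handles every edge of $H$ with at least one endpoint in $Z$ (including edges from $Z$ to $\{w\} \cup D$, which lie in $H[Z \cup D \cup \{w\}]$), and edges entirely inside $D \cup \{w\}$ are handled uniformly by the shared $\gamma[w \to u]$. Injectivity across different components is automatic by compliance, since the images carry distinct $\alpha$-labels. The delicate point I expect to require the most care is verifying that the edge set of $H[X \cup D]$ is completely covered by the subgraphs $\{H[Z \cup D \cup \{w\}]\}_Z$ together with $H[D \cup \{w\}]$; this reduces to the observation that in $H[X - \{w\}]$ there are no edges between distinct elements of $\CC(X - \{w\})$, directly from the definition of connected components.
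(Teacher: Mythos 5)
Your proposal follows the paper's own argument essentially step for step: the same case distinction on whether $w$ is mapped to $u$, the same connectivity observation (your ``single-subtree lemma'') forcing a connected piece of $H$ into the subtree of one child, and the same assembly of the component witnesses $\eta_Z$ into a single embedding with injectivity guaranteed by compliance with $\alpha$. In several places --- checking that $(Z,D\cup\{w\})$ is a chunk, and accounting for which edges of $H[X\cup D]$ are covered by which piece --- you are more explicit than the paper.

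There is, however, one step in the converse of assertion~(iii) that does not hold as written: the claim that ``edges entirely inside $D\cup\{w\}$ are handled uniformly by the shared $\gamma[w\to u]$.'' When $\CC(X-\{w\})\neq\emptyset$ this is fine, but only because some witness $\eta_Z$ is a subgraph embedding of $H[Z\cup D\cup\{w\}]$ agreeing with $\gamma[w\to u]$ on $D\cup\{w\}$, which \emph{certifies} that $\gamma[w\to u]$ preserves the edges of $H[D\cup\{w\}]$; the map $\gamma[w\to u]$ does not do this on its own. When $X=\{w\}$, the conjunction over $\CC(X-\{w\})=\emptyset$ is vacuously true and no witness exists, yet a solution with $\eta(w)=u$ still requires $u$ to be $G$-adjacent to $\gamma(x)$ for every $H$-neighbour $x\in D$ of $w$ (and since $(X,D)$ is a chunk, \emph{every} $x\in D$ is such a neighbour). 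Concretely, take $H$ to be the single edge $xw$, let $G$ have vertices $r,v,c$ with the single edge $rc$ and treedepth decomposition $r\to v\to c$, and set $\alpha(r)=\alpha(c)=x$, $\alpha(v)=w$. Then $\Val(v,\{w\},\{x\},\,x\mapsto r)$ is false (the only compliant image of $w$ in $V(G_v)\setminus\tail(v)$ is $v$, and $rv\notin E(G)$), while the right-hand side of (iii) evaluates to true via the empty conjunction; so in this corner case the asserted equality itself fails. To be fair, the paper's proof (``we may consider their union \ldots\ it is straightforward to see that $\eta$ is a subgraph embedding'') glosses over exactly the same point. A complete write-up should restrict the union argument to $\CC(X-\{w\})\neq\emptyset$ and treat $X=\{w\}$ separately, e.g.\ by replacing the empty conjunction with $\bigvee_{v\in\chld(u)}\Val(v,\emptyset,D\cup\{w\},\gamma[w\to u])$ (which does verify the edges inside $D\cup\{w\}$) or by conjoining an explicit check that $\gamma[w\to u]$ embeds $H[D\cup\{w\}]$ into $G$.
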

\begin{proof}
Assertion~\ref{ass:leaf} is straightforward.

For assertion~\ref{ass:node-not-im}, observe that a solution $\eta$ to the subproblem $(u,X,D,\gamma)$ cannot map any vertex of $X$ to~$u$, because only the vertex $w=\alpha(u)$ can be mapped to~$u$,
and $w$ does not belong to $X$ by assumption. Moreover, since $H[X]$ is connected (due to $(X,D)$ being a chunk), $\eta(X)$ has to be entirely contained in one subtree of $F$ rooted at a child of $u$.
It follows that every solution to the subproblem $(u,X,D,\gamma)$ is a solution to one of the subproblems $\Val(v,X,D,\gamma)$ for $v$ ranging over the children of $u$ in $F$, and conversely every solution
to any of these subproblems is trivially also a solution to $(u,X,D,\gamma)$. The formula follows.

For assertion~\ref{ass:node-im}, observe that every solution $\eta$ to the subproblem $(u,X,D,\gamma)$ either maps $w=\alpha(u)$ to $u$, or does not map any vertex to $u$.
In the latter case, the same reasoning as for assertion~\ref{ass:node-not-im} yields that $\eta$ is also a solution to one of subproblems $(v,X,D,\gamma)$ for $v$ ranging over the children of $u$;
this corresponds to the first part of the formula.
Consider now the former case, that is, suppose that indeed $\eta(w)=u$. Denote $D'=D\cup \{w\}$ and $\gamma'=\gamma[w\to u]$ for brevity.
Then, for every connected component $Z\in \CC(X-\{w\})$, $\eta$ has to map $Z$ entirely into one subtree of $F$ rooted at a child of $u$. 
Hence, for at least one child~$v$ of~$u$ we have that $\eta$ restricted to $Z\cup D'$ witnesses that $\Val(v,Z,D',\gamma')$ is true.
Since this holds for every $Z\in \CC(X-\{w\})$, we conclude that $\bigwedge_{Z\in \CC(X-\{w\})} \bigvee_{v\in \chld(u)} \Val(v,Z,D',\gamma')$ is true.
Conversely, supposing that $\bigwedge_{Z\in \CC(X-\{w\})} \bigvee_{v\in \chld(u)} \Val(v,Z,D',\gamma')$ is true,
for each $Z\in \CC(X-\{w\})$ we may find $v_Z\in \chld(u)$ and a solution $\eta_Z$ to the subproblem $\Val(v_Z,Z,D',\gamma')$. 
Observe that solutions~$\eta_Z$ match~$\gamma'$ on $D'$, hence we may consider their union; call it $\eta$. 
It is straightforward to see that $\eta$ is a subgraph embedding from $H[D\cup X]$ to $G_u$ that is compliant with $\alpha$ and extends $\gamma$, i.e., it is a solution to $(u,X,D,\gamma)$.
Here, the only non-trivial condition is injectivity, but this is ensured by the fact that each solution $\eta_Z$ is compliant with $\alpha$: 
each vertex $x\in X-\{w\}$, say $x\in Z$, is mapped by $\eta_Z$ to a vertex belonging to $\alpha^{-1}(x)$,
so no two vertices lying in different connected components $Z,Z'\in \CC(X-\{w\})$ can be mapped by $\eta_Z$ and $\eta_{Z'}$ to the same vertex of $G_u$.
The formula follows.
\end{proof}

\newcommand{\solve}{\mathsf{ComputeVal}}

\cref{lem:SI-recursion} suggests the following algorithm for our problem.
First, define a recursive procedure $\solve$ that given a subproblem $(u,X,D,\gamma)$ computes its value using the recursive formula provided by \cref{lem:SI-recursion}.
Then the algorithm proceeds as follows: for each connected component of~$H$, say with vertex set~$X$, verify whether there exists a root $r$ of $F$ for which $\Val(r,X,\emptyset,\emptyset)$ is true;
this is done by invoking $\solve(r,X,\emptyset,\emptyset)$ for each root $r$ of $F$.
Finally, conclude that there is a subgraph embedding from $H$ to $G$ compliant with $\alpha$ if and only if for each connected component of $H$ this verification was positive.
The same reasoning as for \cref{lem:SI-recursion}, assertion~\ref{ass:node-im}, proves that this algorithm is correct.
Further, throghout the algorithm we store only a stack consisting of at most $d$ frames of recursive calls to $\solve$, 
each of polynomial size, so the overall space complexity is polynomial in $n$.
It remains to argue that the time complexity is $2^{\Oh(p\log d)}\cdot n^{\Oh(1)}$.

To this end, we claim that $\solve$ is invoked on every subproblem $(u,X,D,\gamma)$ at most once.
As there are $2^{\Oh(p\log d)}\cdot n$ subproblems in total and the internal computation of $\solve$ for each of them takes polynomial time, the promised running time follows from this claim.
To see the claim, observe that if $\solve$ is invoked on a subproblem $(u,X,D,\gamma)$, then one of the following assertions holds:
\begin{itemize}
\item $u$ is a root of $F$ and $\solve(u,X,D,\gamma)$ is invoked directly in the main algorithm;
\item $u$ has a parent $v$ and the subproblem $(u,X,D,\gamma)$ uniquely defines the call to $\solve$ where $\solve(u,X,D,\gamma)$ was invoked: this call was to subproblem
$(v,X',D',\gamma')$ where 
\begin{itemize}
\item $D'=D-\gamma^{-1}(v)$ if $v\in \gamma(D)$ and $D'=D$ otherwise, 
\item $\gamma'=\gamma|_{D'}$, and
\item $X'$ is the vertex set of the unique connected component of $G-D'$ that contains $X\cup \gamma^{-1}(v)$, or $X'=\emptyset$ when $X\cup \gamma^{-1}(v)=\emptyset$.
\end{itemize}
\end{itemize}

With the above observation, the claim follows immediately: the subproblems on which $\solve$ is invoked in the main algorithm are pairwise different,
while every other subproblem on which $\solve$ is invoked has a uniquely defined parent in the recursion tree. This means that every subproblem is solved at most once and we are done.

\section{Conclusions}\label{sec:conclusion}

\noindent In this paper we gave the first polynomial upper bounds on the number of colors needed for $p$-centered colorings on proper minor-closed graph classes,
including the first such upper bounds for planar graphs. 
Following our work, D{\k{e}}bski et al.~\cite{dkebski2019improved}
improved the bounds for planar graphs to $\Oh(p^3\log p)$ 
and also provided a lower bound of $\Omega(p^2\log p)$. 
For graphs of treewidth $k$, D{\k{e}}bski et al.\ showed 
that our bound of $\binom{k+p}{k}$ is tight. Most surprisingly, 
they also showed that bounded degree graphs admit 
$p$-centered colorings with $\Oh(p)$ which allows to derive
also a polynomial bound on the number of colors for graphs 
excluding a fixed graph as a topological minor.

After the publication of the preliminary version of this work~\cite{PilipczukS19}, it quickly turned out that our auxiliary structural result about partioning a planar graph into geodesics --- \cref{thm:planar-partition} --- is indeed of independent interest. Dujmovi\'c{} et al.~\cite{DujmovicJMMUW19} used our approach to prove the {\em{Product Structure Theorem}} for planar graphs, which strenghtens the statement of \cref{thm:planar-partition} in that the geodesics can be chosen as vertical paths from a fixed BFS forest. The Product Structure Theorem has been subsequently used to solve several long-standing open problems about planar and bounded-genus graphs, including boundedness of the queue number~\cite{DujmovicJMMUW19} and of the non-repetitive chromatic number~\cite{DujmovicEJWW19}, and almost tight bounds for adjacency labelings~\cite{DujmovicEJGMM20}. We refer to the recent survey of Dvo\v{r}\'ak et al.~\cite{DvorkaHJLW20} for a broader discussion of the topic.

\paragraph*{Acknowledgements} The authors thank \'Edouard Bonnet for several useful comments, in particular spotting an issue in the proof of \cref{lem:color-coding}, and Zdenek Dvo\v{r}\'{a}k and Patrice Ossona de Mendez for insightful discussions on the problem.

\bibliographystyle{abbrv}
\bibliography{ref} 

\begin{thebibliography}{10}

\bibitem{DvorakP16}
{\em Question posed at the Workshop on Algorithms, Logic and Structure, 12-14
  December, 2016, University of Warwick}.
\newblock https://warwick.ac.uk/fac/sci/maths/people/staff/
  daniel\_kral/alglogstr/ openproblems.pdf.

\bibitem{abraham2006object}
I.~Abraham and C.~Gavoille.
\newblock Object location using path separators.
\newblock In {\em PODC 2006}, pages 188--197. ACM, 2006.

\bibitem{AlonYZ95}
N.~Alon, R.~Yuster, and U.~Zwick.
\newblock Color-coding.
\newblock {\em J. {ACM}}, 42(4):844--856, 1995.

\bibitem{bodlaender1996linear}
H.~L. Bodlaender.
\newblock A linear-time algorithm for finding tree-decompositions of small
  treewidth.
\newblock {\em SIAM Journal on Computing}, 25(6):1305--1317, 1996.

\bibitem{BodlaenderNZ16}
H.~L. Bodlaender, J.~Nederlof, and T.~C. van~der Zanden.
\newblock Subexponential time algorithms for embedding {$H$}-minor free graphs.
\newblock In {\em {ICALP 2016}}, volume~55 of {\em LIPIcs}, pages 9:1--9:14.
  Schloss Dagstuhl --- Leibniz-Zentrum f\"ur Informatik, 2016.

\bibitem{dkebski2019improved}
M.~D{\k{e}}bski, S.~Felsner, P.~Micek, and F.~Schr{\"o}der.
\newblock Improved bounds for centered colorings.
\newblock In {\em {SODA 2020}}, pages 2212--2226. {SIAM}, 2020.

\bibitem{demaine2005algorithmic}
E.~Demaine, M.~Hajiaghayi, and K.~Kawarabayashi.
\newblock Algorithmic graph minor theory: Decomposition, approximation, and
  coloring.
\newblock In {\em FOCS 2005}, pages 637--646. IEEE, 2005.

\bibitem{DemaineHK11}
E.~D. Demaine, M.~Hajiaghayi, and K.~Kawarabayashi.
\newblock Contraction decomposition in {$H$}-minor-free graphs and algorithmic
  applications.
\newblock In {\em {STOC 2011}}, pages 441--450. {ACM}, 2011.

\bibitem{DemaineHM10}
E.~D. Demaine, M.~Hajiaghayi, and B.~Mohar.
\newblock Approximation algorithms via contraction decomposition.
\newblock {\em Combinatorica}, 30(5):533--552, 2010.

\bibitem{devos2004excluding}
M.~DeVos, G.~Ding, B.~Oporowski, D.~P. Sanders, B.~Reed, P.~Seymour, and
  D.~Vertigan.
\newblock Excluding any graph as a minor allows a low tree-width 2-coloring.
\newblock {\em Journal of Combinatorial Theory, Series B}, 91(1):25--41, 2004.

\bibitem{diestel2012graph}
R.~Diestel.
\newblock {\em Graph Theory, 4th Edition}, volume 173 of {\em Graduate {T}exts
  in {M}athematics}.
\newblock Springer, 2012.

\bibitem{Dorn10}
F.~Dorn.
\newblock Planar {S}ubgraph {I}somorphism revisited.
\newblock In {\em {STACS 2010}}, volume~5 of {\em LIPIcs}, pages 263--274.
  Schloss Dagstuhl --- Leibniz-Zentrum f\"ur Informatik, 2010.

\bibitem{DujmovicEJGMM20}
V.~Dujmovi\'c{}, L.~Esperet, G.~Joret, C.~Gavoille, P.~Micek, and P.~Morin.
\newblock Adjacency labelling for planar graphs (and beyond).
\newblock {\em CoRR}, abs/2003.04280, 2020.

\bibitem{DujmovicEJWW19}
V.~Dujmovi\'c{}, L.~Esperet, G.~Joret, B.~Walczak, and D.~R. Wood.
\newblock Planar graphs have bounded nonrepetitive chromatic number.
\newblock {\em CoRR}, abs/1904.05269, 2019.

\bibitem{DujmovicJMMUW19}
V.~Dujmovi\'c{}, G.~Joret, P.~Micek, P.~Morin, T.~Ueckerdt, and D.~R. Wood.
\newblock Planar graphs have bounded queue-number.
\newblock In {\em FOCS 2019}, pages 862--875. {IEEE} Computer Society, 2019.

\bibitem{dvovrak2013testing}
Z.~Dvo{\v{r}}{\'a}k, D.~Kr{\'a}l', and R.~Thomas.
\newblock Testing first-order properties for subclasses of sparse graphs.
\newblock {\em J. ACM}, 60(5):36, 2013.

\bibitem{DvorkaHJLW20}
Z.~Dvo\v{r}{\'{a}}k, T.~Huynh, G.~Joret, C.~Liu, and D.~R. Wood.
\newblock Notes on graph product structure theory.
\newblock {\em CoRR}, abs/2001.08860, 2020.

\bibitem{Eppstein99}
D.~Eppstein.
\newblock Subgraph isomorphism in planar graphs and related problems.
\newblock {\em J. Graph Algorithms Appl.}, 3(3), 1999.

\bibitem{EricksonH04}
J.~Erickson and S.~Har{-}Peled.
\newblock Optimally cutting a surface into a disk.
\newblock {\em Discrete {\&} Computational Geometry}, 31(1):37--59, 2004.

\bibitem{FominLMPPS16}
F.~V. Fomin, D.~Lokshtanov, D.~Marx, M.~Pilipczuk, M.~Pilipczuk, and
  S.~Saurabh.
\newblock Subexponential parameterized algorithms for planar and
  apex-minor-free graphs via low treewidth pattern covering.
\newblock In {\em {FOCS 2016}}, pages 515--524. {IEEE} Computer Society, 2016.

\bibitem{lsd}
J.~Gajarsk{\'{y}}, S.~Kreutzer, J.~Ne\v{s}et\v{r}il, P.~{Ossona de Mendez},
  M.~Pilipczuk, S.~Siebertz, and S.~Toru{\'{n}}czyk.
\newblock First-order interpretations of bounded expansion classes.
\newblock {\em ACM Transactions on Computational Logic}, 21(4), 2020.

\bibitem{grohe2003local}
M.~Grohe.
\newblock Local tree-width, excluded minors, and approximation algorithms.
\newblock {\em Combinatorica}, 23(4):613--632, 2003.

\bibitem{GroheKR13}
M.~Grohe, K.~Kawarabayashi, and B.~A. Reed.
\newblock A simple algorithm for the graph minor decomposition --- logic meets
  structural graph theory.
\newblock In {\em {SODA 2013}}, pages 414--431. {SIAM}, 2013.

\bibitem{grohe2011methods}
M.~Grohe and S.~Kreutzer.
\newblock Methods for algorithmic meta theorems.
\newblock {\em Model Theoretic Methods in Finite Combinatorics}, 558:181--206,
  2011.

\bibitem{GroheKRSS15}
M.~Grohe, S.~Kreutzer, R.~Rabinovich, S.~Siebertz, and K.~S. Stavropoulos.
\newblock Coloring and covering nowhere dense graphs.
\newblock {\em {SIAM} Journal on Discrete Mathematics}, 32(4):2467--2481, 2018.

\bibitem{KawarabayashiW11}
K.~Kawarabayashi and P.~Wollan.
\newblock A simpler algorithm and shorter proof for the graph minor
  decomposition.
\newblock In {\em {STOC 2011}}, pages 451--458. {ACM}, 2011.

\bibitem{kierstead2003orderings}
H.~A. Kierstead and D.~Yang.
\newblock Orderings on graphs and game coloring number.
\newblock {\em Order}, 20(3):255--264, 2003.

\bibitem{lipton1979separator}
R.~J. Lipton and R.~E. Tarjan.
\newblock A separator theorem for planar graphs.
\newblock {\em SIAM Journal on Applied Mathematics}, 36(2):177--189, 1979.

\bibitem{NaorSS95}
M.~Naor, L.~J. Schulman, and A.~Srinivasan.
\newblock Splitters and near-optimal derandomization.
\newblock In {\em FOCS 1995}, pages 182--191. {IEEE} Computer Society, 1995.

\bibitem{nevsetvril2006tree}
J.~Ne{\v{s}}et{\v{r}}il and P.~Ossona~de Mendez.
\newblock Tree-depth, subgraph coloring and homomorphism bounds.
\newblock {\em European Journal of Combinatorics}, 27(6):1022--1041, 2006.

\bibitem{nevsetvril2008grad}
J.~Ne{\v{s}}et{\v{r}}il and P.~Ossona~de Mendez.
\newblock Grad and classes with bounded expansion {I}. {D}ecompositions.
\newblock {\em European Journal of Combinatorics}, 29(3):760--776, 2008.

\bibitem{nevsetvril2011nowhere}
J.~Ne{\v{s}}et{\v{r}}il and P.~Ossona~de Mendez.
\newblock On nowhere dense graphs.
\newblock {\em European Journal of Combinatorics}, 32(4):600--617, 2011.

\bibitem{sparsity}
J.~Ne\v{s}et\v{r}il and P.~{Ossona de Mendez}.
\newblock {\em Sparsity --- {G}raphs, {S}tructures, and {A}lgorithms},
  volume~28 of {\em Algorithms and combinatorics}.
\newblock Springer, 2012.

\bibitem{abs-1712-06690}
M.~P. O'Brien and B.~D. Sullivan.
\newblock Experimental evaluation of counting subgraph isomorphisms in classes
  of bounded expansion.
\newblock {\em CoRR}, abs/1712.06690, 2017.

\bibitem{PilipczukPSL13}
M.~Pilipczuk, M.~Pilipczuk, P.~Sankowski, and E.~J. van Leeuwen.
\newblock Network sparsification for {S}teiner problems on planar and
  bounded-genus graphs.
\newblock {\em CoRR}, abs/1306.6593, 2013.
\newblock Version 1.

\bibitem{PilipczukS19}
M.~Pilipczuk and S.~Siebertz.
\newblock Polynomial bounds for centered colorings on proper minor-closed graph
  classes.
\newblock In {\em {SODA} 2019}, pages 1501--1520, 2019.

\bibitem{circuit-mc}
M.~Pilipczuk, S.~Siebertz, and S.~Toru{\'{n}}czyk.
\newblock Parameterized circuit complexity of model-checking on sparse
  structures.
\newblock In {\em {LICS} 2018}, pages 789--798, 2018.

\bibitem{PilipczukW18}
M.~Pilipczuk and M.~Wrochna.
\newblock On space efficiency of algorithms working on structural
  decompositions of graphs.
\newblock {\em ACM Trans. on Computation Theory (TOCT)}, 9(4):18:1--18:36,
  2018.

\bibitem{robertson2003graph}
N.~Robertson and P.~D. Seymour.
\newblock Graph minors. {XVI}. {E}xcluding a non-planar graph.
\newblock {\em Journal of Combinatorial Theory, Series B}, 89(1):43--76, 2003.

\bibitem{van2017generalised}
J.~van~den Heuvel, P.~Ossona~de Mendez, D.~Quiroz, R.~Rabinovich, and
  S.~Siebertz.
\newblock On the generalised colouring numbers of graphs that exclude a fixed
  minor.
\newblock {\em European Journal of Combinatorics}, 66:129--144, 2017.

\bibitem{zhu2009coloring}
X.~Zhu.
\newblock Colouring graphs with bounded generalized colouring number.
\newblock {\em Discrete Mathematics}, 309(18):5562--5568, 2009.

\end{thebibliography}
\end{document}